\def\KurganskyColSep{\setlength{\arraycolsep}{1.4pt}}
\newcommand{\Zed}{\mathbb Z}
\newcommand{\Nat}{\mathbb N}  
\newcommand{\Real}{\mathbb R}
\newcommand{\B}{\mathcal B}
\newcommand{\A}{\mathcal A}
\newcommand{\tB}{\acute{\mathcal B}}
\newcommand{\aB}{\acute{\mathcal B}}
\newcommand{\gB}{\grave{\mathcal B}}
\newcommand{\Univ}{\mathcal U_{{3}/{2}}}
\newcommand{\aUniv}{{\mathcal {\acute{U}}}_{{3}/{2}}}
\newcommand{\gUniv}{{\mathcal {\grave{U}}}_{{3}/{2}}}
\newcommand{\La}{\mathcal L}
\newcommand{\T}{\mathcal T}
\newcommand{\Sn}{\mathcal S}
\newcommand{\I}{\mathcal I}
\newcommand{\Bi}{\mathcal F} 
\newcounter{kurglemma}
\newcommand{\kurglemma}{\refstepcounter{kurglemma}}
\newcommand{\kurglemmaa}{\arabic{kurglemma}}
\newcounter{kurgtheorem}
\newcommand{\kurgtheorem}{\refstepcounter{kurgtheorem}}
\newcommand{\kurgtheorema}{\arabic{kurgtheorem}}
\newcounter{kurgcorollary}
\newcommand{\kurgcorollary}{\refstepcounter{kurgcorollary}}
\newcommand{\kurgcorollarya}{\arabic{kurgcorollary}}
\newcounter{kurgdefinition}
\newcommand{\qed}{}
\begin{document}


\udk{519.713, 511.4, 517.938} 

\title{ГРАФЫ ДЕ БРЁЙНА И СТЕПЕНИ ЧИСЛА $3/2$}
{Графы де Брёйна и степени числа $3/2$}



\titleeng{De Bruijn graphs and powers of $3/2$}

\author{А.Н.\,Курганский, И.\,Потапов}
\authoreng{O.\,Kurganskyy, I.\,Potapov}

\date{06.11.2018}




\address{ГУ <<ИПММ>>, Донецк\\ University of Liverpool, England}


\email{topologia@mail.ru}

\maketitle

\begin{abstract}
Рассматривается множество $\Zed^{\pm\omega}_{6}$ бесконечных в обе стороны слов $\xi$ в алфавите $\{0,1,2,3,4,5\}$ с разделяющим на целую левую $\lfloor\xi\rfloor$ дробную правую $\{\xi\}$ части знаком (запятой).
Для таких слов определено умножение на целые числа и деление на $6$ как умножение и деление столбиком в системе счисления $6$. 
В работе развивается конечно-автоматный подход анализа последовательностей вида 
$\left(\left\lfloor\xi\left(\frac{3}{2}\right)^n\right\rfloor \right)_{n\in\Zed}$ для слов $\xi\in\Zed^{\pm\omega}_{6}$ совпадающих по ряду свойств с $Z$-числами в $3/2$-проблеме Малера. Каждая такая последовательность $Z$-слов, записываемых друг под другом так, чтобы одинаковые разряды находились в одном столбце, представляет собой бесконечное $2$-мерное слово в алфавите $\Zed_6$.
Конечно-автоматное представление столбцов целой части таких $2$-мерных $Z$-слов обладают структурными свойствами графа де Брёйна. Такое представление доставляет ряд достаточных условий пустоты множества $Z$-чисел. 
Подход основывается на ряде результатов работы~\cite{KariKopra2018}, авторы которой применяют клеточные автоматы для анализа последовательностей $\left(\left\{\xi\left(\frac{3}{2}\right)^n\right\} \right)_{n\in\Zed}$, где $\xi\in\Real$.
\vspace{1mm}\\
\textbf{\emph {Ключевые слова: }}\itshape{распределение по модулю $1$, $Z$-числа, конечные автоматы, кусочно-аффинные отображения}
\end{abstract}

\abstracteng{
In this paper we consider the set $\Zed^{\pm\omega}_{6}$ of two-way infinite words $\xi$ over the alphabet $\{0,1,2,3,4,5\}$ with the integer left part $\lfloor\xi\rfloor$ and the fractional right part $\{\xi\}$ separated by a radix point. For such words, the operation of multiplication by integers and division by $6$ are defined as the column multiplication and division in base 6 numerical system.
The paper develops a finite automata approach for analysis of sequences
$\left (\left \lfloor \xi \left (\frac{3}{2} \right)^n \right \rfloor
\right)_{n \in \Zed}$ for the words $\xi \in \Zed^{\pm \omega}_{6}$ that have
some common properties with $Z$-numbers in Mahler's $3/2$-problem.
Such sequence of $Z$-words written under each other with the same digit positions in the same column is an infinite $2$-dimensional word over the alphabet $\Zed_6$. The automata representation of the columns in the integer part of $2$-dimensional $Z$-words has the nice structural properties of the de Bruijn graphs. This way provides some sufficient conditions for the emptiness of the set of $Z$-numbers.
Our approach has been initially inspirated by the proposition 2.5 in \cite{KariKopra2018} where authors applies cellular automata for analysis of $\left(\left\{\xi\left(\frac{3}{2}\right)^n\right\} \right)_{n\in\Zed}$, $\xi\in\Real$.
}


\kweng{modulo $1$ distribution, $Z$-numbers, finite automata, piecewise affine maps, De Bruijn graphs}
\KurganskyColSep
\section{Введение}
Известно~\cite{Weyl1916}, что последовательность $\{\xi\alpha^i\}$, $i\in\Nat$, равномерно распределена в интервале $[0,1)$ для почти всех положительных $\xi\in\Real$, где $\{x\}$ обозначает дробную часть числа $x$. При этом существуют числа $\xi$, для которых указанная последовательность ведёт себя иначе. Отсюда возникают задачи о свойствах $\{\xi\alpha^i\}$ для конкретных $\xi$ и $\alpha$, а также обратные задачи о существовании $\xi$ и $\alpha$, реализующих $\{\xi\alpha^i\}$ с заданными свойствами. Примером обратной задачи является открытая $3/2$-проблема Малера~\cite{Mahler1968} о существовании $Z$-чисел $\xi$ таких, что последовательность $\{\xi(3/2)^i\}$, $i\in\Nat$, полностью лежит в интервале $[0,1/2)$. 

В работе~\cite{KariKopra2018}, развивающей результаты работ~\cite{Akiyama2008, AkiyamaFrougnySakarovitch2008, Dubickas2008, FlattoLagariasPollington1995}, рассматривается множество $Z_{p/q}(S)$ таких $\xi$, что последовательность $\{\xi(3/2)^i\}$ полностью лежит в $S$, где $S\subseteq [0,1)$ является некоторым конечным объединением интервалов, $p,q\in\Nat$, и для него в контексте $3/2$-проблемы Малера ставится задача о поиске как можно большего множества $S$, при котором $Z_{p/q}(S)=\emptyset$, и как можно меньшего множества $S$, при котором $Z_{p/q}(S)\neq\emptyset$. Подходы к решению данной задачи основывается на исследовании клеточных автоматов, связь которых с $3/2$-проблемой Малера устанавливается в~\cite{Kari2012_1,Kari2012_2}.

В настоящей работе в рамках развития результатов работы~\cite{BournezKurgPotapov2018}  предлагается конечно-автоматный подход к проблематике $3/2$-проблемы Малера.
Рассматривается множество $\Zed^{\pm\omega}_{6}$ бесконечных в обе стороны слов $\xi$ в алфавите $\{0,1,2,3,4,5\}$ с разделяющим на условно целую $\lfloor\xi\rfloor$ (левую) и дробную $\{\xi\}$ (правую) части знаком (запятой).
Для таких слов определено умножение на целые числа и деление на $6$, содержательно совпадающее с умножением и делением столбиком в системе счисления $6$. 
В работе развивается конечно-автоматный метод анализа последовательностей вида 
$\left(\left\lfloor\xi\left(\frac{3}{2}\right)^n\right\rfloor \right)_{n\in\Zed}$ для слов $\xi\in\Zed^{\pm\omega}_{6}$ совпадающих по ряду свойств с $Z$-числами в $3/2$-проблеме Малера. Такая последовательность $Z$-слов, записанных друг под другом так, чтобы одинаковые разряды находились в одном столбце, представляет собой бесконечное $2$-мерное слово в алфавите $\Zed_6$.
Конечно-автоматное представление столбцов целой части таких $2$-мерных $Z$-слов обладают структурными свойствами графа де Брёйна. Такое представление доставляет ряд наглядных достаточных условий пустоты множества $Z$-чисел. 
Подход основывается на идее предложения 2.5 работы~\cite{KariKopra2018}, авторы которой применяют клеточные автоматы для анализа последовательностей $\left(\left\{\xi\left(\frac{3}{2}\right)^n\right\} \right)_{n\in\Zed}$, где $\xi\in\Real$. Обобщенная форма предложения 2.5 работы~\cite{KariKopra2018} в настоящей работе представлена теоремой~\ref{theorem:FigureKurgPotap2018_triangle}.

\section{Обозначения и определения}
Как обычно, через $\mathbb {N}$, $ \mathbb {Z}$,  $ \mathbb {P} $, $ \mathbb {Q} $ and $ \mathbb {R }$ обозначаем натуральные, целые, простые, рациональные и действительные числа. Используем следующие сокращения и обозначения: $\Zed_n=\Zed^+_n=\{0,1,2,\ldots,n-1\}$, $\Zed^-_n=\{-(n-1),(n-2),\ldots,-2,-1\}$, $\mathbb{Z}^+ = \{0,1,2, \ldots \}$, $\Zed^- = \{\ldots,-3,-2,-1\} $, $\Zed^\pm = \Zed$.

Фигурные скобки используются для записи множеств, а также в записи $\{x\}$ дробной части числа $x$. Через $\left\lfloor x \right\rfloor$ обозначаем наибольшее целое число меньшее $x$. 
Через $2^M$ обозначаем булеан множества $M$. Если $f:X\rightarrow Y$, $g:Y\rightarrow Z$, $h:X'\rightarrow Y'$, то $gf$ обозначает суперпозицию функций: $gf:X\rightarrow Z$,  а $f\times h$ декартово произведение функций. Если $A\subseteq X$, то $f(A)=\{f(x)|x\in A\}$.

Для работы с $1$-мерными и $2$-мерными словами будут использоваться следующие средства выделения их фрагментов с помощью индексов, частично заимствованные из языка Python, инструментами которого проводились вычислительные эксперименты данной работы. Пусть $a$, $b$, $c$, $d\in\Zed$,  $a<b$,  $c<d$. Определим следующие функции: 
\[
\begin{aligned}
&[a:b]:\Zed_{b-a}\rightarrow \{a,a+1,\ldots,b-1\}, & [a:b](x)=x+a,\\
&[a:+\infty]:\Zed^+\rightarrow\{x\in\Zed|x\ge a\}, & [a:+\infty](x)=x+a, \\
&[-\infty:b]:\Zed^-\rightarrow\{x\in\Zed|x<b\},    & [-\infty:b](x)=x+b, \\
&[a:b,c:d]=[a:b]\times [c:d].
\end{aligned}
\]
Через $[:]$ обозначим тождественное отображение на $\Zed^{\pm}$.
Также используем сокращения записи: $[a]=[a:a+1]$, $[:b]=[-\infty:b]$, $[a:]=[a:+\infty]$.

Множество $\Zed^2=\Zed\times\Zed$ рассматриваем как модуль над кольцом $\Zed$.
Элементы $\Zed^2$ представляем вектор-столбцами.
В отображениях вида $f:\Zed\rightarrow X$, $f:\Zed^2\rightarrow X$ элементы из $\Zed$ и $\Zed^2$ называем координатами их образов в $X$.

\paragraph{Одномерные слова.}
Под одномерным словом в алфавите $A$ мы понимаем то же, что и в теории формальных языков. Однако, поскольку в работе речь идёт о словах в алфавите цифр, то есть о числах, в том числе с разделяющим целую и дробную части знаком, то слово $w\in A^*$ длины $n$ мы также будем интерпретировать как функцию $w:\Zed_n\rightarrow A$. Например, пусть $A=\Zed_{10}$ и $w=314159$. Тогда $w(0)=9$, $w(5)=3$, а $w[1:4]=415$. Под бесконечными словами мы понимаем отображения: $w:\Zed\rightarrow A$ (бесконечное в обе стороны), $w:\Zed^+\rightarrow A$ (бесконечное влево), $w:\Zed^-\rightarrow A$ (бесконечное вправо).
К примеру, число $w=3.14159$ мы понимаем как отображение из $\Zed_1\cup\Zed^-_6$ в алфавит цифр. В частности, $w(-2)=4$. Слова любой размерности в алфавите $A$ будем называть $A$-словами. Поскольку в работе речь идёт в основном о $\Zed_6$-словах, то мы будем опускать указание на алфавит, когда он понятен из контекста. 
\paragraph{Двумерные слова.}
Пусть $K\subseteq \Zed^2$. Под $2$-мерным словом $u$ в алфавите $A$ понимаем любое отображение
$u:K\rightarrow A$. 
Если $K$ конечное или бесконечное, то $u$ называется, соответственно, конечным или бесконечным.

Под конечным $2$-мерным прямоугольным словом $u$ размера $n\times m$ в алфавите $A$ понимаем отображение $u:\Zed_n\times\Zed_m\rightarrow A$. 
Так как в работе под словами понимаются числовые объекты, то мы используем такую индексацию цифр слова $u$:
\[
u=
\begin{pmatrix}
u_{0,m-1}&u_{0,m-2}&\ldots&u_{0,0}\\
u_{1,m-1}&u_{1,m-2}&\ldots&u_{1,0}\\
\hdotsfor{4}\\
u_{n-1,m-1}&u_{n-1,m-2}&\ldots&u_{n-1,0}
\end{pmatrix}.
\]
Понимаем, что $u(i,j)=u[i,j]=u_{i,j}$, при этом $u(i,j)$ является цифрой, а $u[i,j]$ однобуквенным словом. 

Бесконечными прямоугольными $2$-мерными словами являются следующие отображения: 
$
u:\Zed_n\times\Zed^\sigma\rightarrow A
$, 
$
u:\Zed^\sigma\times\Zed_m\rightarrow A
$,
$
u:\Zed^{\sigma}\times\Zed^{\sigma'}\rightarrow A
$,
где $\sigma, \sigma'\in\{-,+,\pm\}$. 
Прямоугольным подсловом (фрагментом) размера $n\times m$ $2$-мерного слова $u$ называем суперпозицию $u[i:i+n,j:j+m]$ отображения $u$ и отображения $[i:i+n,j:j+m]$ для некоторых $i,j\in\Zed$.

Пусть $x\in\Zed_6$. Под $x^{n}$, $x^{\pm\omega}$, $x^{+\omega}$, $x^{-\omega}$ понимаем следующие всюду определенные слова:
$x^n:\Zed_n\rightarrow \{x\}$,
$x^{\pm\omega}:\Zed^{\pm}\rightarrow \{x\}$,
$x^{-\omega}:\Zed^{-}\rightarrow \{x\}$,
$x^{+\omega}:\Zed^{+}\rightarrow \{x\}$.
Например, $0^5=00000$, $0^{+\omega}=\ldots 000$, $0^{-\omega}=000\ldots$, $0^{\pm\omega}=\ldots 0,0\ldots$, где запятая в выражении для $0^{\pm\omega}$ разделяет условные дробную (справа) и целую (слева) части.
\begin{remark}
Запись слова $w=a_{n-1}a_{n-1}\ldots a_0$, $a_i\in A$, мы понимаем как функцию $w:\Zed_n\rightarrow A$ такую, что $w(i)=w[i]=a_i$. В строковой записи $w=a_{n-1}a_{n-1}\ldots a_0$ слова $w:\Zed_n\rightarrow A$ начало координат мы приписываем её крайнему правому символу. Таким образом, мы используем принятую для целых чисел нумерацию цифр. Исключения будут оговариваться. Например, одним из способов указания на другую нумерацию будет разделяющий на целую и дробную часть числа знак. Исходя из сказанного, должно быть понятно, если это важно, как происходит нумерация цифр при конкатенации или сцеплении слов. Операция сцепления слов определена ниже.
\end{remark}

Под транспонированием слова $u$ понимаем такое слово $u^T$, что $u^T[i,j]=u[j,i]$.
Вообще, для прямоугольных $2$-мерных слов заимствуем термины из языка матриц.
\paragraph{Горизонтальное сцепление слов.}
Наряду с конкатенацией в работе используется операция сцепления слов. 
Пусть слова $u = (v_1v_2\ldots v_n)$ и $u' = (w_1w_2\ldots w_m)$ имеют соответственно размеры $r\times n$ и $r\times m$, где $v_i$ и $w_j$ $r$-мерные столбцы. Если $v_n = w_1$, то 
$u\circ u' =  (v_1v_2\ldots v_n w_2\ldots w_m)$,
иначе $u\circ u'$ не определено. Операцию $\circ$ назовём горизонтальным сцеплением слов.
Для примера:
$
\left(\begin{smallmatrix}
1&2\\
4&5
\end{smallmatrix}\right)
\circ
\left(\begin{smallmatrix}
2&3\\
5&6
\end{smallmatrix}\right)
=
\left(\begin{smallmatrix}
1&2&3\\
4&5&6
\end{smallmatrix}\right)
$.
Пусть $A$ и $B$ множества слов, тогда
$
A\circ B = \{a\circ b|a\in A, b\in B\}
$ и 
\[
A^{1\circ} = A,\quad
A^{n\circ} = A^{(n-1)\circ}\circ A,
\]
\[
A^{-\omega\circ} = A\circ A\circ\ldots,
\quad
A^{+\omega\circ} = \ldots\circ A\circ A,
\]
\[
A^{\pm\omega\circ} = A^{+\omega\circ}\circ A^{-\omega\circ} =\ldots\circ A\circ A\circ\ldots.
\]
В $A^{\pm\omega\circ}$ по умолчанию, если не оговаривается другое, подразумеваем привязку к нулевому разряду крайнего правого символа $A^{+\omega\circ}$.
\paragraph{Вертикальное сцепление слов.}
Пусть слова $u$ и $u'$ имеют размеры $n\times r$ и $m\times r$. При этом последняя строка слова $u$ совпадает с верхней строкой слова $u'$. Тогда вертикальное сцепление $u\bullet u'$ определяется так: $u\bullet u' =  (u'^T\circ u^T)^T$.
Для примера:
$
\left(\begin{smallmatrix}
3&4\\
5&6
\end{smallmatrix}\right)
\bullet
\left(\begin{smallmatrix}
1&2\\
3&4
\end{smallmatrix}\right)
=
\left(\begin{smallmatrix}
1&2\\
3&4\\
5&6
\end{smallmatrix}\right)
$.
Как и для $\circ$ определяем обозначения $A\bullet B$, $A^{n\bullet}$, $A^{-\omega\bullet}$, $A^{+\omega\bullet}$, $A^{\pm\omega\bullet}$.
\paragraph{Отношения одномерных слов в двумерных словах.}
Всякое множество двумерных слов $W$ устанавливает отношения между столбцами и строками. Например, множество $W$ двумерных слов вида $w:\Zed_2\times\Zed_n\rightarrow A$ устанавливает бинарное отношение $R$ на строках в алфавите $A$: $(w,w')\in R$, если $\left(\begin{smallmatrix}w\\ w'\end{smallmatrix}\right)\in W$. Аналогично, множество $W$ двумерных слов вида $w:\Zed_n\times\Zed_2\rightarrow A$ устанавливает бинарное отношение на столбцах в алфавите $A$. В работе не будет использоваться дополнительное обозначение для таких отношений. Само множество $W$ будет интерпретироваться в зависимости от контекста как то или иное, но в явном виде оговариваемое, отношение между одномерными словами.

\paragraph{Универсальное множество слов $\Univ$.}
Для бесконечных в обе стороны $\Zed_6$-слов формально определяем операцию сложения и вычитания столбиком, а также умножение столбиком на целые числа и деление на $6$ как это делается для чисел в $6$-ричной системе счисления.
Через $\Univ$ обозначаем множество всех 2-мерных слов $u:\Zed\times\Zed\rightarrow \Zed_6$, строки которых представляют собой последовательности $1$-мерных $\Zed_6$-слов, получаемых друг из друга умножением предыдущего на $3/2$:
{\small
\[
\Univ = \{u:\Zed^2\rightarrow\Zed_6|\forall i\in\Zed\quad u[i+1,:]=\frac{3}{2}u[i,:]\},
\]
}
Обозначим: 
{\small
\[
\Univ^{n\times m} = \left\{u[i:i+n,j:j+m]\left|u\in \Univ, i,j\in\Zed\right.\right\},\quad
\B= \Univ^{2\times 2},\quad
\I= \Univ^{2\times 1}.
\]
}
\begin{definition}
Элементы множеств $\Univ^{2\times n}$ и $\Univ^{n\times 2}$ назовём, соответственно,  горизонтальными и вертикальными $n$-мерными продукционными парами. 
\end{definition}

Пусть $A$~-- преобразование параллельного переноса пространства $\Zed^2$, то есть для некоторого $a\in \Zed^2$ и любого $x\in \Zed^2$ $A(x)=x+a$.
Тогда для любого $u\in \Univ$ верно $uA\in \Univ$. Отсюда справедливы леммы~\ref{lemma:svoistvoU1},~\ref{lemma:svoistvoU2}.
\kurglemma\label{lemma:svoistvoU1}
\begin{lemma}[\arabic{kurglemma}]
Выполняется равенство: $\Univ^{n\times m} = \left\{u[0:n,0:m]\left|u\in \Univ\right.\right\}$.
\end{lemma}
\kurglemma\label{lemma:svoistvoU2}
\begin{lemma}[\arabic{kurglemma}]
Для всех $i,j\in\Zed$ $\Univ[:,i]=\Univ[:,j]$, $\Univ[i,:]=\Univ[j,:]$.
\end{lemma}

Пусть линейные преобразования $A:\Zed^2\rightarrow \Zed^2$ и $B:\Zed^2\rightarrow \Zed^2$ заданы матрицами:
$
A=
\left(\begin{smallmatrix}
1&0\\
1&1
\end{smallmatrix}\right)$, $B=
\left(\begin{smallmatrix}
1&0\\
-1&1
\end{smallmatrix}\right)$,
то есть
$
A
\left(\begin{smallmatrix}
x\\
y
\end{smallmatrix}\right)=
\left(\begin{smallmatrix}
1&0\\
1&1
\end{smallmatrix}\right)
\left(\begin{smallmatrix}
x\\
y
\end{smallmatrix}\right)=
\left(\begin{smallmatrix}
x\\
x+y
\end{smallmatrix}\right)$,
$B
\left(\begin{smallmatrix}
x\\
y
\end{smallmatrix}\right)=
\left(\begin{smallmatrix}
x\\
y-x
\end{smallmatrix}\right)$. 
Зафиксируем обозначения:
{\small
\[
\aUniv = \left\{uA|u\in \Univ\right\},\quad
\gUniv^{n\times m} = \left\{u[0:n,0:m]\left|u\in \gUniv\right.\right\},\quad
\aB= \aUniv^{2\times 2},
\]
\[
\gUniv = \left\{uB|u\in \Univ\right\},\quad
\aUniv^{n\times m} = \left\{u[0:n,0:m]\left|u\in \aUniv\right.\right\},\quad
\gB= \gUniv^{2\times 2},
\]
\[
\B^{0\circ}=\aB^{0\circ}=\gB^{0\circ}=\I.
\]}
\begin{definition}
Если для слов $u,w:\Zed^2\rightarrow\Zed_6$ и некоторого аффинного преобразования $A:\Zed^2\rightarrow\Zed_6$ выполняется $w=uA$, то называем слова $u$ и $w$ конгруэнтными и пишем $u\sim w$. Отношение $\sim$ является эквивалентностью. Пусть $U$, $W$ множества слов. Если для любого $u\in U$ существует $w\in W$ такое, что $u\sim w$, и наоборот, то пишем $U\sim W$.
\end{definition}
\begin{definition}
Элементы множеств $\aUniv^{2\times n}$ и $\gUniv^{2\times n}$ назовём диагонально, соответственно, восходящими и нисходящими $n$-мерными продукционными парами. Иногда элементы из $\aUniv^{2\times n}$ просто называем диагональными продукционными парами.
\end{definition}

\paragraph{$V$-таблицы и $H$-таблицы.}
Используем следующее обозначение:
{\small
\begin{equation}\label{formula:2}
\begin{bmatrix}
a_1 & b_1 & \ldots & c_1 \\
a_2 & b_2 & \ldots & c_2 \\
\hdotsfor{4} \\
a_m & b_m & \ldots & c_m \\
\hline
a_{m+1} & b_{m+1} & \ldots & c_{m+1} \\
a_{m+2} & b_{m+2} & \ldots & c_{m+2} \\
\hdotsfor{4} \\
a_{n} & b_{n} & \ldots & c_n \\
\end{bmatrix}
 = 
\left\{\left.
\begin{pmatrix}
a_i & b_i & \ldots & c_i\\
a_j & b_j & \ldots & c_j
\end{pmatrix}
\right|\begin{matrix} 1\le i\le m,\\ m<j\le n \end{matrix} 
\right\}
\end{equation}
\begin{equation}\label{formula:1}
\left[\begin{matrix}
a_1 & a_2 & \ldots & a_m & \vrule \\
b_1 & b_2 & \ldots & b_m & \vrule \\
\hdotsfor{4} & \vrule  \\
c_1 & c_2 & \ldots & c_m & \vrule 
\end{matrix}
\begin{matrix}
&a_{m+1} & a_{m+2} & \ldots & a_n \\
&b_{m+1} & b_{m+2} & \ldots & b_n \\
&\hdotsfor{4} \\
&c_{m+1} & c_{m+2} & \ldots & c_n \\
\end{matrix}\right]
 = 
\left\{\left.
\begin{pmatrix}
a_i & a_j \\
b_i & b_j \\
\hdotsfor{2} \\
c_i & c_j
\end{pmatrix}
\right|\begin{matrix} 1\le i\le m,\\ m<j\le n \end{matrix} 
\right\}
\end{equation}
}

Представление декартова произведения столбцов в форме левой части в~(\ref{formula:1}) назовём $V$-таблицей, а декартово произведение строк в форме левой части в~(\ref{formula:2}) назовём $H$-таблицей. В $H$-таблице $T$ множество строк над разделяющей чертой (соответственно: под чертой) назовём верхней (нижней) компонентой таблицы и будем обозначать $\pi_U(T)$ ($\pi_O(T)$).  В $V$-таблице $T$ множество столбцов слева от разделяющей черты (соответственно: справа) назовём левой (правой) компонентой таблицы и будем обозначать $\pi_L(T)$ ($\pi_R(T)$).  

Всякое конечное множество $W$ горизонтальных продукционных пар представимо в виде объединения некоторых $H$-таблиц. Объединение таблиц мы обозначаем простым приписыванием их друг к другу, как показано, например, в леммах~\ref{lemma:Itable}, \ref{lemma:aB}, \ref{lemma:B} и т.д. Если каждая строка в верхних (нижних) компонентах таблиц встречается только в одной $H$-таблице, то такое представление множества $W$ назовём представлением в виде $H$-таблиц с уникальными верхними (нижними) компонентами. Аналогично, для множеств вертикальных продукционных пар используем представление $V$-таблицами с уникальными левыми и правыми компонентами.

Представляя множество продукционных пар с помощью $H$-таблиц с уникальной верхней (нижней) компонентой, мы подразумеваем, что различные таблицы различаются в нижних (верхних) компонентах, то есть нет таких двух таблиц, которые можно представить одной. Аналогичное подразумевается для $V$-таблиц.

Поскольку $V$-таблицы и $H$-таблицы являются формой представления множеств, соответственно, вертикальных и горизонтальных продукционных пар, то для $H$-таблиц очевидным образом определена операции горизонтального сцепления $\circ$, а для $V$-таблиц операция вертикального сцепления $\bullet$.

\begin{definition}
Пусть $W$ множество двумерных слов. Слово $w\in W$ назовём горизонтальным тупиком в $W$, если $W^{+\omega\circ}\circ w\circ W^{-\omega\circ} = \emptyset$. Слово $w\in W$ назовём вертикальным тупиком в $W$, если 
$W^{+\omega\bullet}\bullet w\bullet W^{-\omega\bullet} = \emptyset$.
\end{definition}

\kurglemma
\begin{lemma}[\kurglemmaa]\label{lemma:tupikiU2n}
$\Univ^{n\times m}$ не содержит горизонтальных и вертикальных тупиков. 
\end{lemma}
\begin{proof}
Прямо следует из определения $\Univ^{n\times m}$.\qed
\end{proof}

\kurglemma
\begin{lemma}[\kurglemmaa]\label{lemma:tupikiWbulletW}
Если $W$ не содержит горизонтальных (вертикальных) тупиков, то $W^{n\circ}$ ($W^{n\bullet}$) не содержит горизонтальных  (вертикальных) тупиков, $n\in\Nat$.
\end{lemma}
\begin{proof}
Непосредственно следует из определений.\qed
\end{proof}

\section{$n$-Мерные продукционные пары}
Пусть двумерное $\Zed_6$-слово
\[
\begin{matrix}
a_6&a_5&a_4&a_3&a_2&a_1&a_0 \\
	 &b_5&b_4&b_3&b_2&b_1&	 \\
	 &   &c_4&c_3&c_2&   &	 \\
	 &   &   &d_3&   &   &	 
\end{matrix}
\]
такое, что каждая её последующая строка получена из предыдущей умножением на $3/2$ и удалением крайних цифр согласно схеме слова. Леммы~\ref{lemma:Itable}, \ref{lemma:aB}, \ref{lemma:B}, \ref{lemma:gB} и \ref{lemma:U32} получены перебором всех слов $a_6a_5a_4a_3a_2a_1a_0$  в алфавите $\Zed_6$.

\kurglemma
\begin{lemma}[\kurglemmaa]\label{lemma:Itable}
Множество $\I$ в форме $H$-таблиц с уникальными нижними компонентами имеет вид:
{\scriptsize
\[
\begin{aligned}
\begin{bmatrix}
0\\
2\\
4\\
\hline
0\\
3
\end{bmatrix}
\begin{bmatrix}
1\\
3\\
5\\
\hline
2\\
5
\end{bmatrix}
\begin{bmatrix}
0\\
2\\
4\\
1\\
3\\
5\\
\hline
1\\
4
\end{bmatrix}
\end{aligned}
\]
}
\end{lemma}
\kurglemma
\begin{lemma}[\kurglemmaa]\label{lemma:aB}
Множество $\aB$ в форме $H$-таблиц с уникальными нижними компонентами имеет вид:
{\scriptsize
\[
\begin{aligned}
\begin{bmatrix}
0&0\\
2&2\\
4&4\\
\hline
0&0\\
3&0
\end{bmatrix}
\begin{bmatrix}
0&1\\
2&3\\
4&5\\
\hline
0&2\\
3&2
\end{bmatrix}
\begin{bmatrix}
0&2\\
2&4\\
4&0\\
\hline
0&3\\
3&3
\end{bmatrix}
\begin{bmatrix}
0&3\\
2&5\\
4&1\\
\hline
1&5\\
4&5
\end{bmatrix}
\begin{bmatrix}
1&0\\
3&2\\
5&4\\
\hline
2&3\\
5&3
\end{bmatrix}
\begin{bmatrix}
1&1\\
3&3\\
5&5\\
\hline
2&5\\
5&5
\end{bmatrix}
\begin{bmatrix}
1&4\\
3&0\\
5&2\\
\hline
1&0\\
4&0
\end{bmatrix}
\begin{bmatrix}
1&5\\
3&1\\
5&3\\
\hline
2&2\\
5&2
\end{bmatrix}
\begin{bmatrix}
0&0\\
0&1\\
2&2\\
2&3\\
4&4\\
4&5\\
\hline
0&1\\
3&1
\end{bmatrix}
\begin{bmatrix}
0&2\\
0&3\\
2&4\\
2&5\\
4&0\\
4&1\\
\hline
1&4\\
4&4
\end{bmatrix}
\begin{bmatrix}
1&0\\
1&1\\
3&2\\
3&3\\
5&4\\
5&5\\
\hline
2&4\\
5&4
\end{bmatrix}
\begin{bmatrix}
1&4\\
1&5\\
3&0\\
3&1\\
5&2\\
5&3\\
\hline
1&1\\
4&1
\end{bmatrix}
\end{aligned}
\]}
\end{lemma}

Рассмотрим $2$-мерные продукционные пары $\aB$ в форме $H$-таблиц с уникальной нижней компонентой. Обозначим здесь это множество таблиц через $\T_{\aB}$. Все $12$ таблиц в нижней компоненте содержат ровно $2$ строки. Среди них $8$ таблиц в верхней компоненте содержат $3$ строки, назовём их $H$-$0235$-таблицами, а $4$ таблицы $6$ строк, назовём их $H$-$14$-таблицами. $H$-$14$-таблицы в крайнем правом столбце верхней компоненты содержит все цифры $\{0,1,2,3,4,5\}$.
Разобьем $H$-$14$-таблицы на строки следующим образом так, чтобы в крайнем правом столбце стояли либо $\{1,3,5\}$, либо $\{0,2,4\}$:
{\scriptsize
\[
\begin{aligned}
\begin{bmatrix}
0&0\\
2&2\\
4&4\\
\hline
0&1\\
3&1
\end{bmatrix}
\begin{bmatrix}
0&1\\
2&3\\
4&5\\
\hline
0&1\\
3&1
\end{bmatrix}
\begin{bmatrix}
0&2\\
2&4\\
4&0\\
\hline
1&4\\
4&4
\end{bmatrix}
\begin{bmatrix}
0&3\\
2&5\\
4&1\\
\hline
1&4\\
4&4
\end{bmatrix}
\begin{bmatrix}
1&0\\
3&2\\
5&4\\
\hline
2&4\\
5&4
\end{bmatrix}
\begin{bmatrix}
1&1\\
3&3\\
5&5\\
\hline
2&4\\
5&4
\end{bmatrix}
\begin{bmatrix}
1&4\\
3&0\\
5&2\\
\hline
1&1\\
4&1
\end{bmatrix}
\begin{bmatrix}
1&5\\
3&1\\
5&3\\
\hline
1&1\\
4&1
\end{bmatrix}
\end{aligned}
\]
}

В такой форме $\aB$ представляется с помощью $16$ таблиц, которые обозначим так же $\T_{\aB}$. Каждая таблица $T\in\T_{\aB}$ имеет свойства, сформулированные в виде следующей леммы. 

\kurglemma
\begin{lemma}[\kurglemmaa]\label{lemma:basictable}
Каждая таблица $T\in\T_{\aB}$ имеет парную ей таблицу в $\T_{\aB}$ с такой же верхней компонентной $\pi_U(T)$. Множество элементов любого столбца в $\pi_U(T)$ равно либо $\{0,2,4\}$, либо $\{1,3,5\}$.
$\pi_U(T)$ имеет ровно три строки и они попарно различаются в каждом столбце. 
Строки нижней компоненты $\pi_O(T)$ отличаются только в крайнем левом столбце. Если множество элементов столбца в $\pi_U(T)$ равно $\{0,2,4\}$, то для элементов $\{x,y\}$ соответствующего столбца в $\pi_O(T)$ выполняется либо  $\{x,y\}\subseteq \{0,3\}$, либо $\{x,y\}\subseteq \{1,4\}$. Если же множество элементов столбца в $\pi_U(T)$ равно $\{1,3,5\}$, то соответственно либо  $\{x,y\}\subseteq \{2,5\}$, либо $\{x,y\}\subseteq \{1,4\}$.
\end{lemma}

Пусть $(n-1)$-мерная $H$-таблица $T$ удовлетворяет лемме~\ref{lemma:basictable}. Тогда множество $\T_{\aB}\circ T$ состоит из $4$ $n$-мерных $H$-таблиц с такими же свойствами. Отсюда следует
\kurgtheorem\begin{theorem}[\kurgtheorema]\label{theorem:basictable}
Множество $H$-таблиц с уникальными нижними компонентами, представляющее множество $\aB^{(n-1)\circ}$, состоит из $3\cdot 4^{n-1}$ элементов. При этом $2\cdot 4^{n-1}$ $H$-таблиц ($H$-$0235$-таблиц) содержат ровно $3$ строки и $4^{n-1}$ таблиц ($H$-$14$-таблиц) ровно $6$ строк в верхних компонентах. $H$-$14$-таблицы в крайнем правом столбце верхней компоненты содержат все символы из алфавита $\Zed_6$. Если $H$-$14$-таблицы раздвоить так, чтобы крайний правый столбец таблиц состоял либо из $\{1,3,5\}$, либо из $\{0,2,4\}$, то все таблицы будут удовлетворять свойствам леммы~\ref{lemma:basictable}.
\end{theorem}

Поскольку до разбиения $H$-$14$-таблиц на две, форма представления продукционных пар была с уникальной нижней компонентной, то верно
\kurgcorollary\begin{corollary}[\kurgcorollarya]
В множестве $\aB^{(n-1)\circ}$ ровно $6\cdot 4^{n-1}$ различных строк.
\end{corollary} 
\kurglemma\begin{lemma}[\kurglemmaa]\label{lemma:B}
Множество $\B$ в форме $H$-таблиц с уникальными нижними компонентами имеет вид:
{\tiny
\[
\begin{aligned}
\begin{bmatrix}
0&0\\
2&0\\
4&0\\
\hline
0&0\\
3&0
\end{bmatrix}
\begin{bmatrix}
0&1\\
2&1\\
4&1\\
\hline
0&2\\
3&2
\end{bmatrix}
\begin{bmatrix}
0&2\\
2&2\\
4&2\\
\hline
0&3\\
3&3
\end{bmatrix}
\begin{bmatrix}
0&3\\
2&3\\
4&3\\
\hline
0&5\\
3&5
\end{bmatrix}
\begin{bmatrix}
0&4\\
2&4\\
4&4\\
\hline
1&0\\
4&0
\end{bmatrix}
\begin{bmatrix}
0&5\\
2&5\\
4&5\\
\hline
1&2\\
4&2
\end{bmatrix}
\begin{bmatrix}
1&0\\
3&0\\
5&0\\
\hline
1&3\\
4&3
\end{bmatrix}
\begin{bmatrix}
1&1\\
3&1\\
5&1\\
\hline
1&5\\
4&5
\end{bmatrix}
\begin{bmatrix}
1&2\\
3&2\\
5&2\\
\hline
2&0\\
5&0
\end{bmatrix}
\begin{bmatrix}
1&3\\
3&3\\
5&3\\
\hline
2&2\\
5&2
\end{bmatrix}
\begin{bmatrix}
1&4\\
3&4\\
5&4\\
\hline
2&3\\
5&3
\end{bmatrix}
\begin{bmatrix}
1&5\\
3&5\\
5&5\\
\hline
2&5\\
5&5
\end{bmatrix}
\begin{bmatrix}
0&0\\
0&1\\
2&0\\
2&1\\
4&0\\
4&1\\
\hline
0&1\\
3&1
\end{bmatrix}
\begin{bmatrix}
0&2\\
0&3\\
2&2\\
2&3\\
4&2\\
4&3\\
\hline
0&4\\
3&4
\end{bmatrix}
\begin{bmatrix}
0&4\\
0&5\\
2&4\\
2&5\\
4&4\\
4&5\\
\hline
1&1\\
4&1
\end{bmatrix}
\begin{bmatrix}
1&0\\
1&1\\
3&0\\
3&1\\
5&0\\
5&1\\
\hline
1&4\\
4&4
\end{bmatrix}
\begin{bmatrix}
1&2\\
1&3\\
3&2\\
3&3\\
5&2\\
5&3\\
\hline
2&1\\
5&1
\end{bmatrix}
\begin{bmatrix}
1&4\\
1&5\\
3&4\\
3&5\\
5&4\\
5&5\\
\hline
2&4\\
5&4
\end{bmatrix}
\end{aligned}
\]
}
\end{lemma}

Рассмотрим $2$-мерные продукционные пары $\B$ в форме $H$-таблиц с уникальной нижней компонентой. Обозначим здесь это множество таблиц через $\T_{\B}$. Все $18$ таблиц в нижней компоненте содержат ровно $2$ строки. $12$ таблиц в верхней компоненте содержат $3$ строки, назовём их $H$-$0235$-таблицами, а $6$ таблиц содержат $6$ строк, назовём их $H$-$14$-таблицами.
Разобьем $H$-$14$-таблицы следующим образом:
{\scriptsize
\[
\begin{aligned}
\begin{bmatrix}
0&0\\
2&0\\
4&0\\
\hline
0&1\\
3&1
\end{bmatrix}
\begin{bmatrix}
0&2\\
2&2\\
4&2\\
\hline
0&4\\
3&4
\end{bmatrix}
\begin{bmatrix}
0&4\\
2&4\\
4&4\\
\hline
1&1\\
4&1
\end{bmatrix}
\begin{bmatrix}
1&0\\
3&0\\
5&0\\
\hline
1&4\\
4&4
\end{bmatrix}
\begin{bmatrix}
1&2\\
3&2\\
5&2\\
\hline
2&1\\
5&1
\end{bmatrix}
\begin{bmatrix}
1&4\\
3&4\\
5&4\\
\hline
2&4\\
5&4
\end{bmatrix}
\begin{bmatrix}
0&1\\
2&1\\
4&1\\
\hline
0&1\\
3&1
\end{bmatrix}
\begin{bmatrix}
0&3\\
2&3\\
4&3\\
\hline
0&4\\
3&4
\end{bmatrix}
\begin{bmatrix}
0&5\\
2&5\\
4&5\\
\hline
1&1\\
4&1
\end{bmatrix}
\begin{bmatrix}
1&1\\
3&1\\
5&1\\
\hline
1&4\\
4&4
\end{bmatrix}
\begin{bmatrix}
1&3\\
3&3\\
5&3\\
\hline
2&1\\
5&1
\end{bmatrix}
\begin{bmatrix}
1&5\\
3&5\\
5&5\\
\hline
2&4\\
5&4
\end{bmatrix}
\end{aligned}
\]
}

В такой форме $\B$ представлено $24$ таблицами, которые обозначим так же $\T_{\B}$.
\kurglemma\begin{lemma}[\kurglemmaa]\label{lemma:basictableB}
Каждая таблица $T\in\T_{\B}$ имеет парную ей таблицу в $\T_{\B}$ с такой же верхней компонентной $\pi_U(T)$. Верхняя компонента
$\pi_U(T)$ имеет ровно три строки. Как строки верхней $\pi_U(T)$ компоненты, так и строки нижней компоненты $\pi_O(T)$ отличаются только в крайнем левом столбце. 
Если элементы столбца в $\pi_U(T)$ принадлежат $\{0,2,4\}$, то для элементов $\{x,y\}$ соответствующего столбца в $\pi_O(T)$ выполняется либо  $\{x,y\}\subseteq \{0,3\}$, либо $\{x,y\}\subseteq \{1,4\}$. Если же элементы столбца в $\pi_U(T)$ принадлежат $\{1,3,5\}$, то либо  $\{x,y\}\subseteq \{2,5\}$, либо $\{x,y\}\subseteq \{1,4\}$.

\end{lemma}

Пусть $(n-1)$-мерная $H$-таблица $T$ удовлетворяет лемме~\ref{lemma:basictableB}. Тогда множество $\T_{\B}\circ T$ состоит из $6$ $n$-мерных $H$-таблиц с такими же свойствами. Отсюда следует
\kurgtheorem\begin{theorem}[\kurgtheorema]\label{theorem:basictableB}
Множество $H$-таблиц с уникальной нижней компонентой, представляющее $n$-мер\-ные про\-дук\-ци\-он\-ные пары $\B^{(n-1)\circ}$, состоит из $3\cdot 6^{n-1}$ элементов. При этом $2\cdot 6^{n-1}$ $H$-таблиц ($H$-$0235$-таблиц) содержат ровно $3$ строки и $6^{n-1}$ таблиц ($H$-$14$-таблиц) ровно $6$ строк в верхних компонентах.  Если $H$-$14$-таблицы раздвоить так, чтобы крайний правый столбец состоял из одинаковых символов, то все таблицы будут удовлетворять свойствам леммы~\ref{lemma:basictableB}.
\end{theorem}

Поскольку до раздвоения $H$-$14$-таблиц форма представления продукционных пар была с уникальной нижней компонентной, то верно
\kurgcorollary\begin{corollary}[\kurgcorollarya]
В множестве $\B^{(n-1)\circ}$ ровно $6^{n}$ различных строк.
\end{corollary}


\kurglemma\begin{lemma}[\kurglemmaa]\label{lemma:gB}
Множество $\gB$ в форме $H$-таблиц с уникальными верхними компонентами имеет вид:
{\scriptsize
\[
\begin{aligned}
\begin{bmatrix}
0&4\\
2&4\\
4&4\\
\hline
0&0\\
0&1\\
3&3\\
3&4
\end{bmatrix}
\begin{bmatrix}
0&5\\
2&5\\
4&5\\
\hline
0&1\\
0&2\\
3&4\\
3&5
\end{bmatrix}
\begin{bmatrix}
0&3\\
2&3\\
4&3\\
\hline
0&4\\
0&5\\
3&1\\
3&2
\end{bmatrix}
\begin{bmatrix}
1&0\\
3&0\\
5&0\\
\hline
2&0\\
2&1\\
5&3\\
5&4
\end{bmatrix}
\begin{bmatrix}
1&1\\
3&1\\
5&1\\
\hline
2&1\\
2&2\\
5&4\\
5&5
\end{bmatrix}
\begin{bmatrix}
1&2\\
3&2\\
5&2\\
\hline
2&3\\
2&4\\
5&0\\
5&1
\end{bmatrix}
\begin{bmatrix}
0&0\\
2&0\\
4&0\\
\hline
0&0\\
0&1\\
1&3\\
1&4\\
3&3\\
3&4\\
4&0\\
4&1
\end{bmatrix}
\begin{bmatrix}
0&1\\
2&1\\
4&1\\
\hline
0&1\\
0&2\\
1&4\\
1&5\\
3&4\\
3&5\\
4&1\\
4&2
\end{bmatrix}
\begin{bmatrix}
0&2\\
2&2\\
4&2\\
\hline
0&3\\
0&4\\
1&0\\
1&1\\
3&0\\
3&1\\
4&3\\
4&4
\end{bmatrix}
\begin{bmatrix}
1&3\\
3&3\\
5&3\\
\hline
1&1\\
1&2\\
2&4\\
2&5\\
4&4\\
4&5\\
5&1\\
5&2
\end{bmatrix}
\begin{bmatrix}
1&4\\
3&4\\
5&4\\
\hline
1&3\\
1&4\\
2&0\\
2&1\\
4&0\\
4&1\\
5&3\\
5&4
\end{bmatrix}
\begin{bmatrix}
1&5\\
3&5\\
5&5\\
\hline
1&4\\
1&5\\
2&1\\
2&2\\
4&1\\
4&2\\
5&4\\
5&5
\end{bmatrix}
\end{aligned}
\]
}
\end{lemma}

\kurglemma\begin{lemma}[\kurglemmaa]\label{lemma:U32} 
Множество $\Univ^{3\times 2}$ в форме $V$-таблиц с уникальными левыми и правыми компонентами имеет вид:
{\scriptsize
\[
\begin{aligned}
\begin{bmatrix}
0&0&2&2&4&4& \vrule & 0&0&0&0&1&1&1&2&2\\
0&0&0&0&0&0& \vrule & 0&0&1&1&1&2&2&3&3\\
0&3&0&3&0&3& \vrule & 0&1&1&2&2&3&4&4&5
\end{bmatrix}
\begin{bmatrix}
0&0&2&2&4&4& \vrule & 2&3&3&3&3\\
0&0&0&0&0&0& \vrule & 4&4&4&5&5\\
1&4&1&4&1&4& \vrule & 0&0&1&1&2
\end{bmatrix}
\\
\begin{bmatrix}
1&1&3&3&5&5& \vrule & 0&0&0&1&1&1&1\\
1&1&1&1&1&1& \vrule & 3&3&4&4&4&5&5\\
2&5&2&5&2&5& \vrule & 1&2&3&3&4&4&5
\end{bmatrix}
\begin{bmatrix}
0&0&2&2&4&4& \vrule & 5&5\\
1&1&1&1&1&1& \vrule & 2&2\\
2&5&2&5&2&5& \vrule & 0&1
\end{bmatrix}
\begin{bmatrix}
0&0&2&2&4&4& \vrule & 4&4&4&4&5\\
1&1&1&1&1&1& \vrule & 0&0&1&1&1\\
1&4&1&4&1&4& \vrule & 3&4&4&5&5
\end{bmatrix}
\\
\begin{bmatrix}
1&1&3&3&5&5& \vrule & 2&2&2&2&3&3&3&4&4\\
2&2&2&2&2&2& \vrule & 0&0&1&1&1&2&2&3&3\\
0&3&0&3&0&3& \vrule & 0&1&1&2&2&3&4&4&5
\end{bmatrix}
\begin{bmatrix}
1&1&3&3&5&5& \vrule & 4&5&5&5&5\\
2&2&2&2&2&2& \vrule & 4&4&4&5&5\\
1&4&1&4&1&4& \vrule & 0&0&1&1&2
\end{bmatrix}
\\
\begin{bmatrix}
0&0&2&2&4&4& \vrule & 1&1&2&2&2&3&3&3&3\\
3&3&3&3&3&3& \vrule & 2&2&3&3&4&4&4&5&5\\
2&5&2&5&2&5& \vrule & 0&1&1&2&3&3&4&4&5
\end{bmatrix}
\begin{bmatrix}
0&0&2&2&4&4& \vrule & 0&0&0&0&1\\
3&3&3&3&3&3& \vrule & 0&0&1&1&1\\
1&4&1&4&1&4& \vrule & 3&4&4&5&5
\end{bmatrix}
\\
\begin{bmatrix}
0&0&2&2&4&4& \vrule & 4&4&4&4&5&5&5\\
4&4&4&4&4&4& \vrule & 0&0&1&1&1&2&2\\
0&3&0&3&0&3& \vrule & 0&1&1&2&2&3&4
\end{bmatrix}
\begin{bmatrix}
1&1&3&3&5&5& \vrule & 0&0\\
4&4&4&4&4&4& \vrule & 3&3\\
0&3&0&3&0&3& \vrule & 4&5
\end{bmatrix}
\begin{bmatrix}
1&1&3&3&5&5& \vrule & 0&1&1&1&1\\
4&4&4&4&4&4& \vrule & 4&4&4&5&5\\
1&4&1&4&1&4& \vrule & 0&0&1&1&2
\end{bmatrix}
\\
\begin{bmatrix}
1&1&3&3&5&5& \vrule & 3&3&4&4&4&5&5&5&5\\
5&5&5&5&5&5& \vrule & 2&2&3&3&4&4&4&5&5\\
2&5&2&5&2&5& \vrule & 0&1&1&2&3&3&4&4&5
\end{bmatrix}
\begin{bmatrix}
1&1&3&3&5&5& \vrule & 2&2&2&2&3\\
5&5&5&5&5&5& \vrule & 0&0&1&1&1\\
1&4&1&4&1&4& \vrule & 3&4&4&5&5
\end{bmatrix}
\end{aligned}
\]
}
\end{lemma}

\kurglemma\begin{lemma}[\kurglemmaa]\label{lemma:U23}
Множество $\Univ^{2\times 3}$ имеет вид:
{\scriptsize
\[
\begin{aligned}
\begin{bmatrix}
0&0&0\\
2&0&0\\
4&0&0\\
\hline
0&0&0\\
3&0&0
\end{bmatrix}
\begin{bmatrix}
0&0&1\\
2&0&1\\
4&0&1\\
\hline
0&0&2\\
3&0&2
\end{bmatrix}
\begin{bmatrix}
0&0&2\\
2&0&2\\
4&0&2\\
\hline
0&0&3\\
3&0&3
\end{bmatrix}
\begin{bmatrix}
0&0&3\\
2&0&3\\
4&0&3\\
\hline
0&0&5\\
3&0&5
\end{bmatrix}
\begin{bmatrix}
0&0&4\\
2&0&4\\
4&0&4\\
\hline
0&1&0\\
3&1&0
\end{bmatrix}
\begin{bmatrix}
0&0&5\\
2&0&5\\
4&0&5\\
\hline
0&1&2\\
3&1&2
\end{bmatrix}
\begin{bmatrix}
0&1&0\\
2&1&0\\
4&1&0\\
\hline
0&1&3\\
3&1&3
\end{bmatrix}
\begin{bmatrix}
0&1&1\\
2&1&1\\
4&1&1\\
\hline
0&1&5\\
3&1&5
\end{bmatrix}
\begin{bmatrix}
0&1&2\\
2&1&2\\
4&1&2\\
\hline
0&2&0\\
3&2&0
\end{bmatrix}
\begin{bmatrix}
0&1&3\\
2&1&3\\
4&1&3\\
\hline
0&2&2\\
3&2&2
\end{bmatrix}
\begin{bmatrix}
0&1&4\\
2&1&4\\
4&1&4\\
\hline
0&2&3\\
3&2&3
\end{bmatrix}
\begin{bmatrix}
0&1&5\\
2&1&5\\
4&1&5\\
\hline
0&2&5\\
3&2&5
\end{bmatrix}
\\ 
\begin{bmatrix}
0&2&0\\
2&2&0\\
4&2&0\\
\hline
0&3&0\\
3&3&0
\end{bmatrix}
\begin{bmatrix}
0&2&1\\
2&2&1\\
4&2&1\\
\hline
0&3&2\\
3&3&2
\end{bmatrix}
\begin{bmatrix}
0&2&2\\
2&2&2\\
4&2&2\\
\hline
0&3&3\\
3&3&3
\end{bmatrix}
\begin{bmatrix}
0&2&3\\
2&2&3\\
4&2&3\\
\hline
0&3&5\\
3&3&5
\end{bmatrix}
\begin{bmatrix}
0&2&4\\
2&2&4\\
4&2&4\\
\hline
0&4&0\\
3&4&0
\end{bmatrix}
\begin{bmatrix}
0&2&5\\
2&2&5\\
4&2&5\\
\hline
0&4&2\\
3&4&2
\end{bmatrix}
\begin{bmatrix}
0&3&0\\
2&3&0\\
4&3&0\\
\hline
0&4&3\\
3&4&3
\end{bmatrix}
\begin{bmatrix}
0&3&1\\
2&3&1\\
4&3&1\\
\hline
0&4&5\\
3&4&5
\end{bmatrix}
\begin{bmatrix}
0&3&2\\
2&3&2\\
4&3&2\\
\hline
0&5&0\\
3&5&0
\end{bmatrix}
\begin{bmatrix}
0&3&3\\
2&3&3\\
4&3&3\\
\hline
0&5&2\\
3&5&2
\end{bmatrix}
\begin{bmatrix}
0&3&4\\
2&3&4\\
4&3&4\\
\hline
0&5&3\\
3&5&3
\end{bmatrix}
\begin{bmatrix}
0&3&5\\
2&3&5\\
4&3&5\\
\hline
0&5&5\\
3&5&5
\end{bmatrix}
\\ 
\begin{bmatrix}
0&4&0\\
2&4&0\\
4&4&0\\
\hline
1&0&0\\
4&0&0
\end{bmatrix}
\begin{bmatrix}
0&4&1\\
2&4&1\\
4&4&1\\
\hline
1&0&2\\
4&0&2
\end{bmatrix}
\begin{bmatrix}
0&4&2\\
2&4&2\\
4&4&2\\
\hline
1&0&3\\
4&0&3
\end{bmatrix}
\begin{bmatrix}
0&4&3\\
2&4&3\\
4&4&3\\
\hline
1&0&5\\
4&0&5
\end{bmatrix}
\begin{bmatrix}
0&4&4\\
2&4&4\\
4&4&4\\
\hline
1&1&0\\
4&1&0
\end{bmatrix}
\begin{bmatrix}
0&4&5\\
2&4&5\\
4&4&5\\
\hline
1&1&2\\
4&1&2
\end{bmatrix}
\begin{bmatrix}
0&5&0\\
2&5&0\\
4&5&0\\
\hline
1&1&3\\
4&1&3
\end{bmatrix}
\begin{bmatrix}
0&5&1\\
2&5&1\\
4&5&1\\
\hline
1&1&5\\
4&1&5
\end{bmatrix}
\begin{bmatrix}
0&5&2\\
2&5&2\\
4&5&2\\
\hline
1&2&0\\
4&2&0
\end{bmatrix}
\begin{bmatrix}
0&5&3\\
2&5&3\\
4&5&3\\
\hline
1&2&2\\
4&2&2
\end{bmatrix}
\begin{bmatrix}
0&5&4\\
2&5&4\\
4&5&4\\
\hline
1&2&3\\
4&2&3
\end{bmatrix}
\begin{bmatrix}
0&5&5\\
2&5&5\\
4&5&5\\
\hline
1&2&5\\
4&2&5
\end{bmatrix}
\\ 
\begin{bmatrix}
1&0&0\\
3&0&0\\
5&0&0\\
\hline
1&3&0\\
4&3&0
\end{bmatrix}
\begin{bmatrix}
1&0&1\\
3&0&1\\
5&0&1\\
\hline
1&3&2\\
4&3&2
\end{bmatrix}
\begin{bmatrix}
1&0&2\\
3&0&2\\
5&0&2\\
\hline
1&3&3\\
4&3&3
\end{bmatrix}
\begin{bmatrix}
1&0&3\\
3&0&3\\
5&0&3\\
\hline
1&3&5\\
4&3&5
\end{bmatrix}
\begin{bmatrix}
1&0&4\\
3&0&4\\
5&0&4\\
\hline
1&4&0\\
4&4&0
\end{bmatrix}
\begin{bmatrix}
1&0&5\\
3&0&5\\
5&0&5\\
\hline
1&4&2\\
4&4&2
\end{bmatrix}
\begin{bmatrix}
1&1&0\\
3&1&0\\
5&1&0\\
\hline
1&4&3\\
4&4&3
\end{bmatrix}
\begin{bmatrix}
1&1&1\\
3&1&1\\
5&1&1\\
\hline
1&4&5\\
4&4&5
\end{bmatrix}
\begin{bmatrix}
1&1&2\\
3&1&2\\
5&1&2\\
\hline
1&5&0\\
4&5&0
\end{bmatrix}
\begin{bmatrix}
1&1&3\\
3&1&3\\
5&1&3\\
\hline
1&5&2\\
4&5&2
\end{bmatrix}
\begin{bmatrix}
1&1&4\\
3&1&4\\
5&1&4\\
\hline
1&5&3\\
4&5&3
\end{bmatrix}
\begin{bmatrix}
1&1&5\\
3&1&5\\
5&1&5\\
\hline
1&5&5\\
4&5&5
\end{bmatrix}
\\ 
\begin{bmatrix}
1&2&0\\
3&2&0\\
5&2&0\\
\hline
2&0&0\\
5&0&0
\end{bmatrix}
\begin{bmatrix}
1&2&1\\
3&2&1\\
5&2&1\\
\hline
2&0&2\\
5&0&2
\end{bmatrix}
\begin{bmatrix}
1&2&2\\
3&2&2\\
5&2&2\\
\hline
2&0&3\\
5&0&3
\end{bmatrix}
\begin{bmatrix}
1&2&3\\
3&2&3\\
5&2&3\\
\hline
2&0&5\\
5&0&5
\end{bmatrix}
\begin{bmatrix}
1&2&4\\
3&2&4\\
5&2&4\\
\hline
2&1&0\\
5&1&0
\end{bmatrix}
\begin{bmatrix}
1&2&5\\
3&2&5\\
5&2&5\\
\hline
2&1&2\\
5&1&2
\end{bmatrix}
\begin{bmatrix}
1&3&0\\
3&3&0\\
5&3&0\\
\hline
2&1&3\\
5&1&3
\end{bmatrix}
\begin{bmatrix}
1&3&1\\
3&3&1\\
5&3&1\\
\hline
2&1&5\\
5&1&5
\end{bmatrix}
\begin{bmatrix}
1&3&2\\
3&3&2\\
5&3&2\\
\hline
2&2&0\\
5&2&0
\end{bmatrix}
\begin{bmatrix}
1&3&3\\
3&3&3\\
5&3&3\\
\hline
2&2&2\\
5&2&2
\end{bmatrix}
\begin{bmatrix}
1&3&4\\
3&3&4\\
5&3&4\\
\hline
2&2&3\\
5&2&3
\end{bmatrix}
\begin{bmatrix}
1&3&5\\
3&3&5\\
5&3&5\\
\hline
2&2&5\\
5&2&5
\end{bmatrix}
\\ 
\begin{bmatrix}
1&4&0\\
3&4&0\\
5&4&0\\
\hline
2&3&0\\
5&3&0
\end{bmatrix}
\begin{bmatrix}
1&4&1\\
3&4&1\\
5&4&1\\
\hline
2&3&2\\
5&3&2
\end{bmatrix}
\begin{bmatrix}
1&4&2\\
3&4&2\\
5&4&2\\
\hline
2&3&3\\
5&3&3
\end{bmatrix}
\begin{bmatrix}
1&4&3\\
3&4&3\\
5&4&3\\
\hline
2&3&5\\
5&3&5
\end{bmatrix}
\begin{bmatrix}
1&4&4\\
3&4&4\\
5&4&4\\
\hline
2&4&0\\
5&4&0
\end{bmatrix}
\begin{bmatrix}
1&4&5\\
3&4&5\\
5&4&5\\
\hline
2&4&2\\
5&4&2
\end{bmatrix}
\begin{bmatrix}
1&5&0\\
3&5&0\\
5&5&0\\
\hline
2&4&3\\
5&4&3
\end{bmatrix}
\begin{bmatrix}
1&5&1\\
3&5&1\\
5&5&1\\
\hline
2&4&5\\
5&4&5
\end{bmatrix}
\begin{bmatrix}
1&5&2\\
3&5&2\\
5&5&2\\
\hline
2&5&0\\
5&5&0
\end{bmatrix}
\begin{bmatrix}
1&5&3\\
3&5&3\\
5&5&3\\
\hline
2&5&2\\
5&5&2
\end{bmatrix}
\begin{bmatrix}
1&5&4\\
3&5&4\\
5&5&4\\
\hline
2&5&3\\
5&5&3
\end{bmatrix}
\begin{bmatrix}
1&5&5\\
3&5&5\\
5&5&5\\
\hline
2&5&5\\
5&5&5
\end{bmatrix}
\\ 
\begin{bmatrix}
0&0&0\\
0&0&1\\
2&0&0\\
2&0&1\\
4&0&0\\
4&0&1\\
\hline
0&0&1\\
3&0&1
\end{bmatrix}
\begin{bmatrix}
0&0&2\\
0&0&3\\
2&0&2\\
2&0&3\\
4&0&2\\
4&0&3\\
\hline
0&0&4\\
3&0&4
\end{bmatrix}
\begin{bmatrix}
0&0&4\\
0&0&5\\
2&0&4\\
2&0&5\\
4&0&4\\
4&0&5\\
\hline
0&1&1\\
3&1&1
\end{bmatrix}
\begin{bmatrix}
0&1&0\\
0&1&1\\
2&1&0\\
2&1&1\\
4&1&0\\
4&1&1\\
\hline
0&1&4\\
3&1&4
\end{bmatrix}
\begin{bmatrix}
0&1&2\\
0&1&3\\
2&1&2\\
2&1&3\\
4&1&2\\
4&1&3\\
\hline
0&2&1\\
3&2&1
\end{bmatrix}
\begin{bmatrix}
0&1&4\\
0&1&5\\
2&1&4\\
2&1&5\\
4&1&4\\
4&1&5\\
\hline
0&2&4\\
3&2&4
\end{bmatrix}
\begin{bmatrix}
0&2&0\\
0&2&1\\
2&2&0\\
2&2&1\\
4&2&0\\
4&2&1\\
\hline
0&3&1\\
3&3&1
\end{bmatrix}
\begin{bmatrix}
0&2&2\\
0&2&3\\
2&2&2\\
2&2&3\\
4&2&2\\
4&2&3\\
\hline
0&3&4\\
3&3&4
\end{bmatrix}
\begin{bmatrix}
0&2&4\\
0&2&5\\
2&2&4\\
2&2&5\\
4&2&4\\
4&2&5\\
\hline
0&4&1\\
3&4&1
\end{bmatrix}
\begin{bmatrix}
0&3&0\\
0&3&1\\
2&3&0\\
2&3&1\\
4&3&0\\
4&3&1\\
\hline
0&4&4\\
3&4&4
\end{bmatrix}
\begin{bmatrix}
0&3&2\\
0&3&3\\
2&3&2\\
2&3&3\\
4&3&2\\
4&3&3\\
\hline
0&5&1\\
3&5&1
\end{bmatrix}
\begin{bmatrix}
0&3&4\\
0&3&5\\
2&3&4\\
2&3&5\\
4&3&4\\
4&3&5\\
\hline
0&5&4\\
3&5&4
\end{bmatrix}
\\ 
\begin{bmatrix}
0&4&0\\
0&4&1\\
2&4&0\\
2&4&1\\
4&4&0\\
4&4&1\\
\hline
1&0&1\\
4&0&1
\end{bmatrix}
\begin{bmatrix}
0&4&2\\
0&4&3\\
2&4&2\\
2&4&3\\
4&4&2\\
4&4&3\\
\hline
1&0&4\\
4&0&4
\end{bmatrix}
\begin{bmatrix}
0&4&4\\
0&4&5\\
2&4&4\\
2&4&5\\
4&4&4\\
4&4&5\\
\hline
1&1&1\\
4&1&1
\end{bmatrix}
\begin{bmatrix}
0&5&0\\
0&5&1\\
2&5&0\\
2&5&1\\
4&5&0\\
4&5&1\\
\hline
1&1&4\\
4&1&4
\end{bmatrix}
\begin{bmatrix}
0&5&2\\
0&5&3\\
2&5&2\\
2&5&3\\
4&5&2\\
4&5&3\\
\hline
1&2&1\\
4&2&1
\end{bmatrix}
\begin{bmatrix}
0&5&4\\
0&5&5\\
2&5&4\\
2&5&5\\
4&5&4\\
4&5&5\\
\hline
1&2&4\\
4&2&4
\end{bmatrix}
\begin{bmatrix}
1&0&0\\
1&0&1\\
3&0&0\\
3&0&1\\
5&0&0\\
5&0&1\\
\hline
1&3&1\\
4&3&1
\end{bmatrix}
\begin{bmatrix}
1&0&2\\
1&0&3\\
3&0&2\\
3&0&3\\
5&0&2\\
5&0&3\\
\hline
1&3&4\\
4&3&4
\end{bmatrix}
\begin{bmatrix}
1&0&4\\
1&0&5\\
3&0&4\\
3&0&5\\
5&0&4\\
5&0&5\\
\hline
1&4&1\\
4&4&1
\end{bmatrix}
\begin{bmatrix}
1&1&0\\
1&1&1\\
3&1&0\\
3&1&1\\
5&1&0\\
5&1&1\\
\hline
1&4&4\\
4&4&4
\end{bmatrix}
\begin{bmatrix}
1&1&2\\
1&1&3\\
3&1&2\\
3&1&3\\
5&1&2\\
5&1&3\\
\hline
1&5&1\\
4&5&1
\end{bmatrix}
\begin{bmatrix}
1&1&4\\
1&1&5\\
3&1&4\\
3&1&5\\
5&1&4\\
5&1&5\\
\hline
1&5&4\\
4&5&4
\end{bmatrix}
\\ 
\begin{bmatrix}
1&2&0\\
1&2&1\\
3&2&0\\
3&2&1\\
5&2&0\\
5&2&1\\
\hline
2&0&1\\
5&0&1
\end{bmatrix}
\begin{bmatrix}
1&2&2\\
1&2&3\\
3&2&2\\
3&2&3\\
5&2&2\\
5&2&3\\
\hline
2&0&4\\
5&0&4
\end{bmatrix}
\begin{bmatrix}
1&2&4\\
1&2&5\\
3&2&4\\
3&2&5\\
5&2&4\\
5&2&5\\
\hline
2&1&1\\
5&1&1
\end{bmatrix}
\begin{bmatrix}
1&3&0\\
1&3&1\\
3&3&0\\
3&3&1\\
5&3&0\\
5&3&1\\
\hline
2&1&4\\
5&1&4
\end{bmatrix}
\begin{bmatrix}
1&3&2\\
1&3&3\\
3&3&2\\
3&3&3\\
5&3&2\\
5&3&3\\
\hline
2&2&1\\
5&2&1
\end{bmatrix}
\begin{bmatrix}
1&3&4\\
1&3&5\\
3&3&4\\
3&3&5\\
5&3&4\\
5&3&5\\
\hline
2&2&4\\
5&2&4
\end{bmatrix}
\begin{bmatrix}
1&4&0\\
1&4&1\\
3&4&0\\
3&4&1\\
5&4&0\\
5&4&1\\
\hline
2&3&1\\
5&3&1
\end{bmatrix}
\begin{bmatrix}
1&4&2\\
1&4&3\\
3&4&2\\
3&4&3\\
5&4&2\\
5&4&3\\
\hline
2&3&4\\
5&3&4
\end{bmatrix}
\begin{bmatrix}
1&4&4\\
1&4&5\\
3&4&4\\
3&4&5\\
5&4&4\\
5&4&5\\
\hline
2&4&1\\
5&4&1
\end{bmatrix}
\begin{bmatrix}
1&5&0\\
1&5&1\\
3&5&0\\
3&5&1\\
5&5&0\\
5&5&1\\
\hline
2&4&4\\
5&4&4
\end{bmatrix}
\begin{bmatrix}
1&5&2\\
1&5&3\\
3&5&2\\
3&5&3\\
5&5&2\\
5&5&3\\
\hline
2&5&1\\
5&5&1
\end{bmatrix}
\begin{bmatrix}
1&5&4\\
1&5&5\\
3&5&4\\
3&5&5\\
5&5&4\\
5&5&5\\
\hline
2&5&4\\
5&5&4
\end{bmatrix}
\end{aligned}
\]
}
\end{lemma}


\begin{figure}
\centering
\includegraphics[width=0.4\textwidth]{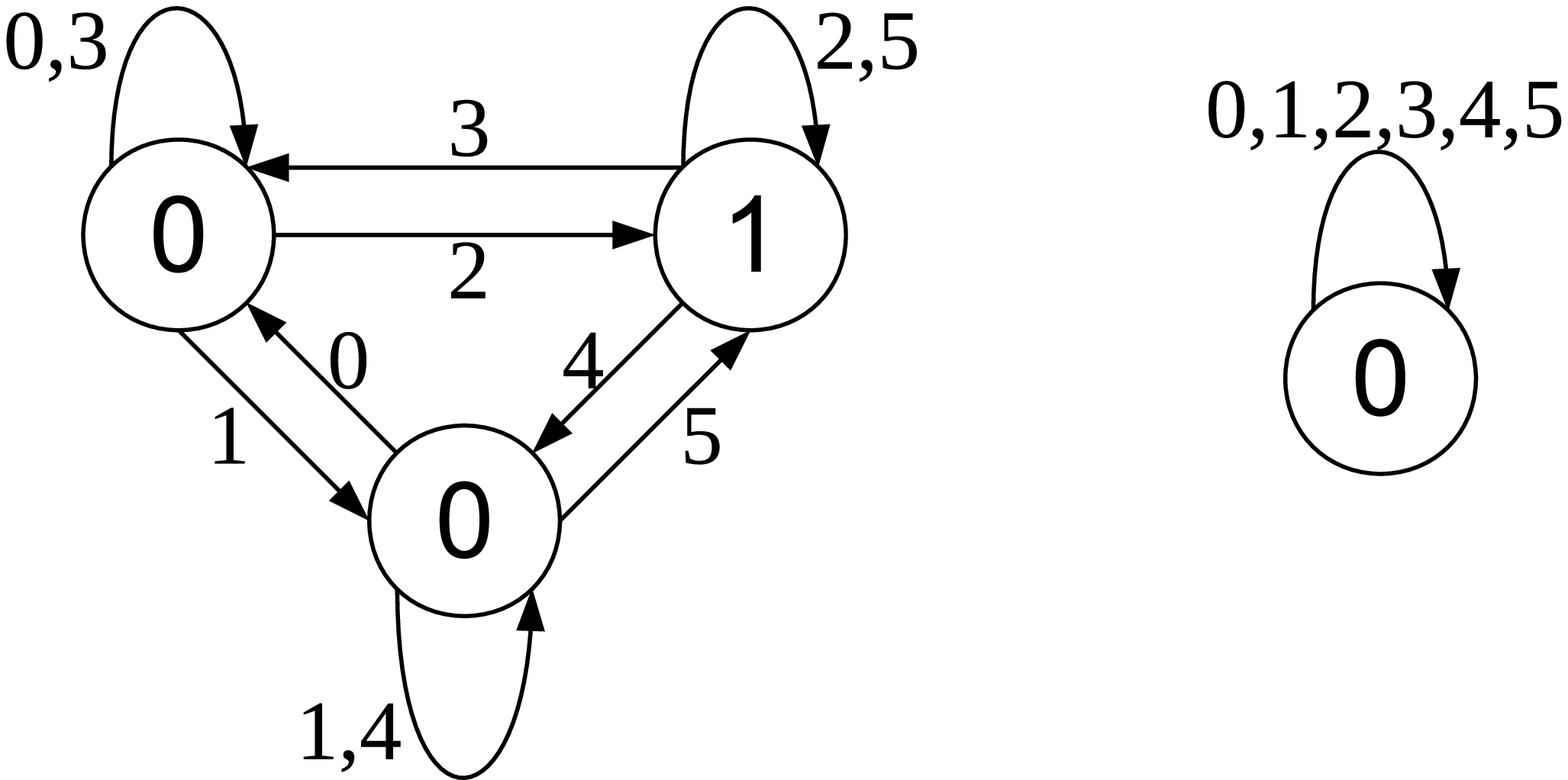}
\caption{Автомат  (слева), порождающий язык верхних и язык нижних строк (читаемых справа налево) слов из $\aB^{n\circ}$, $n\in\Zed^{+}$. Автомат  (справа), порождающий язык верхних и язык нижних строк слов из $\B^{n\circ}$,  $n\in\Zed^{+}$. Он же для языка верхних и языка нижних строк слов из $\gB^{n\circ}$,  $n\in\Zed^{+}$. Все состояния начальные и заключительные. Пустые слова не принимаются во внимание.}
\label{AutomatonVHTuplesDiag22.eps}
\end{figure}

\kurglemma\begin{lemma}[\kurglemmaa]\label{lemma:duality} (о симметрии)
Множество $\B$, а следовательно и $\aB$, $\gB$, $\Univ$, инвариантно относительно преобразования $f(x)=5-x$ алфавита $\Zed_6$.
\end{lemma}

\begin{definition}\label{definition:duallanguages}
Языки $W$ и $W^{dual}$ в алфавите $\Zed_6$ назовём двойственными, если один из другого получается преобразованием $f(x)=5-x$ алфавита.
\end{definition}

\kurgcorollary\begin{corollary}[\kurgcorollarya]
Верно, что $\B=\B^{dual}$, $\aB=\aB^{dual}$, $\gB=\gB^{dual}$, $\Univ=\Univ^{dual}$.
\end{corollary}


\kurglemma\begin{lemma}[\kurglemmaa]\label{lemma:leftableitung}\label{lemma:upableitung}\label{lemma:downableitung}
Верно следующее:
\begin{enumerate}
	\item\label{lemma:leftableitung1}
	Если
$\left(
\begin{smallmatrix}
a_1 &  a\\
b_1 & b\\
c_1 & c
\end{smallmatrix}\right)$,
$\left(\begin{smallmatrix}
a_2 &  a\\
b_2 & b\\
c_2 & c
\end{smallmatrix}\right)\in \Univ^{3\times 2}$,
то $b_1 = b_2$.
	\item Если
$\left(\begin{smallmatrix}
a_1 &  b_1 & c_1\\
a & b & c
\end{smallmatrix}\right)$,
$\left(\begin{smallmatrix}
a_2 &  b_2 & c_2\\
a & b & c
\end{smallmatrix}\right) \in \Univ^{2\times 3}$,
то $b_1 = b_2$. 
	\item Если
$\left(\begin{smallmatrix}
a & b & c \\
a_1 &  b_1 & c_1
\end{smallmatrix}\right)$, 
$\left(\begin{smallmatrix}
a & b & c \\
a_2 &  b_2 & c_2
\end{smallmatrix}\right)
\in \Univ^{2\times 3}$, то $b_1 = b_2$.
\end{enumerate}
\end{lemma}

\begin{proof}
Следует из леммы~\ref{lemma:U32} и теоремы~\ref{theorem:basictableB}.\qed
\end{proof}
\kurglemma\begin{lemma}[\kurglemmaa]\label{lemma:BaBgB}
Верно, что 
$
(\B^{\pm\omega\circ})^{\pm\omega\bullet}\sim
(\aB^{\pm\omega\circ})^{\pm\omega\bullet}\sim
(\gB^{\pm\omega\circ})^{\pm\omega\bullet}
$.
\end{lemma}
\begin{proof}
Для того, чтобы доказать $(\B^{\pm\omega\circ})^{\pm\omega\bullet}\sim
(\aB^{\pm\omega\circ})^{\pm\omega\bullet}$, достаточно показать, что из 
$\left(\begin{smallmatrix}
a&b\\
c&d
\end{smallmatrix}\right)
$,
$\left(\begin{smallmatrix}
c&d\\
e&f
\end{smallmatrix}\right)\in\B
$,  
$
\left(\begin{smallmatrix}
c&d\\
e&f
\end{smallmatrix}\right)
\bullet
\left(\begin{smallmatrix}
a&b\\
c&d
\end{smallmatrix}\right)
=
\left(\begin{smallmatrix}
a&b\\
c&d\\
e&f
\end{smallmatrix}\right)
$
следует 
$\left(\begin{smallmatrix}
c&b\\
e&d
\end{smallmatrix}\right)\in\aB
$, и, наоборот, из 
$\left(\begin{smallmatrix}
a&b\\
c&d
\end{smallmatrix}\right)
$,
$\left(\begin{smallmatrix}
c&d\\
e&f
\end{smallmatrix}\right)\in\aB
$, 
$
\left(\begin{smallmatrix}
c&d\\
e&f
\end{smallmatrix}\right)
\bullet
\left(\begin{smallmatrix}
a&b\\
c&d
\end{smallmatrix}\right)
=
\left(\begin{smallmatrix}
a&b\\
c&d\\
e&f
\end{smallmatrix}\right)
$
следует 
$\left(\begin{smallmatrix}
a&d\\
c&f
\end{smallmatrix}\right)\in\B
$.
В силу конечности рассматриваемых объектов эти факты устанавливаются экспериментально.
Отношение 
$
(\aB^{\pm\omega\circ})^{\pm\omega\bullet}\sim
(\gB^{\pm\omega\circ})^{\pm\omega\bullet}
$ доказывается аналогично.
\end{proof}
\kurglemma\begin{lemma}[\kurglemmaa]\label{lemma:beztupikov}
Множества $\B^{n\circ}$, $\aB^{n\circ}$, $\gB^{n\circ}$, $\B^{\pm\omega\circ}$, $\aB^{\pm\omega\circ}$, $\gB^{\pm\omega\circ}$ не содержат вертикальных тупиков, $n\in \Zed^{+}$.
\end{lemma}
\begin{proof}
По конечным множествам $\B$, $\aB$, $\gB$ построим автоматы, допускающие языки нижних строк и языки верхних строк слов соответственно из  $\bigcup_{n\in\Zed^{+}}{\B^{n\circ}}$, $\bigcup_{n\in\Zed^{+}}{\aB^{n\circ}}$ и $\bigcup_{n\in\Zed^{+}}{\gB^{n\circ}}$. Автоматы показаны на рисунке~\ref{AutomatonVHTuplesDiag22.eps}. Из построений следует, что языки верхних и языки нижних строк совпадают для каждого рассматриваемого множества.
\qed\end{proof}

\kurglemma\begin{lemma}[\kurglemmaa]\label{lemma:U23=BB}
Верны равенства:
{\small
\[
\Univ^{2\times 3}=\B\circ\B,\quad
\aUniv^{2\times 3}=\aB\circ\aB,\quad
\gUniv^{2\times 3}=\gB\circ\gB.
\]
\[
\Univ[0:2,:] = \B^{\pm\omega\circ},\quad
\aUniv[0:2,:] = \aB^{\pm\omega\circ},\quad
\gUniv[0:2,:] = \gB^{\pm\omega\circ}.
\]}
\end{lemma}
\begin{proof}
Первые три равенства доказываются экспериментально.

Включение $\Univ[0:2,:] \subseteq \B^{\pm\omega\circ}$ верно по определению.
Обратное включение верно в силу равенства $\Univ^{2\times 3}=\B\circ\B$, так как если 
$
\left(\begin{smallmatrix}
a & b & c \\
a_1 &  b_1 & c_1
\end{smallmatrix}\right)\in \Univ^{2\times 3}
$,
то $b_1$ формально получается умножением $abc$ на $3/2$, а значит вся нижняя строка $\B^{\pm\omega\circ}$ получается из верхней умножением на $3/2$. Далее отсюда следует, что $\Univ=(\B^{\pm\omega\circ})^{\pm\omega\bullet}$. Теперь оставшиеся равенства
$
\aUniv[0:2,:] = \aB^{\pm\omega\circ}
$,
$
\gUniv[0:2,:] = \gB^{\pm\omega\circ}
$
следуют из
$
(\B^{\pm\omega\circ})^{\pm\omega\bullet}\sim
(\aB^{\pm\omega\circ})^{\pm\omega\bullet}\sim
(\gB^{\pm\omega\circ})^{\pm\omega\bullet}
$ и того, что $\aB^{\pm\omega\circ}$, $\gB^{\pm\omega\circ}$ не содержат вертикальных тупиков.
\qed\end{proof}
\kurgcorollary\begin{corollary}[\kurgcorollarya]
Справедливы равенства: 
\[
\Univ^{2\times n} = \B^{(n-1)\circ}, \quad
\Univ=(\B^{\pm\omega\circ})^{\pm\omega\bullet}.
\]
\[
\aUniv^{2\times n} = \aB^{(n-1)\circ}, \quad
\aUniv=(\aB^{\pm\omega\circ})^{\pm\omega\bullet},
\]
\[
\gUniv^{2\times n} = \gB^{(n-1)\circ}, \quad
\gUniv=(\gB^{\pm\omega\circ})^{\pm\omega\bullet}.
\]
\end{corollary}

Заметим, что $\Univ^{n\times 2} \neq \B^{(n-1)\bullet}$, $\aUniv^{n\times 2} \neq \aB^{(n-1)\bullet}$, $\gUniv^{n\times 2} \neq \gB^{(n-1)\bullet}$. 

Через $\La$ обозначим частичное отображение на словах такое, что 
$\La(x_1x_2x_3) = b$ тогда и только тогда, когда 
\[
\begin{pmatrix}
a &  x_1\\
b & x_2\\
c & x_3
\end{pmatrix}\in \Univ^{3\times 2},
\]
а для $n\ge 3$
$
\La(x_1x_2\ldots x_n)=\La(x_1x_2x_3)\La(x_2x_3x_4)\ldots \La(x_{n-2}x_{n-1}x_n)
$.
И, наконец, для бесконечных слов: пусть $c:\Zed\rightarrow \Zed_6$, $c':\Zed\rightarrow \Zed_6$, тогда $\La(c)=c'$, если для любого $i\in \Zed$ $\La(c(i-1)c(i)c(i+1)) = c'(i)$.

\kurgcorollary\begin{corollary}[\kurgcorollarya]
Отображение $\La$ является функцией.
\end{corollary}

\begin{definition}
Слово $w$ назовём правильным, если определена функция $\La(w)$. При этом, если $w=abc\in\Zed_6^{3}$ правильное слово, то 
правильными также назовём слова $ab$, $bc$, $a$, $b$, $c$ и пустое слово.
\end{definition}

\kurgtheorem\begin{theorem}[\kurgtheorema]\label{theorem:pravilnyeslova}
Если слово $w$ правильное, то и $\La(w)$ правильное.
\end{theorem}
\begin{proof}
Утверждение достаточно доказать для всех слов $w$ длины $5$. Поскольку их конечное число, то лемма устанавливается экспериментально.
\qed\end{proof}

%

\kurglemma\begin{lemma}[\kurglemmaa]\label{lemma:pravilny}
Для любого $n>1$ правые столбцы слов из $\B^{n\bullet}$ правильные.
\end{lemma}
\begin{proof}
Утверждение достаточно доказать для $n=2$. Этот факт устанавливается экспериментально.
\qed\end{proof}

\kurglemma\begin{lemma}[\kurglemmaa]
Пусть $W=\B^{(n-1)\bullet}$, $n>1$.  Множество крайних левых столбцов слов из $W$ содержится во множестве крайних правых столбцов слов из $W$.
\end{lemma}
\begin{proof}
Рассмотрим произвольное слово 
\[
\begin{pmatrix}
b_{-1}&c_{-1}\\
b_0&c_0\\
b_1&c_1\\
\vdots&\vdots\\
b_{n+2}&c_{n+2}
\end{pmatrix} \in \B^{(n+3)\bullet}.
\]
Это значит также, что 
\[
\begin{pmatrix}
b_1&c_1\\
\vdots&\vdots\\
b_{n}&c_{n}
\end{pmatrix}\in \B^{(n-1)\bullet}.
\]
По лемме~\ref{lemma:pravilny} и теореме~\ref{theorem:pravilnyeslova} слово $b_0b_1\ldots b_{n+1}$ правильное. 

Пусть $a_1a_2\ldots a_{n}=\La(b_0b_1\ldots b_{n+1})$.
Тогда 
\[
\begin{pmatrix}
a_1&b_1\\
a_2&b_2\\
\vdots&\vdots\\
a_{n}&b_{n}
\end{pmatrix}\in\B^{(n-1)\bullet}.
\]
Что и требовалось доказать.
\qed\end{proof}
\kurgcorollary\begin{corollary}[\kurgcorollarya]
Множество всех горизонтально нетупиковых слов из $\B^{(n-1)\bullet}$ равно $\Univ^{n\times 2}$.
\end{corollary}
\begin{proof}
Понятно, что $\Univ^{n\times 2}\subseteq W$.
Пусть $w\in W^{\pm\omega\circ}$. В силу построений каждая строка в $w$ получается из предыдущей умножением на $3/2$. Следовательно $W\subseteq \Univ^{n\times 2}$. Что и требовалось доказать.
\qed\end{proof}

\kurglemma\begin{lemma}[\kurglemmaa]
Автоматы, порождающие правильные слова в алфавите $\{0,2,3,5\}$ и в алфавите $\Zed_6$ имеют вид,
как показано на рисунке~\ref{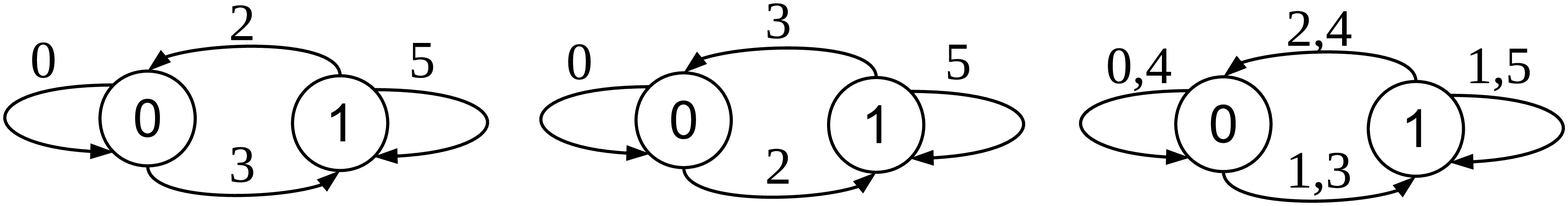} слева и справа соответственно. 
\end{lemma}
\begin{proof}
Автоматы строятся по множеству $\B$.
\end{proof}
\begin{figure}
\centering
\includegraphics[width=0.6\textwidth]{I_0235.eps}
\caption{Автоматы, порождающий все правильные слова в алфавите $\{0,2,3,5\}$ (читаемые сверху вниз -- слева, читаемые снизу вверх -- посредине) и читаемые сверху вних в алфавите $\Zed_6$ (справа). В автоматах все состояния начальные и заключительные.}
\label{I_0235.eps}
\end{figure}

\section{$n$-Мерные $0235$-продукционные пары}

Продукционные пары, крайние правые столбцы которых не содержат символов $1$ и $4$, назовём $0235$-продукционными парами.
Продукционные пары, крайние правые столбцы которых не содержат символов $1$, $4$, $5$, назовём $023$-продукционными парами.

\kurglemma\begin{lemma}[\kurglemmaa]\label{lemma:I0235table}
Список $1$-мерных $0235$-продукционных пар представляется в виде:
{\scriptsize
\[
\begin{aligned}
\I_{0235} = \begin{bmatrix}
0\\
2\\
\hline
0\\
3
\end{bmatrix}
\begin{bmatrix}
3\\
5\\
\hline
2\\
5
\end{bmatrix}
\end{aligned}
\]
}

Список $1$-мерных $023$-продукционных пар представляется в виде:
{\scriptsize
\[
\begin{aligned}
\I_{023} = \begin{bmatrix}
0\\
2\\
\hline
0\\
3
\end{bmatrix}
\begin{bmatrix}
3\\
\hline
2\\
\end{bmatrix}
\end{aligned}
\]
}
\end{lemma}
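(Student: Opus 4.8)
\emph{Plan of proof.} The statement is essentially a finite verification built on the de Bruijn-type description of admissible columns obtained in the previous sections; the plan is to make that description explicit under the extra restriction defining $0235$- and $023$-admissibility. Recall that the $2$-dimensional $0235$-admissible (resp.\ $023$-admissible) words are the admissible words having no column over $\{1,4\}$ (resp.\ over $\{1,4,5\}$), and that a $1$-dimensional admissible word is one occurring as a column of a $2$-dimensional admissible word; part of what the lemma records is that, for $1$-dimensional words, this restriction together with admissibility already forces the whole alphabet down to $\{0,2,3,5\}$ (resp.\ to $\{0,2,3\}$; note these are precisely the $6$-ary digits $\not\equiv 1\pmod 3$). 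The first step is to turn the forbidden-column condition into a pruning of the de Bruijn-type graph $\B$ governing columns: delete every vertex and every edge that is forced to carry a forbidden column, and keep the sub-graph that survives. The lemma then asserts that this sub-graph, written in the $2$-block presentation used throughout, is exactly the one displayed in the statement.

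The second step is the arithmetic. Since $3/2 = 1.3_6$, multiplication by $3/2$ in base $6$ is the map $x \mapsto x + 3\cdot(x\text{ shifted right by one position})$, and both the multiplication by $3$ and the ensuing addition involve only carries bounded by a small constant (at most $2$ and at most $1$, respectively); hence the operation is a finite transducer on base-$6$ expansions, and the digit of $\tfrac32 x$ at position $j$ depends only on $d_j$, $d_{j+1}$ and a bounded carry. This boundedness is exactly what gives columns their de Bruijn structure. I would then run the resulting bounded case analysis: restricting every digit that occurs to $\{0,2,3,5\}$ (resp.\ to $\{0,2,3\}$) and enumerating the admissible carries, one determines which windows of a column are consistent with the neighbouring column again lying in the restricted alphabet. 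Only finitely many digit/carry combinations occur, and the analysis should collapse $\B$ to the two claimed blocks.

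The third step is to check that the surviving graph is exactly right, not merely an over-approximation: one must exhibit genuine bi-infinite $1$-dimensional admissible words realizing each block and each allowed concatenation of blocks, and conversely rule out everything else. The blocks come equipped with periodic tails (the parts below the separating lines), which supply the realizations in one direction; the structural constraints on columns of $2$-dimensional $Z$-words established earlier supply the boundary behaviour at infinity in the other direction, pinning down the carries and killing the spurious configurations. The second statement, about $\I_{023}$, then follows by re-running the analysis with the digit $5$ additionally forbidden: every transition of the $0235$-graph passing through a $5$ disappears, and the $5$-free remnant of its second block degenerates to the two-entry block in the $\I_{023}$ formula.

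The step I expect to be the real obstacle is this last consistency check rather than the arithmetic itself: a digit/carry window can be locally admissible and still fail to extend to an honest $1$-dimensional admissible word, because the carries must close up coherently along the whole (semi-)infinite column. It is only after the $\{0,2,3,5\}$- (resp.\ $\{0,2,3\}$-) restriction that the carries are squeezed into a narrow enough regime for this to go through, and making that precise --- while staying genuinely exhaustive over the finitely many digit/carry branches --- is where the de Bruijn description of $\B$ from the previous section has to carry the bookkeeping. What remains after that is routine base-$6$ arithmetic.
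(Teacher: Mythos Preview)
The paper gives no proof of this lemma at all: it is stated as a bare computational fact, immediately after the definitions of $0235$- and $023$-admissibility, and is evidently meant to be read off from the general $1$-dimensional admissibility table established in the preceding section (the file \texttt{BBB.tex}, which supplies the unrestricted $\B$ and the associated $H$-tables). In other words, once one has the full list of $1$-dimensional admissible pairs, $\I_{0235}$ is obtained by simply deleting every entry that uses a $1$ or a $4$, and $\I_{023}$ by further deleting every entry using a $5$; the displayed blocks are just what remains.

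Your plan is a correct and self-contained way to reach the same conclusion, and the base-$6$ arithmetic you set up (the observation $3/2=1.3_6$ and the bounded-carry transducer) is exactly the mechanism underlying the paper's $\B$. Two remarks, though. First, your third step --- worrying that a locally admissible digit/carry window might fail to extend to a genuine bi-infinite column --- is handled upstream in the paper: the previous section already shows that every local pattern in $\B$ extends (this is the content of the ``basic table'' theorem referenced later as Theorem~\ref{theorem:basictable}), so once you are working inside $\B$ the pruning really is a pure deletion and no extra realizability argument is needed. Second, your parenthetical that $\{0,2,3\}$ are ``precisely the $6$-ary digits $\not\equiv 1\pmod 3$'' is misplaced: that description fits $\{0,2,3,5\}$, not $\{0,2,3\}$, since $5\equiv 2\pmod 3$; the $023$-restriction is genuinely an extra condition beyond the mod-$3$ one. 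Neither point is a gap in the argument, but tightening them would bring your write-up in line with how lightly the paper treats the lemma.
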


Обозначим множества $1$-мерных $0235$-продукционных и $023$-продукционных пар через $\I_{0235}$ и $\I_{023}$ соответственно.
По определению положим, что $\I_{0235}^{0\bullet}=\{0,2,3,5\}$, $\I_{0235}^{1\bullet}=\I_{0235}$.
\kurgcorollary
\begin{corollary}[\kurgcorollarya]\label{cor:VHTuples32_1c_0235}
Список всех $3$-мерных вертикальных продукционных пар с левыми столбцами в алфавите $\{0,2,3,5\}$ имеет в форме $V$-таблиц следующий вид: 
{\scriptsize
\[
\begin{aligned}
 &
\left[\begin{matrix}
0&2& \vrule & \\
0&0& \vrule & \\
0&0& \vrule & 
\end{matrix}
\begin{matrix}
0&0&0&0&1&1\\
0&0&1&1&1&2\\
0&1&1&2&2&3
\end{matrix}\right]
\left[\begin{matrix}
0&2& \vrule & \\
0&0& \vrule & \\
3&3& \vrule & 
\end{matrix}
\begin{matrix}
0&1&1&1&2&2\\
1&1&2&2&3&3\\
2&2&3&4&4&5
\end{matrix}\right]
\left[\begin{matrix}
0&2& \vrule & \\
3&3& \vrule & \\
2&2& \vrule & 
\end{matrix}
\begin{matrix}
1&1&2&2&2&3\\
2&2&3&3&4&4\\
0&1&1&2&3&3
\end{matrix}\right]
\left[\begin{matrix}
3&5& \vrule & \\
2&2& \vrule & \\
0&0& \vrule & 
\end{matrix}
\begin{matrix}
2&2&2&2&3&3\\
0&0&1&1&1&2\\
0&1&1&2&2&3
\end{matrix}\right]
\\ &
\left[\begin{matrix}
3&5& \vrule & \\
2&2& \vrule & \\
3&3& \vrule & 
\end{matrix}
\begin{matrix}
2&3&3&3&4&4\\
1&1&2&2&3&3\\
2&2&3&4&4&5
\end{matrix}\right]
\left[\begin{matrix}
0&2& \vrule & \\
3&3& \vrule & \\
5&5& \vrule & 
\end{matrix}
\begin{matrix}
2&2&3&3&3&3\\
3&4&4&4&5&5\\
2&3&3&4&4&5
\end{matrix}\right]
\left[\begin{matrix}
3&5& \vrule & \\
5&5& \vrule & \\
2&2& \vrule & 
\end{matrix}
\begin{matrix}
3&3&4&4&4&5\\
2&2&3&3&4&4\\
0&1&1&2&3&3
\end{matrix}\right]
\left[\begin{matrix}
3&5& \vrule & \\
5&5& \vrule & \\
5&5& \vrule & 
\end{matrix}
\begin{matrix}
4&4&5&5&5&5\\
3&4&4&4&5&5\\
2&3&3&4&4&5
\end{matrix}\right]
\end{aligned}
\]
}
\end{corollary}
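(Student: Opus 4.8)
The plan is to read the $3$-row table off directly from the $1$-row table $\I_{0235}$ of Lemma~\ref{lemma:I0235table}, by applying twice in succession the passage from $k$-row to $(k{+}1)$-row de Bruijn cells set up in the previous sections: a $(k{+}1)$-row cell is assembled from a $k$-row cell by adjoining one more row, requiring that the newly created pair of adjacent rows is again one of the two $1$-row cells of $\I_{0235}$, and imposing the de Bruijn overlap condition along the shared columns. Since $|\I_{0235}|=2$ and, as recorded after Lemma~\ref{lemma:I0235table}, $\I^{0\bullet}_{0235}=\{0,2,3,5\}$ while $\I^{1\bullet}_{0235}=\I_{0235}$, each such step essentially doubles the number of admissible cells, so at the $3$-row level there are $2^{3}=8$ of them, and the only task is to write them down together with their continuation data.

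First I would carry out the two level-raising steps symbolically. Stacking the two $1$-row cells (tops $\binom{0}{2}$, $\binom{3}{5}$ and bottoms $\binom{0}{3}$, $\binom{2}{5}$) and propagating the carry of the multiplication by $3/2$ through the new row produces the $2$-row cells; one more step produces the $3$-row cells. Their ``cores'' are exactly the eight $3\times2$ left-hand blocks in the statement — equivalently, the eight admissible length-$3$ vertical walks in the digit graph generated by $\I_{0235}$, in which $0$ and $2$ lead to $0$ and $3$ while $3$ and $5$ lead to $2$ and $5$, started from a legal position. Then, for each of the eight cores, I would compute the admissible right-hand part: extend the carry one digit position further and list, for each of the six possible incoming digits $0,\dots,5$, the resulting triple of output digits in the three rows; the six triples so obtained are precisely the six columns of the $3\times6$ block attached to that core. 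Both steps are finite and, once the carry bookkeeping of the previous sections is available, essentially mechanical.

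The step I expect to be the actual work is verifying \emph{completeness} rather than correctness. That each of the eight displayed cells — and each of the six displayed continuations of each — is admissible is immediate from the construction; the point is that nothing has been dropped: no ninth $3$-row cell with columns over $\{0,2,3,5\}$ exists, and no core admits a seventh continuation column. This is exactly where the structural properties of the de Bruijn graph (the bounded ${\times}\,3/2$ carry and the window-overlap constraint) are used to close off the case analysis; done carefully, it also accounts for the entries $1$ and $4$ that show up in the $H$-parts, which record continuations that are arithmetically consistent but lead out of the $0235$-admissible regime.
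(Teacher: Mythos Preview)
Your proposal is correct and matches the paper's approach: the paper states this result as a corollary with no proof at all, relying on the fact that it is a finite enumeration obtained by restricting the already-computed $3$-row universal cells to those whose first column lies in $\{0,2,3,5\}$ (equivalently, by building up from $\I_{0235}$ exactly as you describe). Your explicit two-step level-raising and the count $2^3=8$ are precisely the mechanical verification the paper leaves to the reader.
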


\kurglemma\begin{lemma}[\kurglemmaa]\label{lemma:aB0235}
{\footnotesize
\[\aB\circ\I_{0235} =
\begin{aligned}
\begin{bmatrix}
0&0\\
2&2\\
\hline
0&0\\
3&0
\end{bmatrix}
\begin{bmatrix}
0&2\\
4&0\\
\hline
0&3\\
3&3
\end{bmatrix}
\begin{bmatrix}
0&3\\
2&5\\
\hline
1&5\\
4&5
\end{bmatrix}
\begin{bmatrix}
1&0\\
3&2\\
\hline
2&3\\
5&3
\end{bmatrix}
\begin{bmatrix}
1&5\\
5&3\\
\hline
2&2\\
5&2
\end{bmatrix}
\begin{bmatrix}
2&3\\
4&5\\
\hline
0&2\\
3&2
\end{bmatrix}
\begin{bmatrix}
3&0\\
5&2\\
\hline
1&0\\
4&0
\end{bmatrix}
\begin{bmatrix}
3&3\\
5&5\\
\hline
2&5\\
5&5
\end{bmatrix}
\end{aligned}
\]
}

{\footnotesize
\[\B\circ\I_{0235}=
\begin{aligned}
\begin{bmatrix}
0&0\\
2&0\\
\hline
0&0\\
3&0
\end{bmatrix}
\begin{bmatrix}
2&2\\
4&2\\
\hline
0&3\\
3&3
\end{bmatrix}
\begin{bmatrix}
0&3\\
4&3\\
\hline
0&5\\
3&5
\end{bmatrix}
\begin{bmatrix}
0&5\\
2&5\\
\hline
1&2\\
4&2
\end{bmatrix}
\begin{bmatrix}
3&0\\
5&0\\
\hline
1&3\\
4&3
\end{bmatrix}
\begin{bmatrix}
1&2\\
5&2\\
\hline
2&0\\
5&0
\end{bmatrix}
\begin{bmatrix}
1&3\\
3&3\\
\hline
2&2\\
5&2
\end{bmatrix}
\begin{bmatrix}
3&5\\
5&5\\
\hline
2&5\\
5&5
\end{bmatrix}
\end{aligned}
\]
}

{\footnotesize
\[\gB\circ\I_{0235} =
\begin{aligned}
\begin{bmatrix}
0&5\\
2&5\\
4&5\\
\hline
0&2\\
3&5
\end{bmatrix}
\begin{bmatrix}
0&3\\
2&3\\
4&3\\
\hline
0&5\\
3&2
\end{bmatrix}
\begin{bmatrix}
1&0\\
3&0\\
5&0\\
\hline
2&0\\
5&3
\end{bmatrix}
\begin{bmatrix}
1&2\\
3&2\\
5&2\\
\hline
2&3\\
5&0
\end{bmatrix}
\begin{bmatrix}
0&0\\
2&0\\
4&0\\
\hline
0&0\\
1&3\\
3&3\\
4&0
\end{bmatrix}
\begin{bmatrix}
0&2\\
2&2\\
4&2\\
\hline
0&3\\
1&0\\
3&0\\
4&3
\end{bmatrix}
\begin{bmatrix}
1&3\\
3&3\\
5&3\\
\hline
1&2\\
2&5\\
4&5\\
5&2
\end{bmatrix}
\begin{bmatrix}
1&5\\
3&5\\
5&5\\
\hline
1&5\\
2&2\\
4&2\\
5&5
\end{bmatrix}
\end{aligned}
\]

}
\end{lemma}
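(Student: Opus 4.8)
This is a finite verification, so the plan is to compute the three sets $\aB\circ\I_{0235}$, $\B\circ\I_{0235}$ and $\gB\circ\I_{0235}$ directly from their definitions and match the outcome against the displayed tables. First I would recall, from the section on $n$-ary $B$-invariant series, what the operators $\aB$, $\B$, $\gB$ do to a set of $n$-ary invariant tiles: each of them appends one further digit-position column on the right, and a new column is admissible precisely when it is consistent -- along its overlap with the column it is glued to, and through the carry relations produced by one application of $\xi\mapsto\frac{3}{2}\xi$ written column-by-column in base $6$ -- with the tile already present and with the $0235$-invariance condition. Lemma~\ref{lemma:I0235table} supplies the input: the finite presentation of $\I_{0235}$ by its two displayed $1$-ary tiles together with the incidence data $\I_{0235}^{0\bullet}=\{0,2,3,5\}$, $\I_{0235}^{1\bullet}=\I_{0235}$. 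Thus the lemma reduces to the following: starting from each of the two $1$-ary tiles of $\I_{0235}$, list, for each of $\aB$, $\B$, $\gB$, every admissible one-column extension.

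I would carry this out tile by tile, branching on the finite data that pins down an extension -- the (at most two) digits already standing in the column to be extended together with the (at most two) carries that can enter it. For each branch one solves the base-$6$ carry equations to read off the digits of the appended column, keeps the branch if the $0235$-invariance condition still holds, and discards it otherwise. Collecting the survivors should reproduce exactly the eight $2$-column tiles listed for $\aB\circ\I_{0235}$, the eight for $\B\circ\I_{0235}$ and the eight for $\gB\circ\I_{0235}$; the $3$-ary vertical invariant tiles of Corollary~\ref{cor:VHTuples32_1c_0235} give a handy independent check on which vertical digit-triples may occur during this step, and at the end I would compare the three computed lists with the displayed matrices entry by entry.

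The delicate case is $\gB\circ\I_{0235}$, where the extension is not rigid: four of the output tiles come out with five rows and four with seven, because $\gB$, in contrast with $\aB$ and $\B$, may be forced to continue the new column through one or two further carry-rows before the pattern closes up. Here I would track explicitly how many extra rows are forced and show that the process always terminates after at most two of them, so that the five-row versus seven-row dichotomy corresponds exactly to the two branches of the $\gB$-extension. The real obstacle is not any single computation but exhaustiveness -- certifying that nothing has been missed and nothing spurious added -- which I would secure by making the case split on the column digits and the incoming carries fully explicit, so that each of the three enumerations is visibly complete and the three tables are seen to be precisely the claimed sets.
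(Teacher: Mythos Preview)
Your proposal is correct and matches the paper's treatment: the lemma is stated in the paper as a bare list of tables with no accompanying proof, so the intended argument is precisely the finite, definition-driven enumeration you describe, starting from the two $1$-ary tiles of $\I_{0235}$ in Lemma~\ref{lemma:I0235table} and checking all admissible one-column extensions under $\aB$, $\B$, $\gB$.
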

%
\kurglemma\begin{lemma}[\kurglemmaa]\label{lemma:VHTuples23_0235}
Множество $\B\circ\B\circ\I_{0235}$ имеет вид:
{\scriptsize
\[
\begin{aligned}
\begin{bmatrix}
0&0&0\\
2&0&0\\
\hline
0&0&0\\
3&0&0
\end{bmatrix}
\begin{bmatrix}
0&0&5\\
4&0&5\\
\hline
0&1&2\\
3&1&2
\end{bmatrix}
\begin{bmatrix}
0&1&2\\
2&1&2\\
\hline
0&2&0\\
3&2&0
\end{bmatrix}
\begin{bmatrix}
0&2&0\\
4&2&0\\
\hline
0&3&0\\
3&3&0
\end{bmatrix}
\begin{bmatrix}
0&2&2\\
2&2&2\\
\hline
0&3&3\\
3&3&3
\end{bmatrix}
\begin{bmatrix}
0&3&0\\
4&3&0\\
\hline
0&4&3\\
3&4&3
\end{bmatrix}
\begin{bmatrix}
0&3&3\\
2&3&3\\
\hline
0&5&2\\
3&5&2
\end{bmatrix}
\begin{bmatrix}
0&4&2\\
4&4&2\\
\hline
1&0&3\\
4&0&3
\end{bmatrix}
\\ 
\begin{bmatrix}
0&4&3\\
2&4&3\\
\hline
1&0&5\\
4&0&5
\end{bmatrix}
\begin{bmatrix}
0&5&2\\
4&5&2\\
\hline
1&2&0\\
4&2&0
\end{bmatrix}
\begin{bmatrix}
0&5&5\\
2&5&5\\
\hline
1&2&5\\
4&2&5
\end{bmatrix}
\begin{bmatrix}
1&0&3\\
5&0&3\\
\hline
1&3&5\\
4&3&5
\end{bmatrix}
\begin{bmatrix}
1&0&5\\
3&0&5\\
\hline
1&4&2\\
4&4&2
\end{bmatrix}
\begin{bmatrix}
1&1&3\\
5&1&3\\
\hline
1&5&2\\
4&5&2
\end{bmatrix}
\begin{bmatrix}
1&2&0\\
3&2&0\\
\hline
2&0&0\\
5&0&0
\end{bmatrix}
\begin{bmatrix}
1&2&5\\
5&2&5\\
\hline
2&1&2\\
5&1&2
\end{bmatrix}
\\ 
\begin{bmatrix}
1&3&0\\
3&3&0\\
\hline
2&1&3\\
5&1&3
\end{bmatrix}
\begin{bmatrix}
1&3&5\\
5&3&5\\
\hline
2&2&5\\
5&2&5
\end{bmatrix}
\begin{bmatrix}
1&4&2\\
3&4&2\\
\hline
2&3&3\\
5&3&3
\end{bmatrix}
\begin{bmatrix}
1&5&0\\
5&5&0\\
\hline
2&4&3\\
5&4&3
\end{bmatrix}
\begin{bmatrix}
1&5&2\\
3&5&2\\
\hline
2&5&0\\
5&5&0
\end{bmatrix}
\begin{bmatrix}
2&0&3\\
4&0&3\\
\hline
0&0&5\\
3&0&5
\end{bmatrix}
\begin{bmatrix}
2&1&3\\
4&1&3\\
\hline
0&2&2\\
3&2&2
\end{bmatrix}
\begin{bmatrix}
2&2&5\\
4&2&5\\
\hline
0&4&2\\
3&4&2
\end{bmatrix}
\\ 
\begin{bmatrix}
2&3&5\\
4&3&5\\
\hline
0&5&5\\
3&5&5
\end{bmatrix}
\begin{bmatrix}
2&5&0\\
4&5&0\\
\hline
1&1&3\\
4&1&3
\end{bmatrix}
\begin{bmatrix}
3&0&0\\
5&0&0\\
\hline
1&3&0\\
4&3&0
\end{bmatrix}
\begin{bmatrix}
3&1&2\\
5&1&2\\
\hline
1&5&0\\
4&5&0
\end{bmatrix}
\begin{bmatrix}
3&2&2\\
5&2&2\\
\hline
2&0&3\\
5&0&3
\end{bmatrix}
\begin{bmatrix}
3&3&3\\
5&3&3\\
\hline
2&2&2\\
5&2&2
\end{bmatrix}
\begin{bmatrix}
3&4&3\\
5&4&3\\
\hline
2&3&5\\
5&3&5
\end{bmatrix}
\begin{bmatrix}
3&5&5\\
5&5&5\\
\hline
2&5&5\\
5&5&5
\end{bmatrix}
\end{aligned}
\]
}
\end{lemma}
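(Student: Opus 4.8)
\medskip
\noindent\textit{Plan of the proof.}
By definition $\B\circ\B\circ\I_{0235}$ is the result of applying the operator $\B$, which adjoins one more block to a tile, to the set $\B\circ\I_{0235}$; and by Lemma~\ref{lemma:aB0235} the latter is the explicit collection of eight $2$-block tiles displayed there (the variants $\aB\circ\I_{0235}$ and $\gB\circ\I_{0235}$ of that lemma are not needed here). So it suffices to carry out one further application of $\B$: for each of the eight tiles, list every admissible way of adjoining a new block -- a new column together with its two entries below the dividing line -- then take the union over the eight tiles and delete repetitions. The claim is that this union is exactly the set of thirty-two $3$-block tiles in the statement.

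This is a short finite check because two earlier facts keep the branching bounded. First, by Lemma~\ref{lemma:I0235table} and the remark following it ($\I_{0235}^{0\bullet}=\{0,2,3,5\}$, $\I_{0235}^{1\bullet}=\I_{0235}$), every top entry of a tile in $\B\circ\I_{0235}$ lies in $\{0,2,3,5\}$, and this is preserved along the way by the $0235$-admissibility condition, so the new top column has only a handful of candidates; Corollary~\ref{cor:VHTuples32_1c_0235} makes this precise, listing which triples of neighbouring top columns over $\{0,2,3,5\}$ are admissible. Second, once the new top column is fixed, the two entries below the dividing line are determined by the column rule for multiplication by $3/2$ in base $6$ up to the single incoming carry from the (as yet unwritten) next block, and requiring that the new column still form a legal $0235$-block below the line eliminates all but a couple of carry values. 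Running the eight tiles through these restrictions leaves only a small bounded number of extensions per tile; comparing the resulting list with the display confirms the equality.

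The only genuine obstacle is the bookkeeping: making sure the enumeration of extensions is both exhaustive (no carry case dropped) and sound (no illegal tile retained). I would guard against slips with two cross-checks that make the computation self-correcting: every pair of adjacent columns occurring in the thirty-two tiles must be one of the admissible de Bruijn edges on $\{0,2,3,5\}$-columns, and each listed $3$-block tile must be consistent with one of the eight $V$-blocks of Corollary~\ref{cor:VHTuples32_1c_0235}, while conversely every one of those eight $V$-blocks must be realised. The step I expect to be most delicate is the handling of the incoming carry on the side where the block is attached, since that is precisely where the de Bruijn overlap and the base-$6$ arithmetic interact; once the carry behaviour of a single extension step is tabulated, the remaining cases are mechanical. \qed
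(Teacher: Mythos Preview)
Your approach is correct and is essentially what the paper does: it states the lemma without any proof, treating it as a direct finite computation obtained by applying $\B$ once more to the explicit list $\B\circ\I_{0235}$ from Lemma~\ref{lemma:aB0235}. Your plan---enumerate the admissible one-block extensions of each of the eight $2$-block tiles, using the $0235$-constraint and the base-$6$ carry rule, and then cross-check against Corollary~\ref{cor:VHTuples32_1c_0235}---is exactly the intended verification, and there is nothing to add.
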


%
\kurgtheorem\begin{theorem}[\kurgtheorema]\label{theorem:HtablepropertyB}
$H$-таблицы, представляющие множество  $\B^{(n-1)\circ}\circ\I_{0235}$, имеют уникальные верхние и нижние компоненты, в которых ровно по две строки. В таком представлении $2\cdot 6^{n-1}$ таблиц. В каждой таблице строки $x_nx_{n-1}\ldots x_1$ и $x'_{n}x'_{n-1}\ldots x'_1$  верхней компоненты, как и строки $y_ny_{n-1}\ldots y_1$ и $y'_{n}y'_{n-1}\ldots y'_1$  нижней компоненты отличаются только в крайнем левом столбце. При этом, если  $x_i,x'_i\in\{1,3,5\}$, то  $y_i, y'_i\in \{1,2,4,5\}$,  а если  $x_i,x'_i\in\{0, 2, 4\}$, то  $y_i, y'_i\in \{0, 1, 3, 4\}$.
\end{theorem}
\begin{proof}
Доказательство в форме математической индукции легко следует из леммы~\ref{lemma:aB0235} для  $\B\circ\I_{0235}$ и леммы~\ref{lemma:B} для $\B$.
\end{proof}
\kurgcorollary\begin{corollary}[\kurgcorollarya]
В множестве  $\B^{(n-1)\circ}\circ\I_{0235}$ ровно $4\cdot 6^{n-1}$ различных строк.
\end{corollary}
\kurglemma\begin{lemma}[\kurglemmaa]\label{lemma:VHTuplesDiag23_0235}
Список элементов множества $\aB\circ\aB\circ\I_{0235}$ имеет вид:
{\scriptsize
\[
\begin{aligned}
\begin{bmatrix}
0&0&0\\
2&2&2\\
\hline
0&0&0\\
3&0&0
\end{bmatrix}
\begin{bmatrix}
0&0&2\\
4&4&0\\
\hline
0&0&3\\
3&0&3
\end{bmatrix}
\begin{bmatrix}
0&0&3\\
2&2&5\\
\hline
0&1&5\\
3&1&5
\end{bmatrix}
\begin{bmatrix}
0&1&0\\
2&3&2\\
\hline
0&2&3\\
3&2&3
\end{bmatrix}
\begin{bmatrix}
0&1&5\\
4&5&3\\
\hline
0&2&2\\
3&2&2
\end{bmatrix}
\begin{bmatrix}
0&2&2\\
4&0&0\\
\hline
0&3&0\\
3&3&0
\end{bmatrix}
\begin{bmatrix}
0&2&3\\
2&4&5\\
\hline
0&3&2\\
3&3&2
\end{bmatrix}
\begin{bmatrix}
0&2&5\\
4&0&3\\
\hline
1&4&5\\
4&4&5
\end{bmatrix}
\\ 
\begin{bmatrix}
0&3&0\\
2&5&2\\
\hline
1&4&0\\
4&4&0
\end{bmatrix}
\begin{bmatrix}
0&3&2\\
4&1&0\\
\hline
1&5&3\\
4&5&3
\end{bmatrix}
\begin{bmatrix}
0&3&3\\
2&5&5\\
\hline
1&5&5\\
4&5&5
\end{bmatrix}
\begin{bmatrix}
1&0&0\\
3&2&2\\
\hline
2&3&0\\
5&3&0
\end{bmatrix}
\begin{bmatrix}
1&0&2\\
5&4&0\\
\hline
2&3&3\\
5&3&3
\end{bmatrix}
\begin{bmatrix}
1&0&3\\
3&2&5\\
\hline
2&4&5\\
5&4&5
\end{bmatrix}
\begin{bmatrix}
1&1&0\\
3&3&2\\
\hline
2&5&3\\
5&5&3
\end{bmatrix}
\begin{bmatrix}
1&1&5\\
5&5&3\\
\hline
2&5&2\\
5&5&2
\end{bmatrix}
\\ 
\begin{bmatrix}
1&4&0\\
3&0&2\\
\hline
1&0&3\\
4&0&3
\end{bmatrix}
\begin{bmatrix}
1&4&5\\
5&2&3\\
\hline
1&0&2\\
4&0&2
\end{bmatrix}
\begin{bmatrix}
1&5&2\\
5&3&0\\
\hline
1&1&0\\
4&1&0
\end{bmatrix}
\begin{bmatrix}
1&5&3\\
3&1&5\\
\hline
2&2&2\\
5&2&2
\end{bmatrix}
\begin{bmatrix}
1&5&5\\
5&3&3\\
\hline
2&2&5\\
5&2&5
\end{bmatrix}
\begin{bmatrix}
2&2&3\\
4&4&5\\
\hline
0&0&2\\
3&0&2
\end{bmatrix}
\begin{bmatrix}
2&3&0\\
4&5&2\\
\hline
0&1&0\\
3&1&0
\end{bmatrix}
\begin{bmatrix}
2&3&3\\
4&5&5\\
\hline
0&2&5\\
3&2&5
\end{bmatrix}
\\ 
\begin{bmatrix}
2&4&0\\
4&0&2\\
\hline
0&3&3\\
3&3&3
\end{bmatrix}
\begin{bmatrix}
2&5&3\\
4&1&5\\
\hline
1&5&2\\
4&5&2
\end{bmatrix}
\begin{bmatrix}
3&0&0\\
5&2&2\\
\hline
1&0&0\\
4&0&0
\end{bmatrix}
\begin{bmatrix}
3&0&3\\
5&2&5\\
\hline
1&1&5\\
4&1&5
\end{bmatrix}
\begin{bmatrix}
3&1&0\\
5&3&2\\
\hline
2&2&3\\
5&2&3
\end{bmatrix}
\begin{bmatrix}
3&2&3\\
5&4&5\\
\hline
2&3&2\\
5&3&2
\end{bmatrix}
\begin{bmatrix}
3&3&0\\
5&5&2\\
\hline
2&4&0\\
5&4&0
\end{bmatrix}
\begin{bmatrix}
3&3&3\\
5&5&5\\
\hline
2&5&5\\
5&5&5
\end{bmatrix}
\end{aligned}
\]
}
\end{lemma}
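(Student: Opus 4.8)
\emph{Proof plan.} The plan is to compute $\aB\circ\aB\circ\I_{0235}$ directly, exactly in the manner of the proof of Theorem~\ref{theorem:HtablepropertyB} and parallel to Lemma~\ref{lemma:VHTuples23_0235}, which records $\B\circ\B\circ\I_{0235}$ for the operator $\B$. All ingredients are already in place: Lemma~\ref{lemma:I0235table} gives $\I_{0235}$; Lemma~\ref{lemma:aB0235} gives $\aB\circ\I_{0235}$ as the explicit union of eight $4\times 2$ tuples; and the one-step extension rule for $\aB$ (governed by the base-$6$ carries of multiplication by $3/2$ on the restricted digit set $\{0,2,3,5\}$, set out earlier together with the rules for $\B$ and $\gB$, just as Lemma~\ref{lemma:B} records the rule for $\B$) is available. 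Since $\aB\circ\aB\circ\I_{0235}=\aB\circ(\aB\circ\I_{0235})$, it suffices to apply $\aB$ once more to each of those eight tuples.

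Concretely, first I would process the eight $4\times 2$ tuples of $\aB\circ\I_{0235}$ one at a time: for each of them I would enumerate all admissible one-column extensions together with the carries they force, obtaining a bounded family of $4\times 3$ tuples with the bottom block kept aligned under the top block. Then I would take the union of these families over the eight tuples, delete repetitions, and compare what remains with the $32$ tuples displayed in the statement, checking that it contains neither too few nor too many of them.

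The step I expect to require the most care is precisely this second round of carry bookkeeping: even with $\aB$ already explicit on $\I_{0235}$, each of the eight tuples may admit several extensions, and for every extension the carries of $\times 3$ and $\div 2$ in base $6$ on $\{0,2,3,5\}$ must be propagated without dropping or duplicating a case. To keep this under control I would use checks that follow from the results already proved. By Theorem~\ref{theorem:HtablepropertyB} the digit-set invariant must hold columnwise in every listed tuple: a column whose two top entries lie in $\{0,2,4\}$ has both bottom entries in $\{0,1,3,4\}$, while a column whose two top entries lie in $\{1,3,5\}$ has both bottom entries in $\{1,2,4,5\}$; this can be verified at a glance for all $32$ tuples. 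Next, consistency with Lemma~\ref{lemma:aB0235} must hold in both directions: every two-column window inside a listed tuple must reduce, after the appropriate carry adjustment, to an element of $\aB\circ\I_{0235}$, and each of the eight elements of $\aB\circ\I_{0235}$ must extend to at least one listed tuple. Finally, the total count must come out to $32$ and agree with the number produced by the extension enumeration. Together with the explicitness of Lemmas~\ref{lemma:I0235table} and~\ref{lemma:aB0235}, these observations reduce the statement to a finite verification.
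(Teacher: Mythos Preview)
Your approach is correct and matches the paper's: Lemma~\ref{lemma:VHTuplesDiag23_0235}, like its companion Lemma~\ref{lemma:VHTuples23_0235}, is stated in the paper as a purely computational fact with no separate proof, and the only way to establish it is the direct enumeration you describe---apply the one-step $\aB$ rule to each of the eight tuples of $\aB\circ\I_{0235}$ from Lemma~\ref{lemma:aB0235} and collect the results. One small remark: the columnwise parity invariant you invoke as a sanity check is stated for $\aB$ in Theorem~\ref{theorem:Htableproperty}, not Theorem~\ref{theorem:HtablepropertyB} (which concerns $\B$), and that theorem appears after the present lemma, so you should treat the invariant as an informal cross-check rather than cite it formally; this does not affect the validity of your main argument, which rests only on Lemma~\ref{lemma:aB0235} and the definition of $\aB$.
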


Множество $\aB^{(n-1)\circ}\circ\I_{0235}$ получается из представления $\aB^{(n-1)}$ в форме $H$-таблиц удалением всех $H$-$14$-таблиц и удалением в каждой $H$-$0235$-таблице из верхней компоненты единственной строки, заканчивающейся на $1$ или $4$. Отсюда следует

\kurgtheorem\begin{theorem}[\kurgtheorema]\label{theorem:Htableproperty}
$H$-таблицы, представляющие множество $\aB^{(n-1)\circ}\circ\I_{0235}$, имеют уникальные верхние и нижние компоненты, в которых ровно по две строки. Всего в представлении $2\cdot 4^{n-1}$ таблиц. В каждой таблице строки $x_nx_{n-1}\ldots x_1$ и $x'_{n}x'_{n-1}\ldots x'_1$  верхней компоненты отличаются в каждом столбце, при этом $x_i,x'_i\in\{1,3,5\}$ или $x_i,x'_i\in\{0,2,4\}$, $1\le i \le n$. В каждой таблице строки $y_ny_{n-1}\ldots y_1$ и $y'_{n}y'_{n-1}\ldots y'_1$  нижней компоненты отличаются только в крайнем левом столбце, при этом $y_n, y'_n\in \{0,3\}$ или $y_n, y'_n\in \{2,5\}$ или $y_n, y'_n\in \{1,4\}$. Если  $x_i,x'_i\in\{1,3,5\}$, то  $y_i, y'_i\in \{1,2,4,5\}$,  а если  $x_i,x'_i\in\{0, 2, 4\}$, то  $y_i, y'_i\in \{0, 1, 3, 4\}$.
\end{theorem}
\kurgcorollary\begin{corollary}[\kurgcorollarya]
В множестве $\aB^{(n-1)\circ}\circ\I_{0235}$ ровно $4^n$ различных строк.
\end{corollary}
\kurgcorollary\begin{corollary}[\kurgcorollarya]\label{corollar:vyvod_vniz_BaB}
Верны следующие утверждения.
\begin{enumerate}
\item\label{corollar:vyvod_vniz_BaB1}
Если
\[
\begin{pmatrix}
c_{n}\ldots c_1c_0\\
a_{n}\ldots a_1a_0
\end{pmatrix}\in \aB^{n\circ}\circ\I_{0235},\quad
\begin{pmatrix}
c_{n}\ldots c_1c_0\\
b_{n}\ldots b_1b_0
\end{pmatrix}\in \aB^{n\circ}\circ\I_{0235},
\]
 то $a_{n-1}\ldots a_1a_0
=b_{n-1}\ldots b_1b_0$.
\item\label{corollar:vyvod_vniz_BaB2}
Если
\[
\begin{pmatrix}
c_{n}\ldots c_1c_0\\
a_{n}\ldots a_1a_0
\end{pmatrix}\in {\B}^{n\circ}\circ\I_{0235},\quad
\begin{pmatrix}
c_{n}\ldots c_1c_0\\
b_{n}\ldots b_1b_0
\end{pmatrix}\in {\B}^{n\circ}\circ\I_{0235},
\]
то $a_{n-1}\ldots a_1a_0
=b_{n-1}\ldots b_1b_0$.
\end{enumerate}
\end{corollary}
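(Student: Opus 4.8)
The plan is to read both items off the block description of the $H$-table furnished by Theorems~\ref{theorem:HtablepropertyB} and~\ref{theorem:Htableproperty}. First I would invoke each theorem with $n$ replaced by $n+1$, so that it speaks about $\B^{n\circ}\circ\I_{0235}$ (resp.\ $\aB^{n\circ}\circ\I_{0235}$); after renaming the digit positions $1,\dots,n+1$ of the theorem as $0,\dots,n$ as in the statement above, the content is: the rows of the associated $H$-table are exactly the elements $\binom{c_n\ldots c_0}{a_n\ldots a_0}$ of the relation, they fall into blocks of two distinct rows, and inside one block the lower words of the two rows coincide in every digit except the senior one $a_n$.

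For item~\ref{corollar:vyvod_vniz_BaB2} this already finishes it: by Theorem~\ref{theorem:HtablepropertyB} the two rows of any block carry the \emph{same} upper word, and since the $H$-table tabulates the relation without repeated rows, each occurring upper word sits in a single block. Thus $\binom{c_n\ldots c_0}{a_n\ldots a_0}$ and $\binom{c_n\ldots c_0}{b_n\ldots b_0}$ are the two rows of one block, or literally the same row, whence $a_{n-1}\ldots a_0=b_{n-1}\ldots b_0$.

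Item~\ref{corollar:vyvod_vniz_BaB1} is the one that needs an argument: by Theorem~\ref{theorem:Htableproperty} the two rows of an $\aB$-block carry upper words that already differ in the senior digit, so a fixed upper word $c_n\ldots c_0$ may head rows in several distinct blocks, each with its own lower word, and one must see that all these lower words agree below the senior position. I would do this by induction on $n$, stripping the senior digit. Writing $\aB^{n\circ}\circ\I_{0235}=\aB\circ\bigl(\aB^{(n-1)\circ}\circ\I_{0235}\bigr)$, the two given pairs are produced by one application of $\aB$ reading the common senior digit $c_n$, starting from two source pairs of $\aB^{(n-1)\circ}\circ\I_{0235}$; as $\aB$ only prepends a fresh senior digit to the upper word and does not disturb the digits already in it, the two source pairs have the same upper word $c_{n-1}\ldots c_0$, so the induction hypothesis forces their lower words to agree at all positions below their own senior one. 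It then remains to verify, from the one-step table $\aB\circ\I_{0235}$ of Lemma~\ref{lemma:aB0235}, that the extension step re-derives the lower word at position $n-1$ from $c_n$, from $c_{n-1}$, and from an up-carry (of the $\times 3$ part) and a remainder (of the $\div 2$ part) that are already determined by the shared digits $c_{n-1}\ldots c_0$ — hence without using the possibly discrepant old senior of the source lower word; then the agreement extends to position $n-1$ as well. The base case $n=0$ (or $n=1$) is read off directly from $\I_{0235}$ (resp.\ $\aB\circ\I_{0235}$) in Lemmas~\ref{lemma:I0235table} and~\ref{lemma:aB0235}; item~\ref{corollar:vyvod_vniz_BaB2} is likewise covered by the same induction, with $\B\circ\I_{0235}$ (Lemma~\ref{lemma:aB0235}) in place of $\aB\circ\I_{0235}$.

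I expect the main obstacle to be exactly that last point for $\aB$: showing that adjoining one more, more significant, input digit — which for multiplication by $3/2$ pushes the ambiguous top carry of the division through the $\times 3$ part — perturbs the lower word only in its senior digit. Intuitively this holds because the division-by-$2$ remainder is self-correcting: one step below the freshly revealed digit the remainder is fixed by the (shared) tripled input alone and forgets the ambiguous bit; the technical work is to check that the finite transition structure of $\aB\circ\I_{0235}$ realises precisely this, so that the induction goes through. That for $\B$ the upper words inside a block coincide outright is what makes item~\ref{corollar:vyvod_vniz_BaB2} immediate, needing no induction.
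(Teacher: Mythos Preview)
Your reading of Theorem~\ref{theorem:HtablepropertyB} is mistaken: for $\B^{(n-1)\circ}\circ\I_{0235}$ the two upper rows of a block are \emph{not} equal --- the theorem says that they, like the two lower rows, differ in the leftmost symbol only. This is visible already in the table for $\B\circ\I_{0235}$ in Lemma~\ref{lemma:aB0235}, whose first block has upper rows $00$ and $20$. So your short argument for item~\ref{corollar:vyvod_vniz_BaB2} does not stand as written: the premise ``both rows of a block carry the same upper word'' is false, and with it your stated reason why a given upper word must sit in a single block.

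The paper supplies no separate proof; the corollary is meant to be read off Theorems~\ref{theorem:HtablepropertyB} and~\ref{theorem:Htableproperty} directly, and both items in the same way. The two pieces one needs from either theorem are: (i)~the two lower rows of any block agree in every position except the senior one, and (ii)~each word occurs as an upper row of exactly one block (this is the ``unified numbering'' clause, and is also forced by the stated block count together with the count of distinct rows in the corollary immediately following each theorem). Given a fixed upper word $c_n\ldots c_0$, (ii) singles out a unique block, and then (i) forces any two admissible lower words to coincide at positions $0,\ldots,n-1$. There is no asymmetry between the $\B$ and the $\aB$ case here, and no induction is required.

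Your inductive plan for item~\ref{corollar:vyvod_vniz_BaB1} could perhaps be carried through, but note that the ``main obstacle'' you isolate --- that the ambiguous senior bit of the source lower word does not leak into lower positions after one further $\aB$-step --- is precisely the assertion that the two lower rows of an $\aB$-block agree below the senior digit, which is already the content of Theorem~\ref{theorem:Htableproperty}. So your induction would amount to re-deriving that theorem from the one-step tables of Lemma~\ref{lemma:aB0235} rather than simply invoking it.
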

\begin{figure}
\centering
\includegraphics[width=0.4\textwidth]{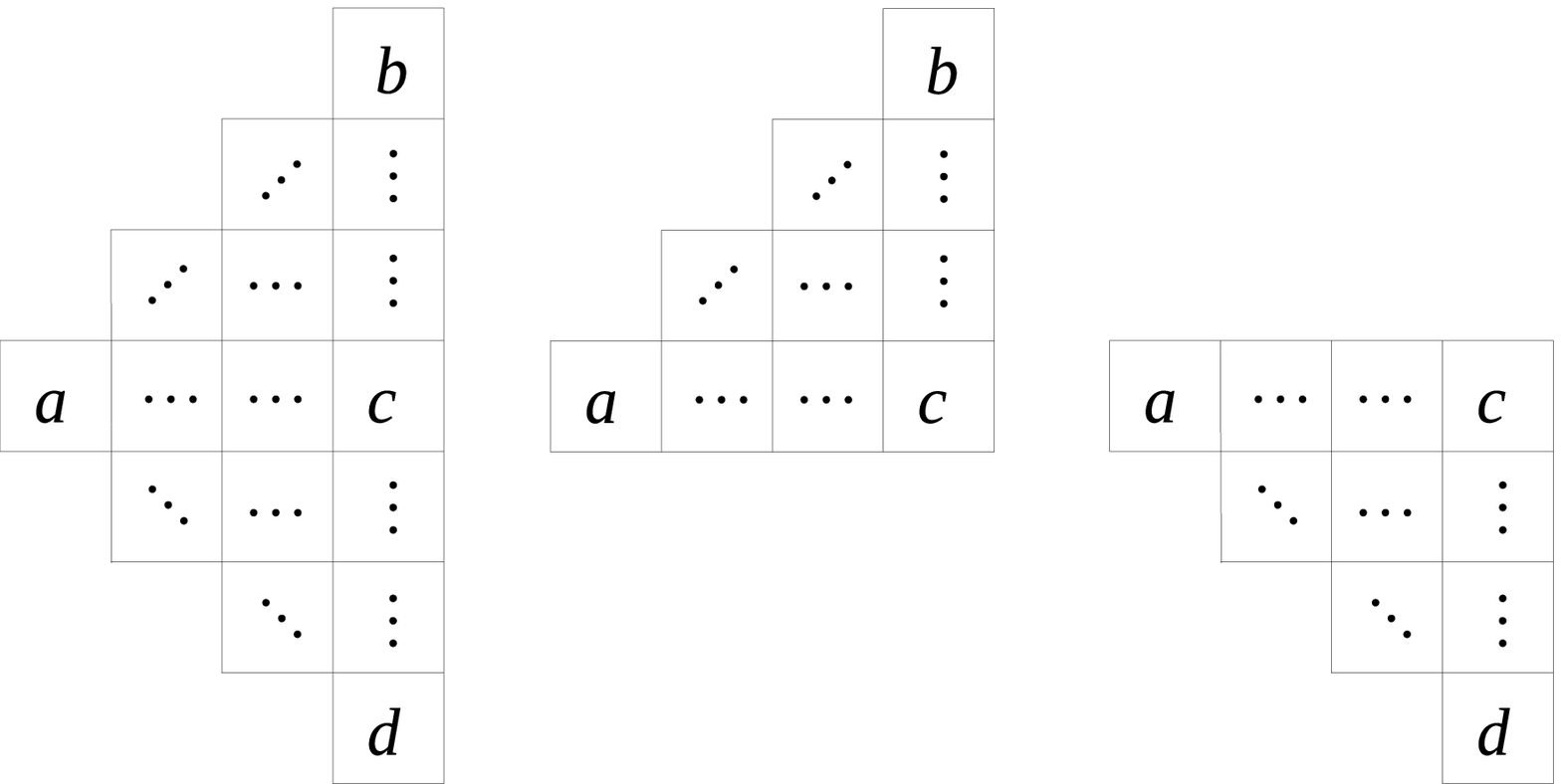}
\caption{Иллюстрация вида фрагментов слов к теореме~\ref{theorem:FigureKurgPotap2018_triangle}.} 
\label{FigureKurgPotap2018_triangle.eps}
\end{figure}
\kurgtheorem\begin{theorem}[\kurgtheorema]\label{theorem:FigureKurgPotap2018_triangle}
Пусть  $T$ -- множество $2$-мерных треугольных фрагментов слов из $\Univ$ как показано на рисунке~\ref{FigureKurgPotap2018_triangle.eps} (слева), в которых вертикальное слово $b\ldots c \ldots d$ не содержит символов $1$ и $4$. Множество $T$ определяет взаимно-однозначное соответствие между множеством вертикальных слов $b\ldots c \ldots d$ и множеством диагональных слов $a\ldots b$ в $T$.
\end{theorem}
\begin{proof}
По следствию~\ref{corollar:vyvod_vniz_BaB}, пункт~\ref{corollar:vyvod_vniz_BaB1}, слово $a\ldots b$ полностью и однозначно определяет фрагмент, показанный на рисунке~\ref{FigureKurgPotap2018_triangle.eps} в центре.
По следствию~\ref{corollar:vyvod_vniz_BaB}, пункт~\ref{corollar:vyvod_vniz_BaB2},  слово $a\ldots c$ полностью и однозначно определяет фрагмент, показанный на рисунке~\ref{FigureKurgPotap2018_triangle.eps} справа.\qed
То, что слово $b\ldots d$ однозначно определяет всё треугольное слово, следует из леммы~\ref{lemma:leftableitung}, пункт~\ref{lemma:leftableitung1}.
\end{proof}
\kurgcorollary
\begin{corollary}[\kurgcorollarya]
Множество квадратных слов $\cup_{n=1}^{\infty}{(\aB^{n\circ}\circ\I_{0235})^{n\bullet}}$ устанавливает взаимно-однозначное соответствие между его нулевыми (крайними верхними) строками и нулевыми (крайними правыми) столбцами.
\end{corollary}
\kurgcorollary
\begin{corollary}[\kurgcorollarya]\label{cor:1to1}
Множество слов $(\aB^{+\omega\circ}\circ\I_{0235})^{+\omega\bullet}$ устанавливает взаимно-однозначное соответствие между его нулевыми строками и столбцами.
\end{corollary}

Обозначим здесь множество всех строк слов из $(\aB^{+\omega\circ}\circ\I_{0235})^{+\omega\bullet}$ через $A$. Обозначаем дальше биекцию из следствия~\ref{cor:1to1} через $\Bi:A\rightarrow \I_{0235}^{+\omega\bullet}$.

\kurgtheorem\begin{theorem}[\kurgtheorema]
Для любого $n\in\Nat$ язык всех правильных слов в алфавите $\{0,2,3,5\}$ включается в язык крайних левых столбцов слов из $\bigcup_{m=1}^{\infty}(\aB^{n\circ})^{m\bullet}$.
\end{theorem}
\begin{proof}
Утверждение следует из теоремы~\ref{theorem:basictable} о структуре $H$-таблиц для $\aB^{n\circ}$. Удалим из $H$-таблиц все строки с цифрами $1$ и $4$ в крайнем левом столбце. В верхних компонентах останутся ровно $2$ строки с цифрами $0$ и $2$ или $3$ и $5$ в крайних левых столбцах. Некоторые нижние компоненты окажутся пустыми. Удалим соответствующие таблицы как тупиковые. Полученная совокупность таблиц представляет некоторую конечную совокупность продукционных пар $W$, в которых множество строк верхних компонент строго включается в множество строк нижних. Это значит, что для любого $w\in W$ $w\bullet W\neq\emptyset$. 
Так как в верхних компонентах в левом столбце всегда ровно два символа $0$ и $2$ или $3$ и $5$, то построив по $W$
минимальный автомат, допускающий язык читаемых снизу вверх левых столбцов, мы получим автомат на среднем рисунке~\ref{I_0235.eps}.
\qed
\end{proof}
\kurgcorollary\begin{corollary}[\kurgcorollarya]\label{corollary:u0235w_in_Univ}
Для любого слова $u\in (\B^{+\omega\circ}\circ\I_{0235})^{\pm\omega\bullet}$  существует слово $w\in (\I_{0235}\circ\B^{-\omega\circ})^{\pm\omega\bullet}$ такое, что $u\circ w\in \Univ$.
\end{corollary}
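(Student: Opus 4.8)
The plan is to regard a word of $\Univ$ as glued, along the ``seed'' row carried by $\I_{0235}$, out of an upper half — which is precisely the datum $u$ we are already handed — and a lower half $w$, which is what we must construct. The key preliminary remark is that the relation tying together two neighbouring rows of a word of $\Univ$ is exactly the local, de~Bruijn-type relation encoded by the derivation tuples $\I_{0235}$ together with their $\B$-extensions (this is the content of Lemma~\ref{lemma:I0235table}, Lemma~\ref{lemma:B} and Theorem~\ref{theorem:basictable}). Consequently, once a lower half $w\in(\I_{0235}\circ\B^{-\omega\circ})^{\pm\omega\bullet}$ that agrees with $u$ on the seed row has been produced, the glued word $u\circ w$ automatically satisfies every local constraint — above the seed row by hypothesis on $u$, and below it by construction of $w$ — and hence lies in $\Univ$. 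So the whole proof reduces to building such a $w$.

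To do so, first extract from $u$ its restriction to the $\I_{0235}$-layer. Since $u\in(\B^{+\omega\circ}\circ\I_{0235})^{\pm\omega\bullet}$, this restriction is a two-way infinite word $\xi_0\in\I_{0235}^{\pm\omega\bullet}$, in particular a two-way infinite word over $\{0,2,3,5\}$ by Lemma~\ref{lemma:I0235table}. For every depth $m$, the theorem immediately preceding this corollary — transported, where needed, from the diagonal derivation $\aB$ to the horizontal derivation $\B$ through the triangle correspondence of Theorem~\ref{theorem:FigureKurgPotap2018_triangle} and the matching of the two pictures given by Corollary~\ref{corollar:vyvod_vniz_BaB} — guarantees that $\xi_0$, being a $\{0,2,3,5\}$-word, occurs as the seed layer of some finite configuration that is an $m$-fold downward $\B$-derivation of $\I_{0235}$; on the one-sided part the bijection $\Bi$ of Corollary~\ref{cor:1to1} even renders this configuration unique, keeping the bookkeeping finite.

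It remains to pass to the limit in $m$ (and in the two-sided column direction). The continuations of a fixed admissible $m$-fold downward derivation to an $(m+1)$-fold one form a finite set, because the $H$-tables of $\B^{(n-1)\circ}\circ\I_{0235}$ are finite, with $2\cdot 6^{n-1}$ rows, by Theorem~\ref{theorem:HtablepropertyB} and the corollary following it. Hence the tree of partial downward $\B$-derivations sitting above $\xi_0$ is finitely branching and, by the previous paragraph, infinite; K\"onig's lemma supplies an infinite branch, and a truncation/compactness argument in the column direction turns it into the required $w\in(\I_{0235}\circ\B^{-\omega\circ})^{\pm\omega\bullet}$ whose $\I_{0235}$-layer is $\xi_0$. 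Gluing $u$ and $w$ along this common layer and invoking the identification of $\B$-data with $\Univ$-adjacency from the first paragraph then yields $u\circ w\in\Univ$.

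The main obstacle is the ``no dead ends'' claim underlying the construction of $w$: one must be sure that the downward $\B$-derivation can always be continued above an arbitrary $\{0,2,3,5\}$-seed, so that the tree above $\xi_0$ is genuinely infinite and K\"onig's lemma applies. This is exactly where the de~Bruijn structure enters — through the preceding theorem and, behind it, the absence of sinks in the graphs that control the $H$-tables — whereas the remainder is the routine compactness packaging together with the bookkeeping needed to reconcile the diagonal ($\aB$) and horizontal ($\B$) derivations.
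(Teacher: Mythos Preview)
Your argument is correct and is precisely the unpacking the paper leaves implicit: the corollary is stated there without proof, as an immediate consequence of the preceding theorem, and the intended mechanism is exactly the one you describe --- take the $\{0,2,3,5\}$-seed layer of $u$, use the preceding theorem to see that arbitrarily deep downward derivations exist above every finite window of that seed, and then extract an infinite $w$ by compactness, after which $u\circ w\in\Univ$ because the $\B$-constraints are satisfied row by row. Two minor remarks: the detour through $\Bi$ and Corollary~\ref{cor:1to1} is not needed (uniqueness plays no role here, only existence), and the finite-branching input to K\"onig's lemma follows already from the finiteness of the alphabet rather than from the precise row count $2\cdot 6^{n-1}$ of Theorem~\ref{theorem:HtablepropertyB}; neither point affects the validity of your proof.
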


\kurgtheorem\begin{theorem}[\kurgtheorema]\label{theorem:basic0235p}\label{theorem:tBbasic0235p}
Пусть $U = U_1 = (\B^{+\omega\circ}\circ\I_{0235}\circ\B^{-\omega\circ})^{\pm\omega\bullet}$ или  $U = U_2 =(\B^{+\omega\circ}\circ\I_{0235})^{\pm\omega\bullet}$ или $U = U_3 =(\aB^{+\omega\circ}\circ\I_{0235})^{+\omega\bullet}$. Пусть $i\ge 0$, $W_i=\{u[:,i]|u\in  U\}$. Тогда ограничение отображения $L$ на множество $W_i$, то есть $L:W_i\rightarrow W_{i+1}$, является биекцией.
\end{theorem}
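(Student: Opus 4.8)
\emph{Sketch of the intended proof.} The assertion is that the column at position $i+1$ of a word in $U$ is determined by its column at position $i$: if $u,v\in U$ with $u[:,i]=v[:,i]$, then $u[:,i+1]=v[:,i+1]$, so that the natural correspondence $L$ carrying $u[:,i]$ to $u[:,i+1]$ is single-valued. The plan is to reduce this to a finite, tile-level determinism statement about the sets $\B^{n\circ}\circ\I_{0235}$ (for $U=U_1,U_2$) and $\aB^{n\circ}\circ\I_{0235}$ (for $U=U_3$), to read that statement off the $H$-table structure already established, and then to pass to the limit. For the reduction one uses only that $U$ arises from $\I_{0235}$ by the vertical composition operators $\B^{+\omega\circ}$, $\aB^{+\omega\circ}$ (together with $\B^{-\omega\circ}$ in the case of $U_1$) followed by a horizontal-closure operator: every finite sub-pattern of a word of $U$ occurs inside a tile of $\B^{n\circ}\circ\I_{0235}$ (resp.\ $\aB^{n\circ}\circ\I_{0235}$) once $n$ is large enough, so it suffices to prove that, within any such tile-set, two tiles agreeing on the column of index $i$ agree on the column of index $i+1$; in other words the restriction of the tile-set to two consecutive columns is the graph of a partial function.

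That finite statement is a reformulation of Theorems~\ref{theorem:HtablepropertyB} and~\ref{theorem:Htableproperty}: reading a row of the $H$-table attached to $\B^{n\circ}\circ\I_{0235}$, resp.\ $\aB^{n\circ}\circ\I_{0235}$, as a pair (content of column $i$, content of column $i+1$), the clause ``no two rows coincide within a half'' says precisely that the first entry determines the second on each half, and the parity-coupling clauses --- if the first-component digits lie in $\{1,3,5\}$ the second-component digits lie in $\{1,2,4,5\}$, and if they lie in $\{0,2,4\}$ they lie in $\{0,1,3,4\}$ --- fuse the two halves into one well-defined rule. Equivalently this is Corollary~\ref{corollar:vyvod_vniz_BaB}, items~\ref{corollar:vyvod_vniz_BaB1} and~\ref{corollar:vyvod_vniz_BaB2}: the single degree of freedom left open there is the leading digit, and it disappears here because entire columns, not just leading digits, are being matched. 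I would verify the base cases $n=1,2$ by direct inspection of the explicit tables of Lemma~\ref{lemma:aB0235} (for $\B\circ\I_{0235}$, $\aB\circ\I_{0235}$, $\gB\circ\I_{0235}$) and of Lemmas~\ref{lemma:VHTuples23_0235} and~\ref{lemma:VHTuplesDiag23_0235}, in each of which every (left column, right column) pair that occurs is indeed functional, and then obtain general $n$ from the recursive description of $\B^{n\circ}$ and $\aB^{n\circ}$ used to prove those two theorems.

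The passage to the limit is bookkeeping. Writing $W_i^{(n)}$ for the set of columns of index $i$ among the tiles of $\B^{n\circ}\circ\I_{0235}$, resp.\ $\aB^{n\circ}\circ\I_{0235}$, the partial maps $W_i^{(n)}\to W_{i+1}^{(n)}$ supplied by the previous step are pairwise compatible --- a longer tile restricts to a shorter one --- hence glue to a total map $L\colon W_i=\bigcup_n W_i^{(n)}\to W_{i+1}$. This disposes of $U_2$ (with $\B$) and $U_3$ (with $\aB$) simultaneously; for $U_1=(\B^{+\omega\circ}\circ\I_{0235}\circ\B^{-\omega\circ})^{\pm\omega\bullet}$ the extra factor $\B^{-\omega\circ}$ only prolongs a word on the side opposite to the columns of nonnegative index that form the $W_i$, so this case reduces to that of $U_2$ (equivalently, Corollary~\ref{corollar:vyvod_vniz_BaB}, item~\ref{corollar:vyvod_vniz_BaB2}, applies verbatim).

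The step I expect to be the main obstacle is the middle one: translating the $H$-table bookkeeping, and the ``top-row / bottom-row'' language of Corollary~\ref{corollar:vyvod_vniz_BaB}, into the ``column $i\to$ column $i+1$'' formulation, and verifying that the parity-coupling clauses remove \emph{all} of the ambiguity in \emph{every} column --- in particular for the $\gB$-type tiles of Lemma~\ref{lemma:aB0235}, which carry extra rows, and for the column of index $0$ adjacent to the $\I_{0235}$-tile, where the two halves of the $H$-table meet. By contrast the reduction to finite tiles, the gluing, and the split into the three cases $U_1$, $U_2$, $U_3$ should all be routine once that point is secured.
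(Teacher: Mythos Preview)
Your plan departs from the paper's argument in both direction and method, and the combinatorial reduction you outline has a real gap.

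First, the paper does \emph{not} argue via $H$-tables and finite-tile determinism. It gives an arithmetic proof: assuming $u[:,i]\neq u'[:,i]$ but $u[:,i+n]=u'[:,i+n]$ for all $n>0$, one has $|u[j,:]-u'[j,:]|<6^{i+1}$ for every row $j$; picking a row $j$ with $u[j,:]-u'[j,:]\neq 0$ and iterating the rule $u[j+1,:]=(3/2)\,u[j,:]$ forces $|u[j+n,:]-u'[j+n,:]|\to\infty$, a contradiction. The $\{0,2,3,5\}$ restriction (via Corollary~\ref{cor:VHTuples32_1c_0235} and a short case split on the $\I_{0235}$ transitions) is then used only to exclude the degenerate possibility that two distinct digit strings represent the same real number (the $\ldots 30{.}000\ldots$ versus $\ldots 25{.}555\ldots$ phenomenon). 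Note also that the paper is proving \emph{injectivity} of $L$ (distinct $i$-columns force distinct $(i+1)$-columns); your sketch is aimed at the converse implication.

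Second, the reduction you propose does not go through. The sets $\B^{n\circ}\circ\I_{0235}$ and $\aB^{n\circ}\circ\I_{0235}$ consist of \emph{two-row} tiles of width $n+1$; a ``column'' of such a tile is just a vertical pair of digits, and that pair does \emph{not} determine the neighbouring column. For instance, in the table of Lemma~\ref{lemma:VHTuples23_0235} the tiles $\left(\begin{smallmatrix}0&0&0\\0&0&0\end{smallmatrix}\right)$ and $\left(\begin{smallmatrix}1&2&0\\2&0&0\end{smallmatrix}\right)$ share the rightmost column $(0,0)$ but differ in the next one. More generally, the $H$-table Theorems~\ref{theorem:HtablepropertyB}--\ref{theorem:Htableproperty} and Corollary~\ref{corollar:vyvod_vniz_BaB} describe how one \emph{row} of a tile constrains the next \emph{row} (up to the leading digit); they say nothing about column-to-column determinism, so the translation you sketch in the middle paragraph is not available. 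The theorem is genuinely about \emph{infinite} columns, and its proof needs the global growth argument rather than a local tile check.
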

\begin{proof}
Теорему достаточно доказать случай $U = U_1$. Из него по следствию~\ref{corollary:u0235w_in_Univ} получим $U_2$ и $U_3$.
При этом достаточно доказать инъективность $L$.

Пусть $u, u'\in U_1$ и для некоторого $i\ge 0$ $i$-ые столбцы слов $u$ и $u'$ не равны, $u[:,i] \neq u'[:,i]$.  Предположим, что теорема не верна и $L(u[:,i])= L(u'[:,i])$, то есть $u[:,i+n]=u'[:,i+n]$, $n>0$. Но тогда $u[j,i:]=u'[j,i:]$ или, другими словами, рассматривая строки двумерных слов как числовые объекты в системе счисления по основанию $6$ с евклидовым расстоянием, $\left|u[j,:]-u'[j,:]\right|<6^{i+1}$ для любого $j\in \Zed$.

Пусть для некоторого $j\in\Zed$ $u[j,0]\neq u'[j,0]$. Не нарушая общности рассуждений положим $j=0$, а также $u[0,0] > u'[0,0]$. Если показать, что $u[0,:]-u'[0,:] \neq 0$, 
то при достаточно большом $n$ мы получим $\left|u[n,:]-u'[n,:]\right|=\left|\left(\frac{3}{2}\right)^n(u[0,:]-u'[0,:])\right|>6^{i+1}$, а это противоречит тому, что для всех $j\in \Zed$ должно выполняться равенство  $u[j,i:]=u'[j,i:]$. Отсюда будет следовать $L(u[:,i])\neq L(u'[:,i])$.

Итак, покажем, что $u[0,:]-u'[0,:] \neq 0$. Заметим, что из $u[0,:]-u'[0,:] = 0$ следует $u[j,:]-u'[j,:] = 0$ для любого $j\in\Zed$.
Из равенства чисел с разными цифрами в нулевом разряде необходимо следует одна из следующих ситуаций. Либо $u[0,0]=5$ и все символы справа от $u[0,0]$ равны $5$, то есть $u[0,:1]=55\ldots 5\ldots$, и $u'[0,:1] = 00\ldots 0\ldots$. Либо $u[0,:1]=30\ldots 0\ldots$ и  $u'[0,:1] = 25\ldots 5\ldots$.

Из следствия~\ref{cor:VHTuples32_1c_0235}, учитывая алфавит $\{0,2,3,5\}$ нулевого столбца, следует, что ситуация $u[0,:1]=30\ldots 0\ldots$ и $u'[0,:1] = 25\ldots 5\ldots$ невозможна.

Рассмотрим ситуацию $u[0,0]=5$, $u'[0,0] = 0$. 
Найдем наименьший положительный индекс $j\in\Zed$ такой, что  $u[j,0]\neq 5$ или $u'[j,0] \neq 0$. Если же такого $j$ нет, то все столбцы слева от $u[:,0]$ начиная с некоторого момента, как можно увидеть из леммы~\ref{lemma:U32}, состоят из $5$, а все столбцы слева от $u'[:,0]$, начиная с некоторого момента, состоят из $0$, то есть $L(u[:,i])\neq L(u'[:,i])$, и мы сразу приходим к противоречию с предположением $L(u[:,i]) = L(u'[:,i])$.

Предположим, что $u[j,0]\neq 5$ и $j>0$. Тогда $u[j,0] = 2$. Далее возможны только две ситуации: $u'[j,0] = 0$ или $u'[j,0] = 3$. В обоих случаях $u[j,0] - u'[j,0] \neq 0$.

Предположим, что $u'[j,0]\neq 0$ и $j>0$. Тогда $u[j,0] = 3$. Далее возможны только две ситуации: $u[j,0] = 5$ или $u[j,0] = 2$. В обоих случаях $u[j,0] - u'[j,0] \neq 0$.

Предположим, что $u[j,0]\neq 5$ и $j<0$. Тогда $u[j,0] = 3$. Далее возможны только две ситуации: $u'[j,0] = 0$ или $u'[j,0] = 2$. В обоих случаях $u[j,0] - u'[j,0] \neq 0$.

Предположим, что $u'[j,0]\neq 0$ и $j<0$. Тогда $u[j,0] = 2$. Далее возможны только две ситуации: $u[j,0] = 5$ или $u[j,0] = 3$. В обоих случаях $u[j,0] - u'[j,0] \neq 0$.
\qed\end{proof}

\section{Графы $n$-мерных $0235$-продукционных пар}
Зафиксируем натуральное $n>1$. Обозначим $\Sn_n=\aB^{(n-1)\circ}\circ\I_{0235}$. Множество всех различных строк в словах из $\Sn_n$ обозначим через $\A_n$. Дальше мы не будем указывать нижний индекс $n$ в обозначении $\Sn_n$ и $\A_n$, пока в этом не возникнет необходимость. Рассматриваем $\Sn$ как бинарное отношение на $\A$, то есть $w\Sn w'$ означает то же самое, что и $\begin{pmatrix}w\\w'\end{pmatrix}\in \Sn$.

Определим граф $G=(\A, \Sn)$ в котором $\A$~-- множество вершин, $\Sn$~-- множество дуг. В графе $G$ каждая вершина имеет ровно две входящие и две выходящие дуги. Вершину графа, состоящую из одних $0$, назовем нулевой вершиной. Вершины назовём двойственными, если они соответствуют двойственным словам. Через $G^{-1}$ обозначим граф, полученный из $G$ заменой направления всех дуг на обратные.
Поскольку $\tB^{(n-1)\circ}=(\tB^{(n-1)\circ})^{dual}$, то верна

\kurglemma\begin{lemma}[\kurglemmaa]
Взаимно-однозначное отображение $\phi:\A\rightarrow \A$ вершин графа $G$, связывающее двойственные вершины, является изоморфизмом графов $G \simeq \phi(G)$.
\end{lemma}

Очевидным образом, исходя из $H$-табличного представления продукционных пар и теоремы~\ref{theorem:Htableproperty}, верна следующая лемма.

\kurglemma\begin{lemma}[\kurglemmaa]\label{lemma:grafsstt}
Если для вершин $s$, $s'$, $t$ графа $G$ верно, что $(s,t)$ и $(s',t)$ являются дугами $G$, то вторые дуги, исходящие из $s$ и $s'$, также ведут в одну вершину, то есть существует $t'$, что $(s,t')$ и $(s',t')$ являются дугами.
\end{lemma}

Определим преобразование $\rho$ графа $G$, которое назовем свёрткой. 

\begin{definition}\label{definition:reduce}
Разобьём вершины графа $G$ на пары $\{s,s'\}$, исходящие дуги которых имеют общие концы. По этим парам построим новый граф такой, что из пары $\{s_0,s'_0\}$ дуга ведёт в $\{s_1,s'_1\}$ тогда и только тогда, когда в графе $G$ существует либо дуга $(s_0,s_1)$, либо $(s_0,s'_1)$ или, что то же самое, существует либо дуга $(s'_0,s_1)$, либо $(s'_0,s'_1)$. Другим словами, свёртка $\rho(G)$ получается из $G$ склейкой вершин $s$ и $s'$, у которых совпадают концы исходящих дуг.
\end{definition}

Вершины графа $\rho(G)$ представляем парами строк из $\A$. 
Пример графа $\rho(G)$ для $n=2$ показан на рисунке~\ref{GrafDiag22.pdf}.
\begin{figure}
\centering
\includegraphics[width=0.4\textwidth]{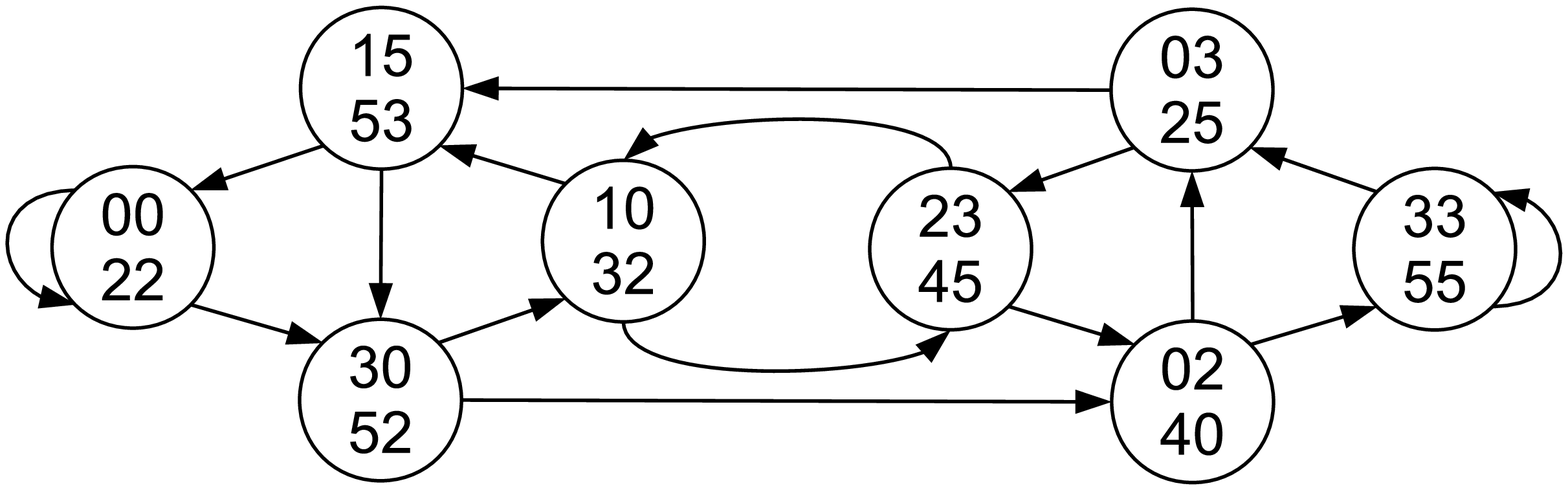}
\caption{Граф $\rho(G)$ для $n=2$.} 
\label{GrafDiag22.pdf}
\end{figure}

Лемма~\ref{lemma:grafsstt} верна и для графа $\rho(G)$, то есть к графу $\rho(G)$ применима операция свёртки $\rho^2(G)=\rho(\rho(G))$. Более того $\rho^2(G_{n+1})\simeq G_{n}$ и $G_{n}\simeq G^{-1}_{n}$, $n\ge 1$. Прежде чем это доказать, рассмотрим граф $G$ с другой точки зрения.

Исходя из леммы~\ref{lemma:I0235table} построим автомат, допускающий язык $\bigcup_{n=0}^{\infty}\I_{0235}^{n\bullet}\cup\{\lambda\}$ всех правильных столбцов в алфавите $\{0,2,3,5\}$. 
Автомат представлен  на рисунке~\ref{I_0235.eps} слева.
%
%
По этому языку и $n$ построим следующий граф $\Gamma_n$. Множеством его вершин является $\I_{0235}^{(2n-2)\bullet}$. 
Пусть $w=x_{2n-1}\ldots x_2x_1$, $w'=x'_{2n-1}\ldots x'_2x'_1\in\I_{0235}^{(2n-2)\bullet}$. Упорядоченная пара $(w,w')$ является дугой графа $\Gamma_n$ тогда и только тогда, когда $x_{2n-2}\ldots x_2x_1=x'_{2n-1}\ldots x'_2$. Из определения следует, что $\Gamma_n$ является графом де Брёйна.
В силу теоремы~\ref{theorem:FigureKurgPotap2018_triangle} верно
\kurgcorollary\begin{corollary}[\kurgcorollarya]\label{corollary:ggammaisomorph}
Графы $G_n$ и $\Gamma_n$ изоморфны, $n\geq 0$.
\end{corollary}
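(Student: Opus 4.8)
The plan is to realize the isomorphism $G_n\simeq\Gamma_n$ as the restriction of the piecewise-affine correspondence of Theorem~\ref{theorem:FigureKurgPotap2018_triangle}, so that it is proved by exhibiting a vertex bijection and checking that it respects incidence, rather than by any inductive argument on $n$. Throughout I take $n\ge 1$; for $n=0$ both graphs are, by the convention in the definitions, the single root vertex with its loop, so there is nothing to prove. Recall that the vertices of $G_n$ are the rows occurring in the tuples of $\Sn_n=\aB^{(n-1)\circ}\circ\I_{0235}$, i.e. the $n$-digit columns of the integer part of the $2$-dimensional words of $(\aB^{+\omega\circ}\circ\I_{0235})^{+\omega\bullet}$, while the vertices of $\Gamma_n$ are the length-$(2n-1)$ factors in $\I_{0235}^{(2n-2)\bullet}$.

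First I would define the vertex map $\psi\colon\A_n\to\I_{0235}^{(2n-2)\bullet}$. Given a column $w\in\A_n$, sitting inside some $2$-dimensional word $u\in(\aB^{+\omega\circ}\circ\I_{0235})^{+\omega\bullet}$, Theorem~\ref{theorem:FigureKurgPotap2018_triangle} assigns to the corresponding path in the triangle of Figure~\ref{FigureKurgPotap2018_triangle.eps} a word $b\ldots c\ldots d\in\I_{0235}^{(2n-2)\bullet}$; put $\psi(w)$ equal to this word. I would check that $\psi(w)$ depends only on the $n$-column $w$ and not on the ambient $u$, using Corollary~\ref{corollar:vyvod_vniz_BaB}, which pins down all but the most significant digit of a column from its neighbours and so forces the relevant window in the triangle to have length exactly $2n-1$. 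Injectivity of $\psi$ is then immediate from the fact that the triangle correspondence is one-to-one (Corollary~\ref{cor:1to1}), and surjectivity follows from Corollary~\ref{corollary:u0235w_in_Univ}: every factor of $\I_{0235}^{(2n-2)\bullet}$ occurs inside a two-way infinite word of $\Univ$ and hence is $\psi$ of some column. (A cardinality count via the corollary of Theorem~\ref{theorem:Htableproperty} gives an alternative check that $|\A_n|=|\I_{0235}^{(2n-2)\bullet}|$.)

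The heart of the matter is to show that $\psi$ carries the edges of $G_n$ bijectively onto the edges of $\Gamma_n$. An edge of $G_n$ is a tuple $\binom{w}{w'}\in\Sn_n$, and this records that in a $2$-dimensional $Z$-word the column $w'$ stands one step to the right of $w$, which on the level of reals is the passage $\xi\mapsto\tfrac{3}{2}\xi$. I would trace this passage through the triangle correspondence and verify that $\psi$ intertwines it with the de Bruijn shift: if $\psi(w)=x_{2n-1}\ldots x_1$ and $\psi(w')=x'_{2n-1}\ldots x'_1$ then $x_{2n-2}\ldots x_1=x'_{2n-1}\ldots x'_2$, which is exactly the adjacency defining $\Gamma_n$; conversely every de Bruijn edge of $\Gamma_n$ arises from such a column pair, again by Corollary~\ref{corollary:u0235w_in_Univ}. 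I expect this intertwining to be the only real obstacle: one must keep straight the three indexings involved -- the $n$ rows of an $H$-tuple, the $n$ digit positions of a column, and the $2n-1$ positions of the associated word -- and show that \emph{shift a column one step to the right} corresponds to \emph{delete the leading digit of the word and append a new least significant one}. Once this is in place, $\psi$ is a bijective graph homomorphism with bijective inverse, hence a graph isomorphism; as $n\in\Nat$ was arbitrary (with $n=0$ settled above), this proves $G_n\simeq\Gamma_n$.
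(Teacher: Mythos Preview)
Your proposal is essentially the paper's own argument, only spelled out in full: the paper states the corollary immediately after the sentence ``By Theorem~\ref{theorem:FigureKurgPotap2018_triangle} we have'' and gives no further proof, so the intended argument is precisely to read off the vertex bijection and the edge correspondence from the piecewise-affine triangle correspondence of that theorem, exactly as you do. One small inaccuracy worth fixing: an edge $\binom{w}{w'}\in\Sn_n=\aB^{(n-1)\circ}\circ\I_{0235}$ does not record that $w'$ stands \emph{one column to the right} of $w$, but rather one step in the $\aB$-direction (a diagonal/vertical step in the $2$-dimensional word); this is what makes the triangle of Figure~\ref{FigureKurgPotap2018_triangle.eps} slide by exactly one row and hence the associated $(2n-1)$-word shift by one symbol, giving the de~Bruijn adjacency in~$\Gamma_n$.
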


Для графа $\Gamma_n$ очевиден следующий переход к $\Gamma_{n+1}$ за два шага. На первом шаге  строится промежуточный граф $\Gamma_{n+\frac{1}{2}}$. Его вершинами являются дуги графа $\Gamma_n$. Пусть $w$, $w'$ вершины графа $\Gamma_{n+\frac{1}{2}}$. Пара $(w,w')$ является дугой графа $\Gamma_{n+\frac{1}{2}}$ тогда и только тогда, когда в графе $\Gamma_n$ конец дуги $w$ является началом дуги $w'$. 
%
\begin{figure}
\centering
\includegraphics[width=0.8\textwidth]{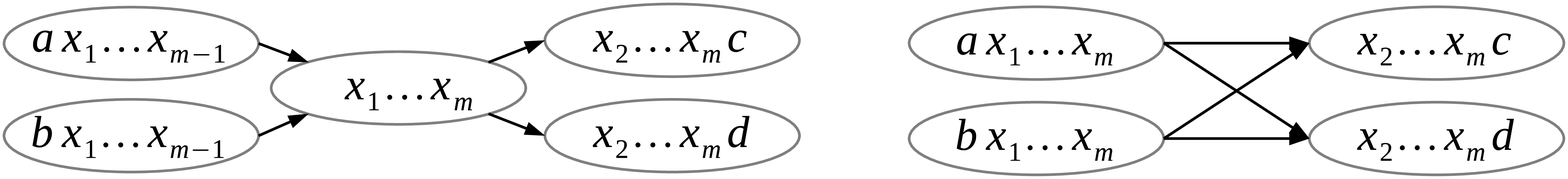}
\caption{Фрагмент перехода от графа $\Gamma_n$ (слева) к графу $\Gamma_{n+\frac{1}{2}}$ (справа), $m=2n-1$.}
\label{gammaG.eps}
\end{figure}
На рисунке~\ref{gammaG.eps} показан фрагмент перехода от графа $\Gamma_n$ к  $\Gamma_{n+\frac{1}{2}}$. Видно, что вершины $ax_1\ldots x_n$ и $bx_1\ldots x_n$ удовлетворяют свойствам, сформулированным в лемме~\ref{lemma:grafsstt}. Склейка вершин 
вида $ax_1\ldots x_n$ и $bx_1\ldots x_n$ в вершину $x_1\ldots x_n$ является операцией свёртки. Таким образом верна теорема.

\kurgtheorem\begin{theorem}[\kurgtheorema]\label{theorem:reduce}
Верно, что $\rho^2(G_n)\simeq G_{n-1}$, $n>1$.
\end{theorem}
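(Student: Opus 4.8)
The plan is to realise the double contraction as two successive ``half-steps'' that each undo the passage to the intermediate graph (the line-graph / code-extension operation), and then to transport the conclusion through the isomorphisms $G_n\simeq\Gamma_n$. First I would record that $\Gamma_{m+\frac{1}{2}}$ is, by its very definition, the line graph of $\Gamma_m$: its vertices are the arcs of $\Gamma_m$ — equivalently, the admissible words over $\{0,2,3,5\}$ that are one letter longer than the vertex-words of $\Gamma_m$ — and $(w,w')$ is an arc precisely when the head of $w$ coincides with the tail of $w'$, i.e.\ when $w$ and $w'$ overlap maximally. Since, by Corollary~\ref{corollary:ggammaisomorph} together with the fact that every $\Gamma_n$ is a de Bruijn graph, the out-arc leaving a vertex $ax_1\ldots x_k$ of $\Gamma_{m+\frac{1}{2}}$ depends only on the suffix $x_1\ldots x_k$, the vertices $ax_1\ldots x_k$ and $bx_1\ldots x_k$ that differ only in the first letter have literally the same set of out-neighbours. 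By Lemma~\ref{lemma:grafsstt} this is exactly the configuration that Definition~\ref{definition:reduce} allows the operator $\rho$ to collapse; carrying out the collapse relabels each such pair by $x_1\ldots x_k$ and, as Figure~\ref{gammaG.eps} shows, turns the incidence relation back into the overlap relation on the shorter words. Hence $\rho(\Gamma_{m+\frac{1}{2}})\simeq\Gamma_m$.

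Second, the passage from $\Gamma_{m+\frac{1}{2}}$ to $\Gamma_{m+1}$ is of exactly the same nature (another code-extension / line-graph step), so the identical argument yields $\rho(\Gamma_{m+1})\simeq\Gamma_{m+\frac{1}{2}}$. As noted after Definition~\ref{definition:reduce}, $\rho(G)$ again satisfies the hypotheses of Lemma~\ref{lemma:grafsstt}, which licenses applying $\rho$ a second time, and $\rho$ is defined purely from the incidence structure of a graph, so it commutes with graph isomorphism (matching up, if necessary, the two possible orientations via $G_n\simeq G_n^{-1}$). Taking $m=n-1$ and composing the two half-steps therefore gives $\rho^{2}(\Gamma_n)\simeq\rho(\Gamma_{n-\frac{1}{2}})\simeq\Gamma_{n-1}$; the hypothesis $n>1$ ensures $n-1\ge 1$, so every graph in this chain is in range. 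Feeding in $G_n\simeq\Gamma_n$ and $G_{n-1}\simeq\Gamma_{n-1}$ from Corollary~\ref{corollary:ggammaisomorph} then yields $\rho^{2}(G_n)\simeq\rho^{2}(\Gamma_n)\simeq\Gamma_{n-1}\simeq G_{n-1}$, which is the assertion.

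The step I expect to be the real obstacle is the rigorous version of $\rho(\Gamma_{m+\frac{1}{2}})\simeq\Gamma_m$, and in particular making $\rho$ genuinely well defined on these graphs: one must check that the relation ``having the same set of out-neighbours'' partitions the vertices of $\Gamma_{m+\frac{1}{2}}$ into blocks of size exactly two, so that the pair-merging and the accompanying edge-identification of Definition~\ref{definition:reduce} are unambiguous, and that the overlap language $\bigcup_k\I_{0235}^{k\bullet}$ is closed under precisely these moves, so that the contracted graph is again one of the $\Gamma$'s. Both points are bought by the structural description of $\aB^{(n-1)\circ}\circ\I_{0235}$ — the $H$-table analysis of Theorem~\ref{theorem:Htableproperty} (and its counterpart Theorem~\ref{theorem:HtablepropertyB}), which exhibits the arcs as pairs differing in a single extremal letter — so once these are invoked the remaining work is bookkeeping rather than a new idea.
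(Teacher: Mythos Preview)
Your proposal is correct and follows essentially the same route as the paper: the paper also passes to the de~Bruijn graphs $\Gamma_n$ via Corollary~\ref{corollary:ggammaisomorph}, introduces the intermediate line-graph $\Gamma_{n+\frac{1}{2}}$, and observes from Figure~\ref{gammaG.eps} that the pair $ax_1\ldots x_m$, $bx_1\ldots x_m$ satisfies Lemma~\ref{lemma:grafsstt} so that $\rho$ collapses each half-step. Your write-up is considerably more explicit about the well-definedness of $\rho$ and the two-block partition, but the underlying argument is the same.
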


В силу того, что графы $G_1$ и $G^{-1}_1$ изоморфны и построение графа $G_i$ из $G_{i-1}$ зависит только от структуры графа $G_{i-1}$ и не зависит от имён вершин, верно следующее утверждение:

\kurgtheorem\begin{theorem}[\kurgtheorema]
Графы $G_n$ и $G^{-1}_n$ изоморфны.
\end{theorem}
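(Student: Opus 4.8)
The plan is to argue by induction on $n$, exactly along the lines indicated in the paragraph preceding the statement: establish the base case $G_1\simeq G_1^{-1}$ by direct inspection, and then show that the passage from $G_{n-1}$ to $G_n$ is a purely structural (choice-free, label-preserving) graph construction that commutes with reversal of arcs, so that the self-duality $G\simeq G^{-1}$ is inherited at every level.

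For the base case, note that $\Sn_1=\aB^{0\circ}\circ\I_{0235}=\I_{0235}$, so $G_1$ can be read off directly from Lemma~\ref{lemma:I0235table}; one writes down $G_1$ and its reversal $G_1^{-1}$ and exhibits an explicit label-preserving vertex bijection. Equivalently, by Corollary~\ref{corollary:ggammaisomorph} one may replace $G_1$ by $\Gamma_1$, whose vertices are the short factors over $\{0,2,3,5\}$ of the language $\I_{0235}^{\bullet}$ with the de Bruijn adjacency; since this finite set of words is closed under reversal (a consequence of the left--right symmetry behind Theorem~\ref{theorem:FigureKurgPotap2018_triangle}), the word-reversal map $x_1x_2\ldots\mapsto\ldots x_2x_1$ is an isomorphism $\Gamma_1\to\Gamma_1^{-1}$. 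Either way this is a finite verification.

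For the inductive step, assume $G_{n-1}\simeq G_{n-1}^{-1}$; by Corollary~\ref{corollary:ggammaisomorph} it suffices to prove $\Gamma_n\simeq\Gamma_n^{-1}$. Recall from the construction preceding Theorem~\ref{theorem:reduce} that $\Gamma_n$ is obtained from $\Gamma_{n-1}$ by two successive ``arc-to-vertex'' steps $\Gamma_{n-1}\to\Gamma_{n-\frac{1}{2}}\to\Gamma_n$, each step replacing a graph $H$ by the graph $L(H)$ whose vertices are the arcs of $H$ and in which $(w,w')$ is an arc precisely when the head of $w$ coincides with the tail of $w'$ in $H$ (this is how $\Gamma_{m+\frac{1}{2}}$ is defined from $\Gamma_m$, and the same step leads from $\Gamma_{m+\frac{1}{2}}$ to $\Gamma_{m+1}$; the fact that this reproduces $\Gamma_{m+1}$ with nothing lost is exactly what Corollary~\ref{corollary:ggammaisomorph} secures). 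Write $C=L\circ L$, so that $\Gamma_n\simeq C(\Gamma_{n-1})$. Only two properties of $L$, hence of $C$, are needed. First, $L$ refers solely to the incidence data of $H$ and preserves arc labels, so $L(H)\simeq L(H')$ whenever $H\simeq H'$. Second, $L$ commutes with arc reversal: an arc $(w,w')$ lies in $L(H^{-1})$ iff the head of $w$ equals the tail of $w'$ in $H^{-1}$, i.e.\ iff the tail of $w$ equals the head of $w'$ in $H$, i.e.\ iff $(w',w)$ is an arc of $L(H)$; thus $L(H^{-1})=L(H)^{-1}$ under the identification of their common vertex set, and therefore $C(H^{-1})\simeq C(H)^{-1}$. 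Combining these facts with the inductive hypothesis,
\[
\Gamma_n\ \simeq\ C(\Gamma_{n-1})\ \simeq\ C(G_{n-1})\ \simeq\ C\bigl(G_{n-1}^{-1}\bigr)\ \simeq\ C(G_{n-1})^{-1}\ \simeq\ \Gamma_n^{-1},
\]
and hence $G_n\simeq\Gamma_n\simeq\Gamma_n^{-1}\simeq G_n^{-1}$, which closes the induction.

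The routine part above --- functoriality of $C$ and its commuting with reversal --- is purely formal. I expect the genuine obstacle to be the identification $\Gamma_n\simeq C(\Gamma_{n-1})$: one must know that $\Gamma_n$ really is the \emph{full} two-step arc-to-vertex graph over $\Gamma_{n-1}$, with no vertices or arcs omitted. Concretely this is the assertion that a word over $\{0,2,3,5\}$ belongs to $\I_{0235}^{\bullet}$ as soon as the one-symbol-shorter blocks from which it is assembled do --- i.e.\ that $\I_{0235}^{\bullet}$ is a shift of finite type whose defining constraints are already visible at the word lengths occurring for $\Gamma_{n-1}$, $\Gamma_{n-\frac{1}{2}}$, $\Gamma_n$. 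This property, together with the reversal-closedness used in the base case, is precisely what is packaged into Corollary~\ref{corollary:ggammaisomorph} (via Theorem~\ref{theorem:FigureKurgPotap2018_triangle} and the left-derivation lemma) and also underlies Theorem~\ref{theorem:reduce}; so in the write-up the care needed is to invoke these correctly and to keep the head/tail orientation consistent with the definition of $\Gamma_{m+\frac{1}{2}}$ from $\Gamma_m$.
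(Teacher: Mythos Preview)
Your proposal is correct and follows essentially the same approach as the paper: the paper's entire argument is the single sentence preceding the theorem, namely that $G_1\simeq G_1^{-1}$ and that the passage $G_{i-1}\mapsto G_i$ depends only on the graph structure (not on labels), hence preserves self-duality by induction. You have simply made this explicit by identifying the passage with two line-graph steps $L\circ L$ via $\Gamma_n\simeq G_n$ and verifying $L(H^{-1})=L(H)^{-1}$; the caveat you flag about $\Gamma_n\simeq C(\Gamma_{n-1})$ being the \emph{full} line graph is indeed the substantive point, and it is covered by Corollary~\ref{corollary:ggammaisomorph} together with the de Bruijn description of $\Gamma_n$.
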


\section{Автоматы $n$-мерных $0235$-продукционных пар}
По графу $G$ определим автомат $A=(\A, \Zed_6, \delta_A)$, где $\A$~-- множество состояний, $\Zed_6$~-- входной алфавит, $\delta_{A}:\A\times \Zed_6\rightarrow \A$~--  функция переходов такая, что для любых $s,s'\in \A$ и $x\in\Zed_6$ $\delta_{A}(s,x)=s'$ тогда и только тогда, когда $(s,s')$ является дугой графа $G$ и $s'=xx_2\ldots x_n$.
Язык без пустого слова, допускаемый состоянием $s\in \A$ автомата $A$, когда все состояния заключительные, обозначим через $\lambda_{A}(s)$. Через $\rho(A)$ обозначим автомат, полученный из $A$ свёрткой. Заметим, что склеиваемые состояния при свёртке имеют в автомате $A$ одинаковые значения функции $\delta_A$.

\kurgtheorem
\begin{theorem}[\kurgtheorema]\label{theorem:basicA_n}
$s=s'$, $s,s'\in \rho(A)$, если, и только если, $\lambda_{\rho(A)}(s) = \lambda_{\rho(A)}(s')$.
\end{theorem}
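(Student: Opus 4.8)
The forward implication is immediate: $\lambda_{\rho(A)}$ is a single-valued function on the states of $\rho(A)$, so $s=s'$ at once gives $\lambda_{\rho(A)}(s)=\lambda_{\rho(A)}(s')$. The whole content lies in the converse, i.e.\ in showing that the output already separates the states of $\rho(A)$; I outline the route I would take.

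First I would unfold the construction. By Definition~\ref{definition:reduce} a state of $\rho(A)$ is a merge $\{t,t'\}$ of two states of $G=G_n$ joined by an undirected arc, and, as noted just after Lemma~\ref{lemma:grafsstt}, that lemma is inherited by $\rho(G)$; hence the arc relation, the transition map and the output of $\rho(A)$ descend consistently to the merged vertices, and the claim becomes a statement purely about $\rho(G)$: two merged vertices carrying the same label are the same merged vertex. I would then argue by induction on $n$, taking as base the case $n=2$, where $\rho(G)$ is the graph drawn on Figure~\ref{GrafDiag22.pdf} and the assertion is checked by direct inspection, and as induction engine the isomorphism $\rho^{2}(G_{n})\simeq G_{n-1}$ of Theorem~\ref{theorem:reduce} together with the isomorphisms $G_n\simeq G_n^{-1}$ established above.

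The structural input for the inductive step comes from the $H$-tables. By Theorem~\ref{theorem:HtablepropertyB} (for $\B$) and Theorem~\ref{theorem:Htableproperty} (for $\aB$) each $H$-table of $\B^{(n-1)\circ}\circ\I_{0235}$, resp.\ $\aB^{(n-1)\circ}\circ\I_{0235}$, carries exactly two words in every row, and the parity coupling recorded there ($x_i,x_i'\in\{1,3,5\}$ forcing $y_i,y_i'\in\{1,2,4,5\}$, and $x_i,x_i'\in\{0,2,4\}$ forcing $y_i,y_i'\in\{0,1,3,4\}$) means that once the relevant output digit is fixed the arc entering a vertex is confined to a two-element set --- precisely the pair that the operation of Definition~\ref{definition:reduce} collapses. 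Thus one reduction step already identifies any two states the output fails to separate. To pass from ``indistinguishable by the last $H$-table datum'' to ``equal as states of $\rho(A)$'' I would invoke the confluence in Lemma~\ref{lemma:grafsstt}: states of $\rho(A)$ with equal label coincide after the merge, and, the lemma being stable under $\rho$, iterating it forces their whole forward behaviour in $\rho(G)$ to agree; since distinct vertices of $\rho(G)$ have distinct forward behaviour --- the de~Bruijn-type separation transported along $\rho^{2}(G_{n})\simeq G_{n-1}$ down to the base graph --- the two states must coincide.

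The step I expect to be the real obstacle is the bookkeeping of the output through the merge: one must verify that the pairs merged by $\rho$ are exactly the fibres of the output at the level where the $H$-table data become one-dimensional --- so that no two output-equal states survive as distinct vertices --- and, conversely, that $\rho$ never merges two states of different output. This is exactly what the ``two words per row'' structure of Theorems~\ref{theorem:HtablepropertyB} and~\ref{theorem:Htableproperty} is there to provide, so the argument finally reduces to matching that combinatorial data against Definition~\ref{definition:reduce}, row by row along the $H$-tables.
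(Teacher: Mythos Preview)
The paper's proof is a single sentence: it cites Theorem~\ref{theorem:tBbasic0235p} (which is the same as Theorem~\ref{theorem:basic0235p}) and nothing else. That theorem establishes that the column map $L:W_i\to W_{i+1}$ is a bijection on the relevant sets of infinite columns; injectivity of $L$ is exactly the statement that two distinct columns cannot have the same image, which translates immediately into the separation of states of $\rho(A)$ by $\lambda_{\rho(A)}$. All the analytic work has already been done in the proof of Theorem~\ref{theorem:basic0235p} (the argument there with $|u[n,:]-u'[n,:]|=(3/2)^n|u[0,:]-u'[0,:]|$ growing without bound), and the present theorem is harvested as a corollary.

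Your route is genuinely different: you try to stay on the combinatorial side, using the $H$-table parity constraints of Theorems~\ref{theorem:HtablepropertyB} and~\ref{theorem:Htableproperty}, Lemma~\ref{lemma:grafsstt}, and an induction on $n$ through $\rho^{2}(G_n)\simeq G_{n-1}$. The idea is reasonable --- ultimately $G_n$ is a de~Bruijn graph via Corollary~\ref{corollary:ggammaisomorph}, and in such graphs vertices are separated by their forward behaviour --- but as written the induction does not close. Your hypothesis concerns $\rho(A_{n-1})$, while $\rho^{2}(G_n)\simeq G_{n-1}$ identifies $G_{n-1}$ with $\rho^{2}(G_n)$, not $\rho(G_{n-1})$ with $\rho(G_n)$; transporting the statement across that isomorphism yields a claim about $\rho^{3}(G_n)$, one reduction too deep. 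You would need an extra lemma lifting separation from $\rho^{2}(G_n)$ back to $\rho(G_n)$, and the ``bookkeeping'' you flag --- that the fibres of the output coincide with the pairs merged by $\rho$ --- is precisely that missing step, not a routine check. The paper sidesteps all of this by appealing to the bijectivity of $L$, which was proved independently of the graph-reduction machinery.
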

\begin{proof}
Cледует из теоремы~\ref{theorem:tBbasic0235p}.
\qed
\end{proof}

Построим по автомату $A$ эквивалентный ему по допускаемому языку минимальный инициальный (с одним начальным состоянием) детерминированный автомат $D$. Ядром $K$ автомата $D$ назовем его подавтомат, образованный состояниями, принадлежащими сильно-связным компонентам графа автомата $D$, то есть в ядро входят состояния, через которые проходят циклические пути.

\kurglemma\begin{lemma}[\kurglemmaa]
Ядро $K$ и автомат $A$ изоморфны.
\end{lemma}
\begin{proof}
Оба автомата минимальны и допускают одно и то же множество языков.
\qed
\end{proof}

Обозначим через $\eta_n$ минимальное положительное целое такое, что для всех слов $w\in\Zed_6^*$ длины $\eta_n$ в автомате $D$  любой путь длины больше либо равной $\eta_n$ ведёт из начальной вершины в ядро. Обозначим через $\theta_n$ минимальное положительное целое такое, слово $0^{\theta_n}$ в автомате $D$ ведёт из начальной вершины в ядро.

\kurglemma
\begin{lemma}[\kurglemmaa]
Пусть $w_1$, $w_2\in (\tB^{(n-1)\circ}\circ\I_{0235})^{m\bullet}$, $m\in\Nat$, такие, что $w_1[:,n-1] = w_2[:,n-1]$ и слово $w_1[:,n-1]$ ведет в автомате $D$ из начального состояния в ядро. Тогда $w_1=w_2$.
\end{lemma}
\begin{proof}
Предположим $w_1[:,n-2]\neq w_2[:,n-2]$. Обозначим $\tilde{a}' = w_1[:,n-1]$. 
Пусть $s_0$~-- начальное состояние автомата $D$, а $t$~-- состояние ядра такое, что $\delta_{D}(s_0,\tilde{a}) = t$.
Пусть $\tilde{a}''$ произвольное бесконечное слово, порождаемое состоянием $t$. Тогда конкатенация $\tilde{a}$ слов $\tilde{a}'$ и $\tilde{a}''$ принадлежит $(\tB^{(n-1)\circ}\circ\I_{0235})^{+\omega\bullet}[:,n-1]$ и существуют два слова $w'_1$, $w''_2\in (\tB^{(n-1)\circ}\circ\I_{0235})^{+\omega\bullet}$ с одинаковыми крайними левыми столбцами $\tilde{a}$, рядом с которыми в $w'_1$ и $w''_2$ справа стоят разные столбцы, начинающиеся на $w_1[:,n-2]$ и $w_2[:,n-2]$ соответственно, что противоречит теореме~\ref{theorem:basic0235p}.
\end{proof}

\kurgcorollary
\begin{corollary}[\kurgcorollarya]
Если $w_1$, $w_2\in (\tB^{(n-1)\circ}\circ\I_{0235})^{(\eta_n-1)\bullet}$ и $w_1[:,n-1] = w_2[:,n-1]$, то $w_1=w_2$.
\end{corollary}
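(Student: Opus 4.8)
The plan is to read this off from the lemma immediately above. That lemma draws the same conclusion $w_1=w_2$ under the extra hypothesis that the column word $w_1[:,n-1]$ drives the automaton $D$ from its initial state into a cycle, so it suffices to show that for words taken from $(\tB^{(n-1)\circ}\circ\I_{0235})^{(\eta_n-1)\bullet}$ this extra hypothesis is automatic.

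The first step is to recall the definition of $\eta_n$: it is the least $N$ for which every word $w\in\Zed_6^*$ of length $N$ carries $D$ from its initial state into a cycle. Since $D$ is deterministic, a run that has entered a cycle remains in the recurrent part, so the same holds for every word of length at least $\eta_n$; this monotonicity is the only auxiliary fact needed.

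The second step is to observe that the exponent $\eta_n-1$ is chosen precisely so that, for $w_1\in(\tB^{(n-1)\circ}\circ\I_{0235})^{(\eta_n-1)\bullet}$, the column $w_1[:,n-1]$ read off entry by entry is a word over $\Zed_6$ of length at least $\eta_n$. By the first step this word drives $D$ from the initial state into a cycle. Since moreover $w_1[:,n-1]=w_2[:,n-1]$ and both $w_1,w_2$ lie in $(\tB^{(n-1)\circ}\circ\I_{0235})^{m\bullet}$ with $m=\eta_n-1\in\Nat$, all hypotheses of the preceding lemma hold with this $m$, and it gives $w_1=w_2$.

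The one place where care is needed --- and the likely source of an off-by-one --- is the bookkeeping in the second step: verifying that taking the $(\eta_n-1)\bullet$-power really forces the column $w_1[:,n-1]$ to have length at least $\eta_n$ (rather than $\eta_n-1$), and that the definition of $\eta_n$ is intended with ``length at least $\eta_n$'' in place of ``length exactly $\eta_n$''. Once this is pinned down, the corollary is an immediate specialization of the preceding lemma, which itself rests on Theorem~\ref{theorem:basic0235p}.
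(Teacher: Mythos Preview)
Your approach is exactly what the paper intends: the corollary is stated without proof and is meant to be an immediate specialization of the preceding lemma via the definition of $\eta_n$. Your only hesitation, the off-by-one, is resolved by the paper's convention $\I_{0235}^{0\bullet}=\{0,2,3,5\}$ and $\I_{0235}^{1\bullet}=\I_{0235}$, so the $m\bullet$-power has $m+1$ rows and the column $w_1[:,n-1]$ in the $(\eta_n-1)\bullet$-power indeed has length $\eta_n$; you can drop the hedging in your last paragraph.
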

\kurgcorollary
\begin{corollary}[\kurgcorollarya]\label{cor:theta}
Если $w\in (\tB^{(n-1)\circ}\circ\I_{0235})^{(\theta_n-1)\bullet}$ и столбец $w[:,n-1]$ состоит из нулей, то всё слово $w$ состоит из нулей.
\end{corollary}

\section{Достаточные условия пустоты множества $Z$-чисел}

Пусть $\xi\in\Real$ произвольное $Z$-число. Рассмотрим его как элемент языка $\Zed_6^{\pm\omega}$, то есть как бесконечную в обе стороны (влево и вправо) последовательность цифр с разделяющим на целую и дробную части знаком. Отсутствующие цифры заполним нулями. Бесконечная в обе стороны (вверх и вниз) последовательность чисел $\{\xi(3/2)^{i}|i\in\Zed^{\pm}\}$, записанная друг под другом так, что цифры одинаковых разрядов находятся в одном столбце, образует слово $u\in\Univ$. Такие $2$-мерные слова $u$ назовём вещественными $2$-мерными $Z$-словами. Вещественное $2$-мерное $Z$-слово $u\in\Univ$ характеризуется тем, что 1) слово $u[0:,-1]$ состоит из символов алфавита $\{0,1,2\}$ и 2) (условие вещественности) любая строка $u[i,:]$, $i\in\Zed^{\pm}$, начиная с некоторого места влево состоит из одних нулей. Если условие 2) не выполняется, то мы будем говорить не о вещественном $Z$-слове, а просто о $Z$-слове.

Продукционные пары, крайние правые столбцы которых не содержат символов $3$, $4$ и $5$, назовём $012$-продукционными парами. Обозначим множество $1$-мерных $012$-продукционных пар через $\I_{012}$.
\kurglemma
\begin{lemma}[\kurglemmaa]
Множество вертикально нетупиковых слов в $\aB\circ\I_{012}$ имеет вид:
{\scriptsize
\[
\begin{aligned}
\begin{bmatrix}
3&2\\
\hline
2&0\\
2&1
\end{bmatrix}
\begin{bmatrix}
0&0\\
2&0\\
\hline
0&0
\end{bmatrix}
\begin{bmatrix}
0&1\\
2&1\\
\hline
3&2
\end{bmatrix}
\begin{bmatrix}
0&0\\
0&1\\
2&0\\
2&1\\
\hline
0&1
\end{bmatrix}
\end{aligned}
\]
}
\end{lemma}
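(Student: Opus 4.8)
The plan is to obtain the displayed value of $\aB\circ\I_{012}$ by a direct finite computation, in exact parallel with the derivation of $\aB\circ\I_{0235}$ in Lemma~\ref{lemma:aB0235}. The first step is to pin down the set $\I_{012}$ itself, that is, the $1$-dimensional $012$-inadmissible words: those minimal configurations over $\{0,1,2\}$ that cannot be completed to an element of $\Univ$ whose first fractional column takes values in $\{0,1,2\}$. As in the proof of Lemma~\ref{lemma:I0235table}, this is a bounded case analysis --- one runs through the left, right, up and down extensions of short words over $\{0,1,2\}$ together with one step of multiplication by $3/2$ performed in base $6$, and keeps exactly those tuple-shaped configurations all of whose extensions fail --- and it presents $\I_{012}$ as a short explicit list of vertical tuples of the same format (a block above the rule and a block below it) as $\I_{0235}$.

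The second step is to apply the operation $\aB$ to each tuple of $\I_{012}$. Recall from Lemma~\ref{lemma:aB0235} that $\aB$ lengthens every tuple by one digit position; the newly exposed position is filled by enumerating the digit values compatible with one further step of the column computation, and this enumeration must carry along both the borrow of $3$ that appears after an odd digit when dividing by $2$ in base $6$ and the carry that appears when multiplying by $3$. Each jointly compatible choice produces one widened tuple; collecting them and putting the result in canonical tuple form yields the four tuples in the statement. The uneven heights of the blocks above the rule ($1$, then $2$, $2$ and $4$) are exactly the trace of where the borrow/carry interaction forces a tuple to split and where several possibilities are instead gathered into one tuple.

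The final step is to check that the list is neither too large nor too small: \emph{soundness}, that each of the four tuples is realized by a genuine partial configuration of $\Univ$, which is verified by exhibiting a completion in each case; and \emph{completeness}, that the enumeration over the single new digit position is exhausted by these four outcomes, where the injectivity statement of Theorem~\ref{theorem:basic0235p} is available if one needs to rule out spurious branches. Both are finite verifications of the same kind as in Lemmas~\ref{lemma:aB0235} and~\ref{lemma:VHTuples23_0235}. The one place where something beyond routine table-chasing is needed --- and the step I expect to be the main obstacle --- is precisely the simultaneous bookkeeping of the division-by-$2$ borrow and the multiplication-by-$3$ carry over the restricted alphabet $\{0,1,2\}$, since that is what makes one of the four tuples fat (four rows above the rule) while the other three stay thin.
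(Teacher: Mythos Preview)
The paper states this lemma with no proof at all; like Lemmas~\ref{lemma:I0235table}, \ref{lemma:aB0235} and \ref{lemma:VHTuples23_0235}, it is presented as the output of a finite mechanical computation and left to the reader. Your plan --- first pin down $\I_{012}$ by the analogue of Lemma~\ref{lemma:I0235table}, then apply $\aB$ once and tabulate the results --- is exactly the intended route, so in that sense the proposal matches the paper.

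One genuine correction, though: your reading of the $H$-tables is inverted. The blocks in $\I_{0235}$, $\I_{012}$, $\aB\circ\I_{0235}$, etc., are not ``minimal configurations that \emph{cannot} be completed''; they are the \emph{admissible} local transition tables --- the rows above the rule list the possible inputs to a column position, and the rows below list the compatible outputs one step down, subject to the alphabet restriction on the distinguished column. This is why $\I_{0235}$ composes on the right of $\aB^{(n-1)\circ}$ to \emph{generate} valid words (Theorem~\ref{theorem:Htableproperty}, Corollary~\ref{corollary:u0235w_in_Univ}), not to forbid them. Consequently your ``soundness'' and ``completeness'' checks are stated the wrong way round: soundness is that every row pair listed really occurs as a column of some word in $\Univ$ with the $\{0,1,2\}$ constraint on the designated column, and completeness is that no admissible pair has been omitted. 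The invocation of Theorem~\ref{theorem:basic0235p} is also out of place here --- that injectivity result concerns the $0235$ restriction, not the $012$ one, and in any case the verification of a single $\aB$-step is a two-digit carry/borrow check that needs no global injectivity. Once you reorient the interpretation, the rest of your bookkeeping (tracking the division-by-$2$ borrow against the multiplication-by-$3$ carry over $\{0,1,2\}$) is correct, and it does explain why one of the four blocks is taller than the others.
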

\kurgcorollary
\begin{corollary}[\kurgcorollarya]
Если $\{\{\xi(3/2)^{i}\}|i\in\Zed^{+}\}\subseteq [0,1/2)$, то для $\zeta=\xi\cdot 10^{-1}$ верно $\{\{\zeta(3/2)^{i}\}|i\in\Zed^{+}\}\subseteq [0,1/6)\cup[1/3,2/3)$.
\end{corollary}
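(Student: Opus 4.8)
The plan is to turn the conclusion into a statement about a single base-$6$ digit and then read that digit off from the lemma on $\aB\circ\I_{012}$ just proved.

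First I would set up the base-$6$ dictionary. Fix $i\in\Zed^+$ and write $\xi\left(\frac{3}{2}\right)^i=N_i+F_i$ with $N_i=\left\lfloor\xi\left(\frac{3}{2}\right)^i\right\rfloor$ and $F_i=\left\{\xi\left(\frac{3}{2}\right)^i\right\}\in[0,1/2)$. Since $10^{-1}=1/6$ in base $6$, we have $\zeta\left(\frac{3}{2}\right)^i=\xi\left(\frac{3}{2}\right)^i/6$, and, putting $r_i:=N_i\bmod 6\in\{0,\dots,5\}$,
\[
\left\{\zeta\left(\frac{3}{2}\right)^i\right\}=\frac{r_i+F_i}{6}\in\left[\frac{r_i}{6},\,\frac{r_i+1}{6}\right),
\]
the right endpoint being legitimate because $0\le r_i+F_i<6$. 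So it suffices to prove $r_i\in\{0,2,3\}$ for every $i\in\Zed^+$: then $\left\{\zeta\left(\frac{3}{2}\right)^i\right\}$ lies in $[0,1/6)$, in $[1/3,1/2)$, or in $[1/2,2/3)$, each contained in $[0,1/6)\cup[1/3,2/3)$. Equivalently, $r_i$ is exactly the first fractional base-$6$ digit of $\zeta\left(\frac{3}{2}\right)^i$ (its second fractional digit being the first fractional digit of $\xi\left(\frac{3}{2}\right)^i$), and the claim is that this digit is never $1$, $4$ or $5$.

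Next I would invoke the structure of $Z$-words. Because $\xi$ is a $Z$-number, its $2$-dimensional representation $u\in\Univ$ has its first fractional column over the alphabet $\{0,1,2\}$ --- the defining feature of $2$-dimensional $Z$-numbers recalled above. Hence, for a row $i\in\Zed^+$, the vertical window of $u$ spanning the last integer column together with the first fractional column is $012$-admissible on the fractional side, so it occurs among the four tuples displayed for $\aB\circ\I_{012}$ in the lemma above. In each of those tuples the entries of the integer-side column form a subset of $\{0,2,3\}$, and the entry of $u$ in that column at row $i$ is precisely $r_i$; hence $r_i\in\{0,2,3\}$, as required.

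The only substantive step is this last one: certifying that every admissible (last integer column, first fractional column) window of a $Z$-word appears in $\aB\circ\I_{012}$ --- which is exactly the preceding lemma, itself resting on the de Bruijn / $H$-table description of the iterates $\aB^{(n-1)\circ}\circ\I_{0235}$ from the earlier sections. If one tries to argue the digit statement by hand instead, $0235$-admissibility of the integer columns already yields $r_i\in\{0,2,3,5\}$ and the parity of $r_i$ is forced by the first fractional digit of $\xi\left(\frac{3}{2}\right)^i$; the real obstacle is excluding $r_i=5$, and that requires tracking the carry that crosses the radix point between rows $i$ and $i+1$ --- precisely the bookkeeping packaged into the lemma.
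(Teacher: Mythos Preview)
Your argument is correct and is exactly the derivation the paper intends: the corollary is stated with no proof immediately after the lemma on $\aB\circ\I_{012}$, and your reduction---$\{\zeta(3/2)^i\}=(r_i+F_i)/6$ with $r_i=\lfloor\xi(3/2)^i\rfloor\bmod 6$, then reading $r_i\in\{0,2,3\}$ off the left columns of the displayed tables---is precisely that intended step made explicit. One small remark: in your closing paragraph you refer to ``$0235$-admissibility of the integer columns'' as if it were a known property of $Z$-words, but in the paper $0235$-admissibility is an imposed hypothesis on the column straddling the radix, not a consequence of the $Z$-condition; the lemma on $\aB\circ\I_{012}$ is what produces $r_i\in\{0,2,3\}$ directly from the $012$-constraint on the first fractional column, without passing through $\I_{0235}$.
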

\kurgcorollary
\begin{corollary}[\kurgcorollarya]
Если не существует $\xi\in\Real\setminus \{0\}$ такого, что $\{\xi(3/2)^{i}|i\in\Zed^{+}\}\subseteq [0,1/6)\cup[1/3,2/3)$, то множество $Z$-чисел пусто.
\end{corollary}


\begin{figure}
\centering
\includegraphics[width=0.3\textwidth]{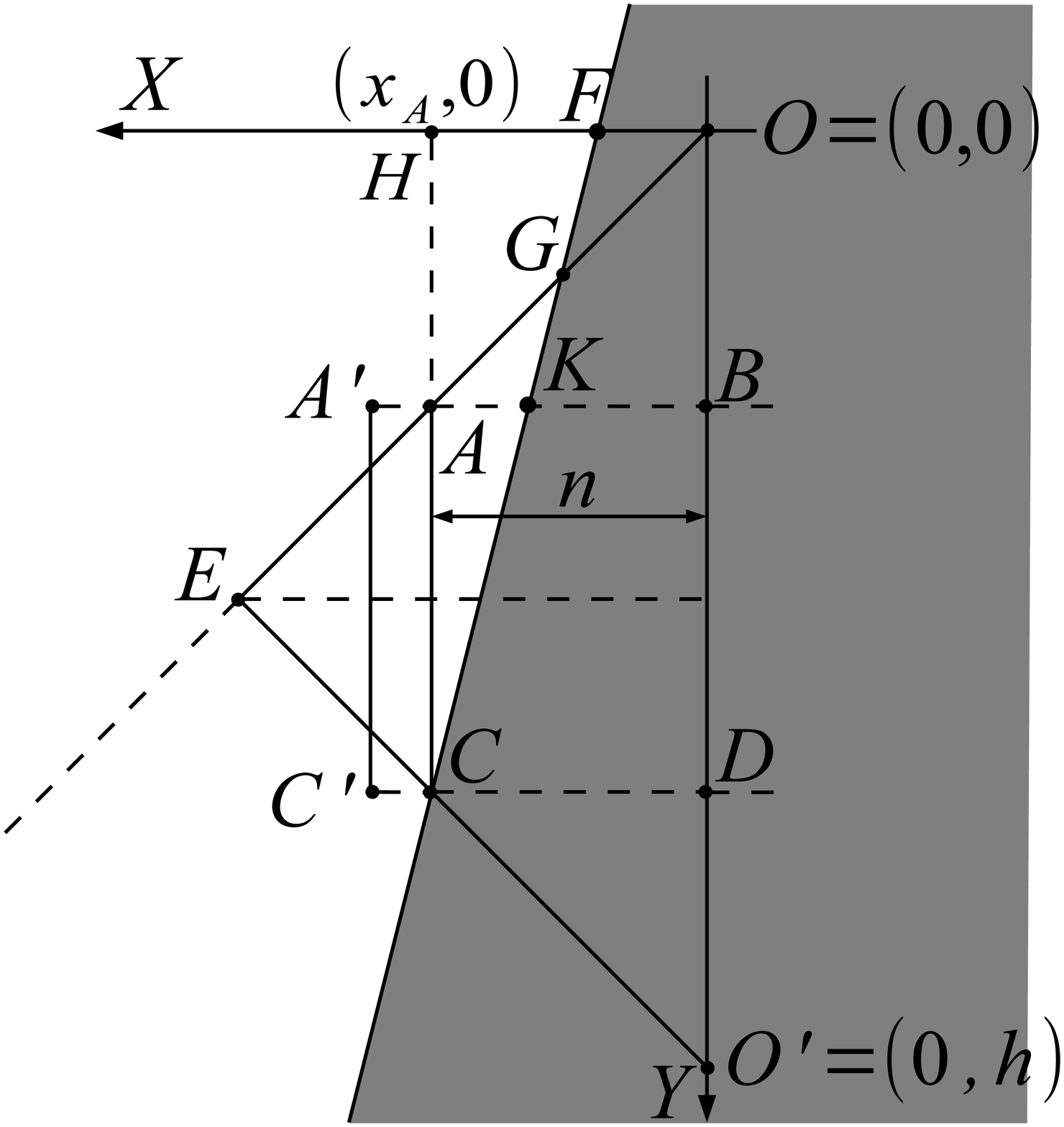}
\caption{Геометрическая иллюстрация необходимого условия $|AC|\le\theta(n)\le\eta(n)$ существования $Z$-чисел.}
\label{figure1.eps}
\end{figure}

Пусть $u:\Zed^2\rightarrow \Zed_6$  вещественное $Z$-слово  и $\xi=u[0,:]\in\Real$.
Слово $u$ изображено на рисунке~\ref{figure1.eps}.
Дискретная решётка системы координат $XY$ представлена в непрерывной форме. Отрезок $OF$ демонстрирует целую часть числа $\xi$. Луч по оси $X$ влево от $F$ соответствует цифрам $0$. 
Координата $x$ крайней левой отличной от $0$ цифры числа 
$
\left(\frac{3}{2}\right)^y\xi
$
равна 
$
\lfloor\log_6((\frac{3}{2})^{y}\xi)\rfloor \approx y\log_6\frac{3}{2}+\log_6\xi
$.
Прямая $x=y\log_6\frac{3}{2}+\log_6\xi$ показана как прямая $FC$. Не закрашенная часть рисунка выделяет фрагмент слова $u$, состоящий из $0$.
Выберем на оси $X$ левее точки $F$ произвольную точку $H=(x_A,0)$. Вертикальная прямая $HA$ пересекает прямую $FC$ в точке $C=(x_A,\frac{1}{\log_6\frac{3}{2}}x_A-\frac{\log_6\xi}{\log_6\frac{3}{2}})$. Прямая $OA$ задана уравнением $y=x$.  Прямая $O'C$ имеет вид $y=-x+c$, $c=\left(\frac{1}{\log_6\frac{3}{2}}+1\right)x_A-\frac{\log_6\xi}{\log_6\frac{3}{2}}$. 
Отрезок $OO'$ обозначает столбец $u[0:h,0]$.
$AC$ является крайним левым столбцом на его горизонтальном уровне, не состоящим из одних нулей.
Обозначим через $A'C'$ столбец, стоящий непосредственно рядом слева от столбца $AC$, то есть столбец $u[x_A:h-x_A+1,x_A+1]$.
Расстояние $|A'B|$ от $A'C'$ до $OO'$ равно $x_A+1$. Высота $|A'C'|$ столбца $A'C'$ равна 
$
|HC|-x_A+O(1)=
\left(\frac{1}{\log_6\frac{3}{2}}-1\right)x_A+O(1)
$.

Столбец $A'C'$ есть результат отображения $L$ столбца $AC$, расширенного сверху и снизу по одному символу. Но столбец из нулей также получается отображением $L$ столбца из нулей, значит здесь $L$ не является инъективным и ситуация на рисунке~\ref{figure1.eps} по следствию~\ref{cor:theta} возможна, если только $|A'C'|<\theta_n$, где $n=|A'B|$.
\kurgcorollary
\begin{corollary}[\kurgcorollarya]
Если
$
\theta(n) < \left(\frac{1}{\log_6\frac{3}{2}}-1\right)n+O(1) \approx 3.42n
$,
то множество $Z$-чисел пусто.
\end{corollary}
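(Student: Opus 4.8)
The statement is a direct corollary of the discussion preceding it; the plan is to read it off by showing that the existence of a $Z$-number would force $\theta(n)>\left(\frac{1}{\log_6\frac{3}{2}}-1\right)n+O(1)$ for all large $n$, which is exactly the negation of the hypothesis. So suppose a $Z$-number $\xi$ exists (one may take $\xi\neq 0$, as in the earlier corollaries). Associate to $\xi$ the representable $2$-dimensional $Z$-word $u\in\Univ$ --- the array whose $i$-th row is the base-$6$ expansion of $\xi\left(\frac{3}{2}\right)^i$ with all radix points aligned --- and place on it the coordinate system of Figure~\ref{figure1.eps}, choosing the point $H=(x_A,0)$ with $x_A$ arbitrarily large; put $n:=|A'B|=x_A+1$.

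Now I would combine the two facts already established in the text just above the corollary for this configuration. First, the geometric computation there gives
\[
|A'C'|\;=\;|HC|-x_A+O(1)\;=\;\left(\frac{1}{\log_6\frac{3}{2}}-1\right)n+O(1),
\]
where the additive term is an absolute constant: it records only the rounding in the positions of the leading digits and the placement of the lines $OA$, $FC$, $O'C$. Second, the necessary condition coming from Corollary~\ref{cor:theta} gives $|A'C'|<\theta_n$: the segment $A'C'$ is the image under the column-successor $L$ of the non-significant (all-zero) segment $AC$ of column $x_A$, it lies inside a window of the form $(\tB^{(n-1)\circ}\circ\I_{0235})^{m\bullet}$, and if its length reached $\theta_n$ then, reading its non-significant zeros into the automaton $D$, the column $w[:,n-1]$ would be forced (by the definition of $\theta_n$) to translate to the left, whereupon Corollary~\ref{cor:theta} would make the whole word $w$ --- hence $u$ --- translate to the left without bound, which is impossible since each row of $u$, being the base-$6$ expansion of a real number, is eventually zero to the left.

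Putting these together, the existence of a representable $Z$-word implies, for every sufficiently large $n$,
\[
\left(\frac{1}{\log_6\frac{3}{2}}-1\right)n+O(1)\;=\;|A'C'|\;<\;\theta_n,
\]
that is, $\theta(n)>\left(\frac{1}{\log_6\frac{3}{2}}-1\right)n+O(1)\approx 3.42n$ for all large $n$. Since every $Z$-number gives rise to a representable $Z$-word, a $Z$-number would contradict the hypothesis; hence the set of $Z$-numbers is empty.

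Nearly all of the work is already done --- in Corollary~\ref{cor:theta} and in the geometric picture of Figure~\ref{figure1.eps}. What still needs care is the accounting of the $O(1)$ terms: one must check that the additive constant in the length estimate for $|A'C'|$ is genuinely independent of $\xi$ and of $n$, that $x_A$ (hence $n$) can be taken as large as needed while keeping the configuration of Figure~\ref{figure1.eps} valid, and that the ``$+O(1)$'' written in the statement is understood with the sign for which the two affine bounds in the displays above actually cross --- this matching of constants is the only real obstacle.
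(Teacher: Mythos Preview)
Your overall strategy coincides with the paper's: the corollary is meant to be read off directly from the geometric picture preceding it together with Corollary~\ref{cor:theta}, via the contrapositive you set up. The length computation $|A'C'|=\bigl(\tfrac{1}{\log_6(3/2)}-1\bigr)n+O(1)$ with $n=|A'B|=x_A+1$ is exactly the one in the text, and the conclusion $\theta_n>|A'C'|$ whenever a $Z$-number exists is the content of the figure caption $|AC|\le\theta(n)$.

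Where your write-up goes wrong is the justification of the key inequality. Corollary~\ref{cor:theta} says nothing about anything ``translating to the left without bound''; it says that an admissible window in $(\tB^{(n-1)\circ}\circ\I_{0235})^{(\theta_n-1)\bullet}$ whose leftmost column $w[:,n-1]$ consists of zeros is \emph{entirely} zero. The application here is much simpler than what you describe: the segment of leading zeros in column $n-1$ of the $Z$-word has length $|AC|$ (equivalently $|A'C'|$ up to $O(1)$). If that length were at least $\theta_n$, one could excise a height-$\theta_n$ sub-window whose column $n-1$ is a run of zeros; Corollary~\ref{cor:theta} then forces all $n$ columns of that sub-window to vanish. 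But those $n$ columns cover the entire integer part of the rows involved (the leading digit of $\xi(3/2)^y$ sits at column $\le n-1$ throughout the range), so one would get $\lfloor\xi(3/2)^y\rfloor=0$ for several $y\ge 0$, impossible for $\xi\ge 1$. That is the whole argument.

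On the $O(1)$ bookkeeping you flag: the additive term in $|A'C'|$ does carry $\log_6\xi$, so it is not a priori uniform in $\xi$. The fix is to normalize first --- replace $\xi$ by $\xi(3/2)^k$ so that $1\le\xi<3/2$ --- after which the constant becomes absolute and $x_A$ (hence $n$) can be taken arbitrarily large while the picture of Figure~\ref{figure1.eps} remains valid. With this normalization and the corrected use of Corollary~\ref{cor:theta}, your contrapositive goes through and matches the paper.
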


\kurgtheorem\begin{theorem}[\kurgtheorema]\label{theorem:mainsufficient}
Множество $Z$-чисел пусто, если, и только если, в вертикально нетупиковых словах в $\aB^{+\omega\circ}\circ\I_{023}$ нет строк из языка $0^{+\omega}\circ\Zed_6^{*}$, где $\Zed_6^{*}$~-- конечные слова в алфавите $\Zed_{6}$.
\end{theorem}
\begin{proof}
Достаточность очевидна, если посмотреть на луч $OG$ на рисунке~\ref{figure1.eps}. Необходимость выводится из следствия~\ref{corollary:u0235w_in_Univ}.
\end{proof}
\kurgcorollary\begin{corollary}[\kurgcorollarya]\label{cor:anothersufficient}
Множество $Z$-чисел пусто, если биекция $\Bi$ (следствие~\ref{cor:1to1}) обладает свойством: если слово $w\in\Zed_6^{+\omega}$ начиная с некоторого момента является периодическим, то в слове $\Bi(w)$ все цифры $\{0,2,3,5\}$ распределены равномерно.
\end{corollary}
\begin{proof}
Пусть $\Bi$ обладает указанным свойством. Луч $OG$ на рисунке~\ref{figure1.eps} с некоторого момента состоит из одних $0$, то есть периодический. Значит в луче $OO'$ есть цифра $5$.
\end{proof}


\end{document}